\newtheorem{definition}{Definition}
\newtheorem{lemma}{Lemma}
\newtheorem{theorem}{Theorem}
\newtheorem{claim}{Claim}
\begin{document}

\title{A New Upper Bound for the VC-Dimension of Visibility Regions\footnote{An extended abstract of this paper appeared at SoCG '11 \cite{gk-nubvcvr-11}.}}

\author{Alexander Gilbers\\
Institute of Computer Science\\
          University of Bonn\\
           53113 Bonn, Germany \\
           gilbers@cs.uni-bonn.de
       \and Rolf Klein\\
       Institute of Computer Science\\
          University of Bonn\\
           53113 Bonn, Germany \\
           rolf.klein@uni-bonn.de
       }

\maketitle

\begin{abstract}
In this paper we are proving the following fact.
Let $P$ be an arbitrary simple polygon, and let $S$ be an arbitrary set of 15 points inside~$P$.
Then there exists a subset $T$ of $S$ that is not ``visually discernible'', that is, 
$T \not=\mbox{vis}(v) \cap S$ holds for the visibility regions $\mbox{vis}(v)$ of all points $v$ in $P$.
In other words, the VC-dimension $d$  of visibility regions in a simple polygon cannot exceed $14$.
Since Valtr~\cite{v-ggwps-98} proved in 1998 that $d \in [6,23]$ holds, 
no progress has been made on this bound.
By $\epsilon$-net theorems our reduction immediately implies a smaller upper bound to the number of guards
needed to cover $P$.
\end{abstract}

%\keywords{Computational Geometry, Visibility, Art Galleries, VC-Dimension, Simple Polygons}

%%%%%%%%%%%%%%%%%%%%%
\section{Introduction}  %
%%%%%%%%%%%%%%%%%%%%%

Visibility is among the central issues in computational geometry, see, e.~g., 
Asano et al.~\cite{ags-vip-00}, Ghosh~\cite{g-vap-07} and Urrutia~\cite{u-agip-00}.
Many problems involve visibility inside simple polygons,
among them the famous art gallery problem: Given a simple polygon~$P$, find a minimum set of 
guards %(stationary, $360^{circ}$ view) 
whose visibility regions together cover $P$; see O'Rourke~\cite{r-agta-87}. 

In this paper we study a visibility problem that is related to the art gallery problem, and interesting
in its own right. Given a simple polygon $P$ and a finite set $S$ of points in $P$, we call a subset $T$
of $S$ {\em  discernible} if there exists a point $v \in P$ such that $T=\mbox{vis}(v) \cap S$ holds.
In general, one cannot expect all subsets of a given point set in a given polygon to be discernible.
If all subsets of a given point set $S$ are discernible we say that $S$ is \textit{shattered}.

Let us call a number $m$ {\em realizable} if there exists a simple polygon $P$, and a set $S$ of $m$
points in $P$, such that all subsets of $S$ are discernible. If $m \geq 1$ is realizable, so is $m-1$.
The example in Figure~\ref{VierPunkte-fig} shows that~4 is realizable. 

\begin{figure}[hbtp]%
  \begin{center}%
    \includegraphics[scale=1.2,keepaspectratio]{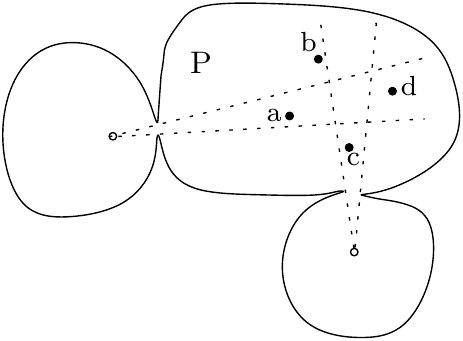}%
    \caption{All subsets consisting of elements that form a contiguous substring of $abcd$ can be discerned from the 
    lower cave of $P$, all others from the left. Hence, 4~is realizable.}%
    \label{VierPunkte-fig}
  \end{center}%
\end{figure}

The biggest realizable number $d$ is called
the {\em VC-dimension} of visibility regions in simple polygons. 
Valtr~\cite{v-ggwps-98} showed that $d \in [6,23]$, and these were the best bounds on $d$
known until today. 
In this paper we show that 15~is not realizable, which implies $d \in [6,14]$.

\begin{theorem}   \label{statement-theo}
For the VC-dimension $d$ of visibility  regions in simple polygons, $d \leq 14$ holds.
\end{theorem}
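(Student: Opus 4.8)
The plan is to argue by contradiction. Suppose some simple polygon $P$ contains a set $S$ of $15$ points all of whose subsets are discernible, so that for every $T \subseteq S$ we may fix a witness $v_T \in P$ with $\mathrm{vis}(v_T) \cap S = T$. The first step is to invoke the window structure of visibility regions: for a point $s \in S$ the complement $P \setminus \mathrm{vis}(s)$ is a disjoint union of pockets, and each pocket is separated from $\mathrm{vis}(s)$ by a \emph{window}, a maximal chord of $P$ that lies on a line through $s$ and through a reflex vertex of $P$. Crossing a window of $s$ is the only way a moving viewpoint can change its visibility status with respect to $s$, so if $v$ and $v'$ lie in the same face of the arrangement $\mathcal{A}$ of all windows of all points of $S$, then $\mathrm{vis}(v)\cap S = \mathrm{vis}(v')\cap S$. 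Consequently the $2^{15}$ witnesses occupy $2^{15}$ distinct faces of $\mathcal{A}$.

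Because $P$ may be arbitrarily complex, $\mathcal{A}$ can have arbitrarily many faces, so merely bounding the face count is useless; the heart of the argument must be a purely combinatorial obstruction of bounded size. The plan is to attach to each small tuple of points of $S$ (pairs first, then triples or $4$-tuples) a discrete \emph{visibility type} recording, along the relevant windows and around the relevant reflex vertices, the cyclic or linear order in which the pockets of the tuple's points nest along $\partial P$, and then to prove a forbidden-configuration lemma: no simple polygon can realize all the witnesses that shattering demands of a tuple of that type. The geometric leverage is that the windows of a single point $s$ form a pencil — their supporting lines are concurrent at $s$ — while the windows of different points of $S$ that run through a common reflex vertex are concurrent at that vertex; combining these two concurrency constraints with the forced nesting order of pockets over-determines the configuration once the tuple is large enough. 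Finally one chains a counting argument — pigeonhole, or a Ramsey-type argument coloring the $\binom{15}{k}$ $k$-tuples of $S$ by visibility type — so that $15$ points necessarily contain a forbidden sub-configuration, the value $15$ being exactly what is needed to force this while $14$ need not.

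I expect the geometric forbidden-configuration lemma to be the main obstacle. One must show that a specific small visibility pattern is unrealizable in \emph{every} simple polygon, not just in a generic one, which forces careful tracking of how the windows of distinct points of $S$ interact: in particular one must distinguish a window's supporting line from the window segment it actually contributes, and one must rule out the degenerate cases where points of $S$ lie on windows, where a reflex vertex is collinear with several points of $S$, or where two pockets merely touch. A secondary, quantitative difficulty is that a crude forbidden configuration yields only a weak bound, so the combinatorial bookkeeping has to be tight enough to push the constant all the way down to $15$, which will likely require exploiting several of these local concurrency and ordering constraints simultaneously rather than one at a time.
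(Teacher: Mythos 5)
Your proposal is a plan, not a proof: every load-bearing step is deferred. You correctly identify that the arrangement of windows (or of the regions $\mathrm{vis}(s)$, $s\in S$) can have unboundedly many faces, so that the argument must rest on a bounded-size combinatorial obstruction. But you never exhibit the forbidden configuration, never prove its unrealizability, and never carry out the counting that shows $15$ points must contain one. The sentence ``I expect the geometric forbidden-configuration lemma to be the main obstacle'' concedes exactly the point: that lemma \emph{is} the theorem, and without it nothing has been shown. In particular, nothing in your sketch explains why the threshold is $15$ rather than $50$ or $500$; a pencil-of-windows concurrency argument, even if completed, gives no a priori reason to land on that constant.

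For comparison, the paper's mechanism is quite different from the window/concurrency route you gesture at. Its basic geometric tool is the \emph{encircling} argument of Kalai and Matou\v sek: if a closed chain of visibility segments (each contained in $P$) surrounds a segment $\overline{va}$, then the boundary of $P$ cannot cross $\overline{va}$, so $v$ sees $a$. This is used to ``sacrifice'' a few points of $S$: restricted to the viewpoints that see two designated points $b_1,b_2$, the region $\mathrm{vis}(a)$ coincides with a \emph{wedge}. The quantitative engine is then a theorem of Isler et al.\ that an arrangement of six wedges realizes at most $61$ of the $64$ subsets, hence six wedges cannot be shattered. The count $15$ arises from splitting $S$ by its convex hull: if six or more points lie strictly inside $\mathrm{ch}(S)$, one wedge per interior point (after sacrificing the hull vertices) already gives a contradiction; hence at least $10$ points lie on the hull, and a separate, case-heavy argument (front/back points relative to a viewpoint $v_E$, the sets $L$ and $R$, reductions to six wedges or to three half-planes) shows that $10$ points in convex position cannot be shattered. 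If you want to salvage your approach, the honest next step is to state and prove one concrete unrealizable visibility pattern and then do the arithmetic; as written, the proposal establishes nothing beyond the (correct) framing of the problem.
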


The classic $\epsilon$-net theorem implies that $O(d\cdot r \log r)$ many stationary guards
with $360^\circ$ degree view are sufficient to cover $P$, provided that each point in $P$ sees
at least an $1/r$th part of $P$'s area. For sufficiently large $r$ the constant hidden in $O$ is very close to~1; 
see Kalai and Matou\v sek~\cite{km-ggwep-97} and Koml{\'o}s et al.~\cite{kpw-atben-92}. Decreasing the upper bound on the VC-dimension $d$ directly leads to more interesting 
upper bounds on the number of guards. 
For a textbook treatment of VC-dimension we refer the reader to Matou\v sek~\cite{m-ldg-02}.

%%%%%%%%%%%%%%%%%%%%%%%%%%%%%%%%%%%%%%%%%
\section{Related Work}         \label{related-sec}%
%%%%%%%%%%%%%%%%%%%%%%%%%%%%%%%%%%%%%%%%%

The VC-dimension of range spaces of visibility regions was first considered by Kalai and Matou\v sek \cite{km-ggwep-97}.
They showed that the VC-dimension of visibility regions of a simply connected gallery (i.e. a compact set in the plane) is finite.
In their proof they start with assuming that a large set (of size about $10^{12}$) of points $A$ inside a gallery is shattered by the visibility regions of the points of a set $B$. They then derive a configuration as in Figure \ref{KalaiMatousekProof-fig}. Here, points $a$ and $b$ should not see each other but the segment $\overline{ab}$ is encircled by visibility segments, a contradiction. This kind of argument plays an important role in our proof of the new bound.
\begin{figure}[hbtp]%
  \begin{center}%
    \includegraphics[scale=0.7,keepaspectratio]{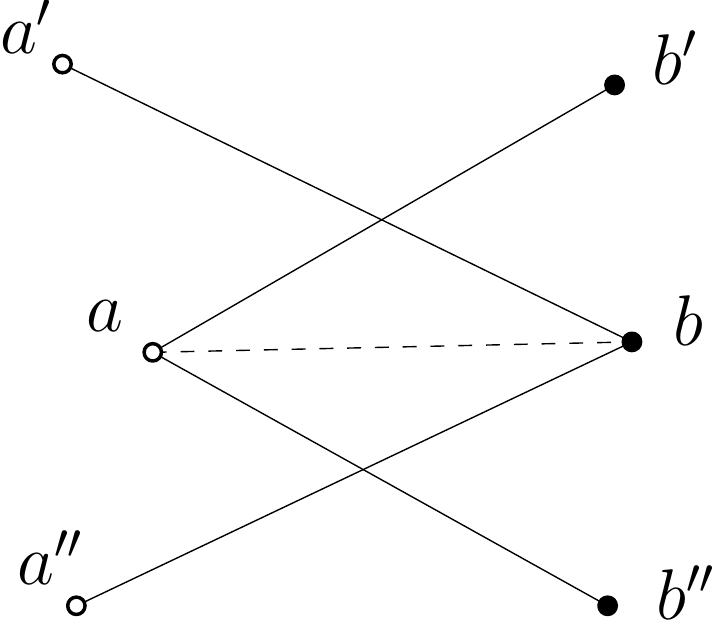}%
    \caption{Segment $\overline{ab}$ is encircled by visibility segments.}%
    \label{KalaiMatousekProof-fig}
  \end{center}%
\end{figure}
They also gave an example of a gallery with VC-dimension~$5$. Furthermore, they showed that there is no constant that bounds the VC-dimension
if the gallery has got holes.  
For simple polygons, Valtr~\cite{v-ggwps-98} gave an example of a gallery with VC-dimension~$6$ and proved an upper bound of $23$ by subdividing the gallery into
cells and bounding the number of subsets that can be seen from within one cell. In the same paper he showed an upper bound for the VC-dimension of a 
gallery with holes of $O(\log h)$ where $h$ is the number of holes.

Since then there has been no progress on these general bounds. However, some variations of the original problem have been considered.
Isler et al. \cite{ikdv-vcdev-04} examined the case of exterior visibility. In this setting the points of~$S$ lie on the boundary of a polygon~$P$
and the ranges are sets of the form $vis(v)$ where $v$ is a point outside the convex hull of $P$. They showed that the VC-dimension is $5$.
The result can also be seen as a statement about wedges, as we will see later.
They also considered a more restricted version of exterior visibility where the view points $v$ all must lie on a circle around $P$, 
with VC-dimension $2$. For a $3$-dimensional version of exterior visibility with $S$ on the boundary of a polyhedron $Q$ they found that the
VC-dimension is in $O(\log n)$ as $n$ is the number of vertices of $Q$.
King \cite{k-vcdvt-08} examined the VC-dimension of visibility regions on polygonal terrains. 
For 1.5-dimensional terrains he proved that the VC-dimension equals $4$ and on 2.5-dimensional terrains there is no constant bound.
In \cite{gk-nrvsp-09} we considered the original setting and showed upper bounds of 13 for the number of points on the boundary and 15 for the number of points in convex position that can be shattered by interior visibility regions.

Without using the $\varepsilon$-net theorem, Kirkpatrick~\cite{k-ggnn-00} obtained a $64\cdot r \log \log r$ upper
bound to the number of {\em boundary} guards needed to cover the {\em boundary} of $P$. This raises the question 
if the factor $\log r$ in the $O(d\cdot r \log r)$ bound for $\epsilon$-nets in other geometric range spaces 
can be lowered to $\log \log r$ as well, as was shown to be true by Aronov et al.~\cite{bes-ssena-10} for 
special cases; see also King and Kirkpatrick~\cite{kk-iagsgp-10}.

\bigskip

%%%%%%%%%%%%%%%%%%%%%%%%%%%%%%%%%%%%%%%%%
\section{Proof Technique}         \label{technique-sec}%
%%%%%%%%%%%%%%%%%%%%%%%%%%%%%%%%%%%%%%%%%

Theorem~\ref{statement-theo} will be proven by contradiction. 
Throughout Sections~\ref{technique-sec} and~\ref{inner-sec}, we shall assume that there exists a simple
polygon $P$ containing a set $S$ of 15~points each of whose subsets is discernible.
That is, for each $T \subseteq S$ there is a view point $v_T$ in $P$ such that 
\begin{eqnarray}                            \label{prop1}
     T =   \mbox{vis}(v_T) \cap S 
\end{eqnarray}
holds, where, as usual,
   $  \mbox{vis}(v) = \{ x \in P;  \ \overline{xv} \subset P\} $
denotes the visibility domain of a point $v$ in the (closed) set $P$. 

We may assume that the points in $S$ and the view points $v_T$ are in general position,
by the following argument. If $p \notin T$,
then segment $\overline{v_Tp}$ is properly crossed by the boundary of $P$, that is, the segment and the complement
of $P$ have an interior point in common. On the other hand,
a visibility segment $\overline{v_Uq}$, where $q \in U$, can be touched by the boundary of $P$, because
this does just not block visibility. By finitely many, arbitrarily small local enlargements of $P$ we can remove
these touching boundary parts from the visibility segments without losing any proper crossing of a non-visibility segment.
Afterwards, all points and view points can be perturbed in small disks.

Property~\ref{prop1} can be rewritten as
\begin{eqnarray}                            \label{prop2}
        T=\{ p \in S; \ v_T \in \mbox{vis}(p)\}
\end{eqnarray}
This means, if we form the arrangement $Z$ of all visibility regions $\mbox{vis}(p)$, where $p \in S$,
then for each $T \subseteq S$ there is a cell (containing the view point $v_T$) which is contained
in exactly the visibility regions of the points in $T$.
To obtain a contradiction, one would like to argue that the number of cells in 
arrangement $Z$ is smaller than $2^{15}$, the number of subsets of $S$. But as we do not have
an upper bound on the number of vertices of $P$,  the complexity of $Z$ cannot be bounded from above.

For this reason we shall replace complex visibility regions with simple {\em wedges};
for wedge arrangements, a good upper complexity bound exists; see Theorem~\ref{isler-theo} below.
To illustrate this technique, let $a$ be a point of $S$. We assume that there are 
\begin{enumerate}
\item  points $b_1, b_2$ of $S$,
\item  a view point $v$ that sees $b_1$ and $b_2$, but not $a$, such that
\item  $a$ is contained in the  triangle defined by $\{v,b_1,b_2\}$;
\end{enumerate}
see Figure~\ref{wedge-fig}~(i). We denote by $U$ the wedge containing $v$ formed 
by the lines through $a$ and $b_1$ and $b_2$, respectively.
\begin{figure}[hbtp]%
  \begin{center}%
    \includegraphics[width=\textwidth,keepaspectratio]{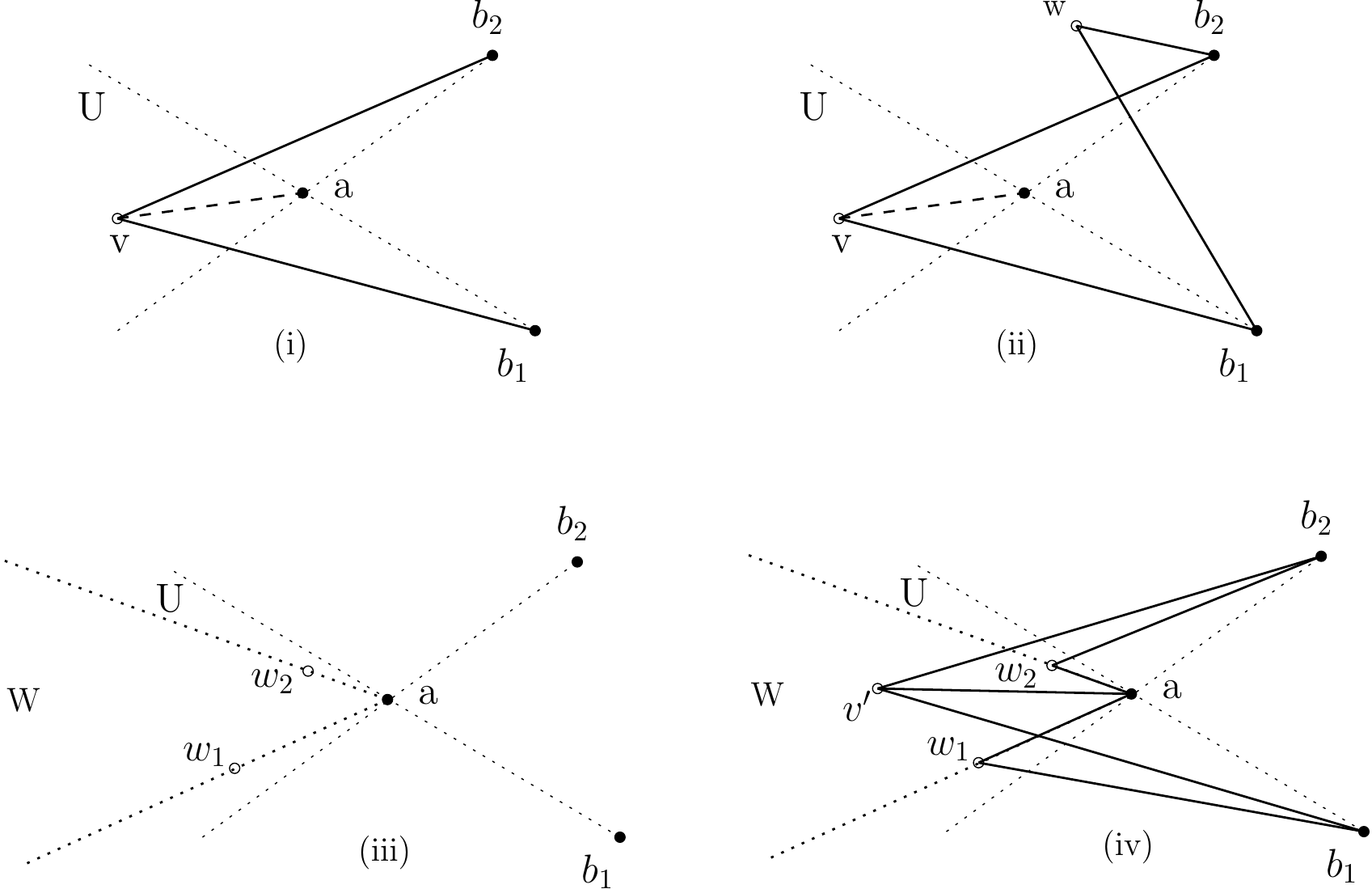}%
    \caption{Solid lines connect points that are mutually visible; such ``visibility segments''  
    are contained in polygon $P$. Dashed style indicates that the line of vision
    is blocked; these segments are crossed by the boundary of $P$.}%
    \label{wedge-fig}
  \end{center}%
\end{figure}
Any view point $w$ that sees $b_1$ and $b_2$ must be contained in wedge $U$. Otherwise, the chain of
visibility segments $v-b_1-w-b_2-v$ would encircle the line segment $\overline{va}$ connecting $v$ and $a$, 
preventing the boundary of $P$ from blocking the view from $v$ to $a$; see Figure~\ref{wedge-fig}~(ii). 

Let $w_1, w_2$ denote the outermost view points in $U$ that see $a, b_1, b_2$ and include a maximum angle
(by assumption, such view points exist; by the previous reasoning, they lie in $U$).
Then $w_1, w_2$ define a sub-wedge $W$ of $U$, as shown in Figure~\ref{wedge-fig}~(iii). 
We claim that in this situation
\begin{eqnarray}        \label{wedge-fact}
V_{\{b_1,b_2\}} \cap \mbox{vis}(a) = V_{\{b_1,b_2\}} \cap W
\end{eqnarray}
holds, where $V_{\{b_1,b_2\}}$ denotes the set of all view points that see at least $b_1$ and $b_2$.
Indeed, each view point that sees $b_1, b_2$ lies in $U$. If it sees $a$, too, it must lie in $W$,
by definition of $W$. Conversely, let $v'$ be a view point in $W$ that sees $b_1, b_2$. Then line segment
$\overline{v'a}$ is encircled by the visibility segments $v'-b_1-w_1-a-w_2-b_2-v'$,
as depicted in Figure~\ref{wedge-fig}~(iv). Thus, $v' \in \mbox{vis}(a)$.

Fact~\ref{wedge-fact} can be interpreted in the following way. We ``sacrifice'' two of the 15 points of $S$,
namely $b_1$ and $b_2$, and restrict ourselves to studying only those $2^{13}$ view points $V_{\{b_1,b_2\}}$ 
that see both $b_1, b_2$.  As a benefit, the visibility region $\mbox{vis}(a)$ behaves like a wedge
when restricted to $V_{\{b_1,b_2\}}$.

This technique will be applied as follows.
In Section~\ref{inner-sec} we prove, as a direct consequence, that at most~5 points can be situated {\em inside} the 
convex hull of $S$. Then, in Section~\ref{outer-sec}, we show that at most 9~points can be located {\em on} the
convex hull. %To this end, wedges will be further simplified to halfplanes.
Together, these claims imply Theorem~\ref{statement-theo}.

\bigskip

%%%%%%%%%%%%%%%%%%%%%%%%%%%%%%%%%%%%%%%%%%%%
\section{Interior points}         \label{inner-sec}%
%%%%%%%%%%%%%%%%%%%%%%%%%%%%%%%%%%%%%%%%%%%%

The goal of this section is in proving the following fact.

\begin{lemma}            \label{inner-lem}
At most five points of $S$ can lie inside the convex hull of $S$.
\end{lemma}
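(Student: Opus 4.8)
\medskip
\noindent\textbf{Proof plan.}\quad
The plan is to argue by contradiction. Suppose six points $a_1,\dots,a_6$ of $S$ lie in the interior of the convex hull of $S$ (if more than six do, restrict attention to six of them). I will replace each visibility region $\mbox{vis}(a_i)$ by a wedge $W_i$ that coincides with it on a large common set of view points, and then show that these six wedges are forced to realize all $2^{6}$ possible containment patterns, contradicting the combinatorial bound for arrangements of wedges (Theorem~\ref{isler-theo}).

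First I would bring each interior point into the form required by the wedge technique of Section~\ref{technique-sec}. Fix $a=a_i$. Since the convex hull of $S$ is spanned by the vertices of $S$ on its boundary, Carath\'eodory's theorem provides three such hull vertices $p,q,r$ with $a\in\triangle pqr$, and by general position we may take $a$ in the interior of $\triangle pqr$. Let $v:=v_{\{p,q,r\}}$ be the view point that sees exactly $p,q,r$ among the points of $S$; then $v$ sees $p,q,r$ but not $a$. The ray emanating from $v$ through $a$ leaves $\triangle pqr$ through the relative interior of some edge $\overline{b_1b_2}$ (with $b_1,b_2\in\{p,q,r\}$), and $a$ lies strictly between $v$ and that exit point; consequently $a$ lies on the segment joining $v$ to a point of $\overline{b_1b_2}$, so $a\in\triangle vb_1b_2$. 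Thus $v,b_1,b_2,a$ satisfy the hypotheses stated in Section~\ref{technique-sec}, and Fact~\ref{wedge-fact} yields a wedge $W_i$ with
\[
V_{\{b_1,b_2\}}\cap\mbox{vis}(a_i)=V_{\{b_1,b_2\}}\cap W_i .
\]
Every pair $b_1,b_2$ arising in this way consists of hull vertices of $S$, hence is disjoint from $\{a_1,\dots,a_6\}$; write $b_1^{(i)},b_2^{(i)}$ for the pair obtained from $a_i$.

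Next I would merge the six reductions into one picture. Put $B^{\ast}:=\bigcup_{i=1}^{6}\{b_1^{(i)},b_2^{(i)}\}$, a set of hull vertices disjoint from $\{a_1,\dots,a_6\}$, and restrict attention to the view points in $V_{B^{\ast}}$ that see all of $B^{\ast}$. For a subset $T\subseteq\{a_1,\dots,a_6\}$ the view point $v_{T\cup B^{\ast}}$ sees exactly $T\cup B^{\ast}$ among the points of $S$, so it lies in $V_{B^{\ast}}\subseteq V_{\{b_1^{(i)},b_2^{(i)}\}}$ for every $i$; therefore $v_{T\cup B^{\ast}}\in\mbox{vis}(a_i)$ if and only if $v_{T\cup B^{\ast}}\in W_i$. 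Since $v_{T\cup B^{\ast}}\in\mbox{vis}(a_i)$ is equivalent to $a_i\in\mbox{vis}(v_{T\cup B^{\ast}})\cap S=T\cup B^{\ast}$, \ie\ to $a_i\in T$, we obtain $v_{T\cup B^{\ast}}\in W_i\iff a_i\in T$. Hence, for every $T\subseteq\{a_1,\dots,a_6\}$, some point lies in precisely the wedges $W_i$ with $a_i\in T$; equivalently, all $2^{6}$ containment patterns occur among $W_1,\dots,W_6$, which contradicts Theorem~\ref{isler-theo}. This proves the lemma.

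I expect the main obstacle to be the last step rather than the construction: the Carath\'eodory step and the ``segment from $v$'' observation are routine, as are the small existence facts inherited from the wedge technique (that view points seeing $a_i,b_1^{(i)},b_2^{(i)}$ together exist, and that an outermost, angle-maximizing such pair is attained), all of which follow from the standing assumption that every subset of $S$ is discernible. What actually pins the bound down to five is that the combinatorial estimate for wedge arrangements must be strong enough to forbid six wedges from realizing all $2^{6}$ patterns --- something the crude bound ``$k$ wedges are cut out by $2k$ lines'' does not give, since it only rules out seven or more wedges. So the genuine work is already contained in the sharper bound for wedges recorded in Theorem~\ref{isler-theo}.
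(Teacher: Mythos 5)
Your proof is correct and follows essentially the same route as the paper: reduce each of six interior points to a wedge via Fact~\ref{wedge-fact} by sacrificing two hull vertices, then invoke Theorem~\ref{isler-theo} on the resulting six-wedge arrangement. The only (harmless) difference is in how the triangle is produced --- the paper takes the single view point $v_B$ seeing exactly the hull vertices and uses the fan triangulation $\{v_B,b_j,b_{j+1}\}$ of $\mbox{ch}(S)$, whereas you use Carath\'eodory plus a separate view point $v_{\{p,q,r\}}$ and a ray argument for each interior point; both yield the hypotheses 1.--3.\ of Section~\ref{technique-sec}, and your restriction to $V_{B^{\ast}}$ plays the same role as the paper's restriction to $V_B$.
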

\begin{proof}
Suppose there are at least six interior points $a_i$ in the convex hull, $1 \leq i \leq k$.
Each of the remaining points of $S$ is a vertex of the convex hull of $S$.
Let $b_0, \ldots b_{m-1}$ an enumeration of these points in cyclic order. Let $v_B$ (where $B=\{b_0, \ldots, b_{m-1}\}$)
be the view point that sees only these vertices but no interior point; see Figure~\ref{inner-fig}.
\begin{figure}[hbtp]%
  \begin{center}%
    \includegraphics[scale=0.9,keepaspectratio]{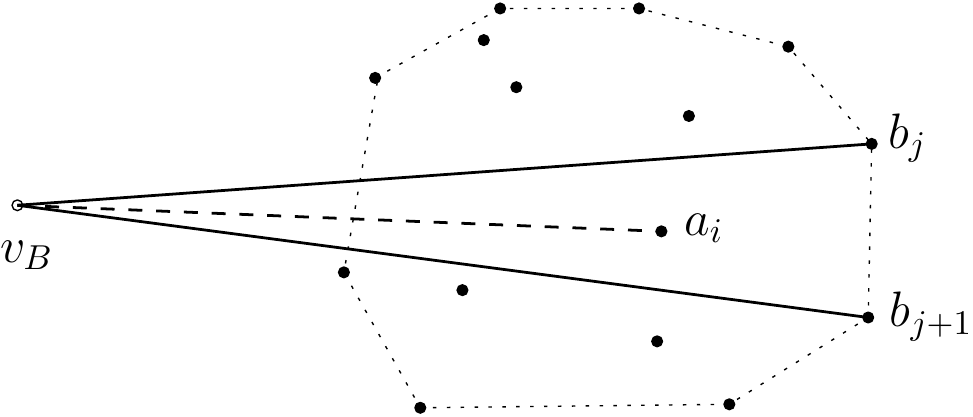}%
    \caption{Each interior point $a_i$ is contained in some triangle defined by \protect{$\{v_B,b_j,b_{j+1}\}$}.}%
    \label{inner-fig}
  \end{center}%
\end{figure}
Each interior point $a_i$ is contained in a triangle defined by $\{v_B,b_j,b_{j+1}\}$, for some~$j$ 
(where the indices are taken modulo $m$). 
Since properties~{1.--3.} mentioned in Section~\ref{technique-sec} are fulfilled, Fact~\ref{wedge-fact}
implies that there exists a wedge $W_i$ such that 
$V_{\{b_j,b_{j+1}\}} \cap \mbox{vis}(a_i) = V_{\{b_j,b_{j+1}\}} \cap W_i$ holds. 
If $V_B$ denotes the set of view points that see at least the points of $B$, 
 we obtain
\[
     V_B \cap \mbox{vis}(a_i) = V_B \cap W_i  \/ \mbox{ for } i=1, \ldots, 6,
\]
 which implies the following. For each subset $T$ of $A=\{a_1, \ldots, a_6\}$ the view point 
 $v_{T\cup B}$ lies in exactly those wedges $W_i$ where $a_i \in T$. But the arrangement
 of six or more wedges does not contain that many combinatorially different cells, as 
 an argument by Isler et al.~\cite{ikdv-vcdev-04} shows; see Theorem~\ref{isler-theo}. 
Thus, the convex hull of $S$ cannot contain six interior points.
\end{proof}

Therefore, at least~10 points of $S$ must be vertices of the convex hull of~$S$.

\begin{theorem}         \label{isler-theo}
(Isler et al.) For any arrangement of six or more wedges, there is a subset $T$ of wedges for which
no cell exists that is contained in exactly the wedges of $T$.
\end{theorem}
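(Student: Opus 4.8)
The plan is to prove the statement about wedge arrangements by thinking of each wedge as the intersection of two halfplanes, and then analyzing the combinatorial cell structure via the bounding lines. First I would fix an arrangement of $n \geq 6$ wedges $W_1, \ldots, W_n$, each $W_i$ being the intersection of two halfplanes bounded by lines $\ell_i$ and $m_i$; for a target subset $T \subseteq \{1,\ldots,n\}$ to be realized, we need a cell of the full line arrangement $\{\ell_i, m_i : i\}$ that lies inside $W_i$ exactly when $i \in T$. The key observation is that a wedge is a convex set, so $\bigcap_{i \in T} W_i$ is convex, while the complement $\bigcup_{i \notin T} W_i$ must cover the rest; one shows that not all $2^n$ subsets can arise because convexity forces strong restrictions on which combinations of ``inside/outside'' statuses are geometrically consistent.

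The cleanest route, following Isler et al., is to reduce to a statement about \emph{one} bounding line per wedge in a limiting or symmetry sense, or alternatively to use a direct dual/charging argument: consider the apexes of the wedges and the cyclic order in which their two bounding rays appear around the arrangement. I would argue that the ``all-in'' cell $\bigcap_i W_i$ (if nonempty) and cells obtained by crossing a single bounding line organize the realizable subsets into a structure of VC-dimension at most $5$ — concretely, one exhibits for $n = 6$ a specific subset $T$ (for instance an ``alternating'' or ``parity'' pattern with respect to the circular order of the twelve bounding rays) for which no consistent cell exists, because realizing it would require a point simultaneously on the inner side of three wedges whose apexes and orientations make their triple intersection avoid a fourth region it is forced to meet. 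Then for $n > 6$ one restricts attention to any $6$ of the wedges and inherits the obstruction, since a cell realizing $T$ among all $n$ wedges would, after intersecting with the subarrangement, realize $T \cap \{$those $6\}$ among the $6$.

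The main obstacle I anticipate is making the case $n = 6$ fully rigorous: wedges are more flexible than halfplanes, so one cannot directly quote the VC-dimension $3$ of halfplanes, and the apex positions and opening angles introduce many sub-cases. I would handle this by a careful case analysis on the combinatorial type of the arrangement of the six apexes and the slopes of the twelve bounding lines, reducing each type to a contradiction of the ``encircling'' form already used earlier in the paper (a point claimed to be both inside a convex region and separated from it by the boundary of that region). A secondary, more technical point is degeneracies — parallel bounding lines, coincident apexes, empty wedges — which I would dispose of first by a perturbation argument analogous to the general-position reduction in Section~\ref{technique-sec}, noting that degenerate arrangements realize no more subsets than nearby generic ones. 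Since the excerpt attributes this theorem to Isler et al.~\cite{ikdv-vcdev-04}, in the write-up I would either reproduce their argument in this normalized wedge language or cite it directly, depending on how self-contained the paper aims to be.
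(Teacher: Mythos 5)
There is a genuine gap: your proposal is a plan whose central step is never carried out, and it misses the counting mechanism that actually makes the theorem true. The paper's proof is a global cell-counting argument. By Euler's formula an arrangement of $n$ wedges has $n+k+1$ cells, where $k$ is the number of intersection points of the bounding half-lines; since two wedges meet in at most four points, $k\le 4\binom{n}{2}$, so six wedges yield at most $67$ cells. Note that $67>2^6=64$, so the naive count does \emph{not} suffice; the real work is an accounting (charging) argument showing that each of the six wedges forces either one missing cell (a pair of wedges with fewer than four intersections) or one redundant cell (e.g.\ the apex cell and the cell at infinity of a wedge crossed by all others stab the same singleton), bringing the count of distinct stabbed subsets down to $67-6=61<64$. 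Your proposal contains none of this: the observation that $\bigcap_{i\in T}W_i$ is convex does not by itself restrict which subsets are realizable, and your proposed route --- exhibiting a specific ``alternating'' subset $T$ that fails in every configuration of six apexes and twelve bounding rays --- is asserted but not argued. You give no reason why such a pattern must be unrealizable for \emph{all} placements, and the ``careful case analysis on the combinatorial type of the arrangement of the six apexes and the slopes of the twelve bounding lines'' that you defer to is precisely the hard part; without it there is no proof. (Deferring entirely to the citation of Isler et al.\ would be legitimate bibliographically, but it is not a proof either.)

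Two smaller points. Your reduction from $n>6$ to $n=6$ is correct and matches what is implicitly used: a cell of the full arrangement realizing $T$ lies in a cell of any subarrangement realizing the restriction of $T$, so it suffices to treat six wedges. On the other hand, be careful with the suggestion to ``directly quote the VC-dimension $3$ of halfplanes'' in any limiting sense: a wedge is an intersection of two halfplanes, and intersections of two ranges from a class of VC-dimension $3$ can have VC-dimension up to roughly twice that, which is consistent with wedges shattering five points (Isler et al.\ show the bound $6$ is tight in the sense that five wedges can realize all $32$ subsets is false --- rather, five points can be shattered); so no soft argument of that kind can replace the explicit count.
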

For convenience, we include a short proof based on the ideas in~\cite{ikdv-vcdev-04}.
\begin{proof}
By Euler's formula, an arrangement of $n$ wedges has $n+k+1$ many cells, where $k$
denotes the number of half-line intersections. Since two wedges intersect in at most 4~points---
in which case they are said to {\em cross} each other---we have
$k \leq 4 {n \choose {2}}$. Thus, an arrangement of $6$ wedges has at most $67$ cells.
We are going to provide an accounting argument which shows that for each wedge
one cell is missing from a maximum size arrangement (due to a shortage of intersections), or
one of the existing cells is redundant (because it stabs the same subset of wedges as some
other cell does). This will imply that at most $67-6 =61$ many of all $2^6=64$ different
subsets can be stabbed by a cell, thus proving the theorem.

Let $W$ be a wedge that is crossed by all other wedges, as shown in Figure~\ref{isler-fig}~(i).
Since the two shaded cells at the apex of $W$ and at infinity are both stabbing the subset $\{W\}$,
we can write off one cell of the arrangement as redundant, and exclude $W$ from further consideration.

The remaining $m$ wedges are used as the vertices of a graph $G$. Two vertices are connected
by an edge if their wedges do not cross. For each edge of $G$ there is one cell less in the arrangement,
as (at least) one of four possible intersection points is missing.
By construction, each vertex of $G$ has degree at least~1. Suppose that vertex $W$
is of degree~1, and let $W'$ denote the adjacent vertex in $G$.
If $W$ and $W'$ have at most two of four possible intersections,
even two cells are missing from the arrangement. If $W$ and $W'$ intersect in three points,
there is a redundant cell in $W$, in addition to the missing one; see Figure~\ref{isler-fig} (ii).
In either case, we may double the edge connecting $W$ and $W'$, as we obtain two savings
from this pair. In the resulting graph $H$ each vertex is of degree at least two. Thus,
$H$ contains at least $m$ edges, each of which represents a cell that is missing or redundant.
\begin{figure}[hbtp]%
 \begin{center}%
   \includegraphics[scale=0.9,keepaspectratio]{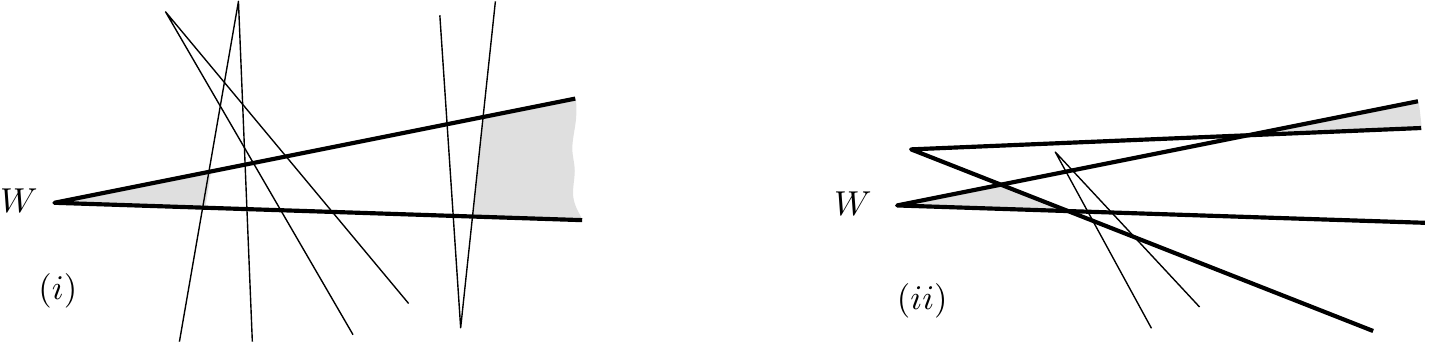}%
   \caption{In (i) and (ii), respectively, the shaded cells are contained in wedge $W$ only.}%
   \label{isler-fig}
 \end{center}%
\end{figure}
\end{proof}

\bigskip
%%%%%%%%%%%%%%%%%%%%%%%%%%%%%%%%%%%%%%%%%%%%%%%%%%%%%%%
\section{Points on the boundary of the convex hull}         \label{outer-sec}%
%%%%%%%%%%%%%%%%%%%%%%%%%%%%%%%%%%%%%%%%%%%%%%%%%%%%%%%
Ignoring interior points, we prove, in this section, the following fact.

\begin{lemma}            \label{outer-lem}
Let $S$ be a set of~10 points in convex position inside a simple polygon, $P$. 
Then not all of the subsets of $S$ are discernible.
\end{lemma}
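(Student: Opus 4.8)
The plan is to argue by contradiction, assuming $S=\{p_0,\dots,p_9\}$ is in convex position (labeled in cyclic order) inside a simple polygon $P$ with all $2^{10}$ subsets discernible. The first step is to convert visibility regions of the $p_i$ into wedges, just as in Section~\ref{technique-sec}, but now exploiting the convex position of $S$. For each $p_i$, I would look for two other points $b,b'$ of $S$ and a view point $v\in V_{\{b,b'\}}$ not seeing $p_i$ with $p_i$ inside triangle $\{v,b,b'\}$; Fact~\ref{wedge-fact} then replaces $\mbox{vis}(p_i)$ by a wedge $W_i$ on the restricted view-point set. The key observation is that when the points are in convex position, a view point $v_B$ seeing a large ``arc'' of consecutive hull vertices but not $p_i$ automatically puts $p_i$ in a triangle spanned by $v_B$ and two of the seen vertices, so many points can be wedge-ified simultaneously at the cost of sacrificing only a bounded number of $b_j$'s (those used as the spanning pair / the ones required to be visible).

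Next I would do the counting. After sacrificing some constant number $s$ of points to fix a view-point family $V$, the remaining $10-s$ points each behave like a wedge on $V$, so all $2^{10-s}$ subsets of them must be stabbed by cells of an arrangement of $10-s$ wedges. By Theorem~\ref{isler-theo} (and its Euler-formula cell count $n+k+1$ with $k\le 4\binom n2$), an arrangement of $n$ wedges stabs at most $n+4\binom n2 + 1 - n = 4\binom n2 +1$ subsets, and more carefully at most $4\binom n2 + 1 - n$ once the accounting savings are included. For this to beat $2^{10-s}$ one needs $10-s$ small, so the crux is to show $s$ can be taken small enough — roughly $s\le 3$ or $s\le 4$ — while still wedge-ifying all the rest. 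I expect the real work to be a careful case analysis showing that three or four well-chosen sacrificial vertices suffice: one picks the view point $v_{\varnothing}$ (or $v_B$ for $B$ a small set) that sees almost none of $S$, observes which $p_i$ it fails to see, and shows each such $p_i$ is captured in a triangle using the few vertices it does see, iterating if necessary. Handling points that lie ``too close to the hull boundary'' to be enclosed in such a triangle — for which one may instead use a different view point, or argue directly that their visibility region already behaves simply on $V$ — is where the argument is most delicate.

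The main obstacle, then, is not the final arithmetic (Isler's bound does the heavy lifting once everything is a wedge) but the geometric bookkeeping: guaranteeing that a bounded, explicitly small set of sacrificed points simultaneously turns \emph{all} remaining visibility regions into wedges on the common restricted view-point set $V$. The convex-position hypothesis is exactly what should make this possible, since consecutive-vertex ``caves'' give a clean supply of triangles $\{v,b_j,b_{j+1}\}$ containing any unseen point; but making sure a single $V$ works for every $p_i$ at once, rather than a different sacrifice per point, is the technical heart of the proof. A secondary subtlety is ensuring the outermost maximizing view points $w_1,w_2$ in Fact~\ref{wedge-fact} genuinely exist under the restriction to convex-position configurations, which the general-position perturbation from Section~\ref{technique-sec} should settle.
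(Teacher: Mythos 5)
Your high-level strategy is the paper's: sacrifice a constant number of points of $S$ to restrict the family of view points, show that on this restricted family every remaining visibility region coincides with a wedge, and then invoke Theorem~\ref{isler-theo} (six wedges stab at most $61<2^6$ cells, so sacrificing four of the ten points suffices arithmetically). But the proposal stops exactly where the proof begins, and the one concrete mechanism you offer for the hard step is incorrect. You claim that for points in convex position, a view point $v_B$ seeing a large arc of consecutive hull vertices but not $p_i$ ``automatically puts $p_i$ in a triangle spanned by $v_B$ and two of the seen vertices.'' This fails precisely because $p_i$ is itself a hull vertex: $p_i$ lies in the triangle $\{v,b_j,b_{j+1}\}$ only if $b_j,b_{j+1}$ straddle $p_i$ on the hull \emph{and} $v$ lies on the same side of the line $L(b_j,b_{j+1})$ as $p_i$, far enough out that $p_i$ falls inside the cone at $v$. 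In the paper's language, this works only when $p_i$ is a \emph{back point} with respect to $v$; for \emph{front points} no such containing triangle with apex $v$ exists, and no single choice of $v$ makes all ten hull vertices back points. This is exactly why the interior-point argument of Section~\ref{inner-sec} does not extend to hull points and why Section~\ref{outer-sec} is long.

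What the paper actually does, and what is missing from your proposal, is a replacement for the triangle-containment criterion. It takes the view point $v_E$ seeing exactly the \emph{even-indexed} points (the alternating parity is essential, not an arbitrary small $B$), splits $S$ into sets $L$ and $R$ on either side, and proves --- via an extensive case analysis on where certain auxiliary view points lie (Cases 1a--1c, 2A, 2B) --- that one can choose $l_1,l_2\in L$ and $r_1,r_2\in R$ and visibility/non-visibility constraints $Q,Q'$ so that every admissible view point seeing $p$ is confined to a half-plane $H^-(p,r_2)$ and, symmetrically, to $H^-(l_2,p)$; the wedge $U_p$ is assembled as the intersection of these two half-planes coming from opposite sides, not obtained from a single containing triangle. (In Case 2B the reduction is even to three half-planes rather than wedges.) None of this machinery --- the front/back dichotomy, the sets $L$ and $R$, the encircling-chain claims that pin down the half-planes, and the existence arguments for the extremal view points $v_1,v_2$ --- appears in your outline, so the proposal as written is a plausible plan rather than a proof, and its central geometric claim would fail if executed literally.
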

\begin{proof}
Again, the proof is by contradiction. So let $S$ be a set of~10 points
in convex position inside a simple polygon $P$. Assume that every subset of
$S$ is discernible.

First, we enumerate the points around the convex hull.%
\footnote{The edges of the convex hull of $S$ may intersect the boundary of $P$.
} 
Let $E$ denote the set of even indexed points. 
Let $v_E$ be the view point that sees exactly the even indexed points. If $v_E$ lies outside the convex hull, 
$\mbox{ch}(S)$, of $S$, we draw the 
two tangents from $v_E$ at $\mbox{ch}(S)$. The points between the two tangent points 
facing $v_E$ are called {\em front points}, all other
points are named  {\em back points} of $S$; see Figure~\ref{front-fig}.
(If $v_E \in \mbox{ch}(S)$ then all points of $S$ are called back points.)

\begin{figure}[htbp]%
  \begin{center}%
    \includegraphics[scale=0.7,keepaspectratio]{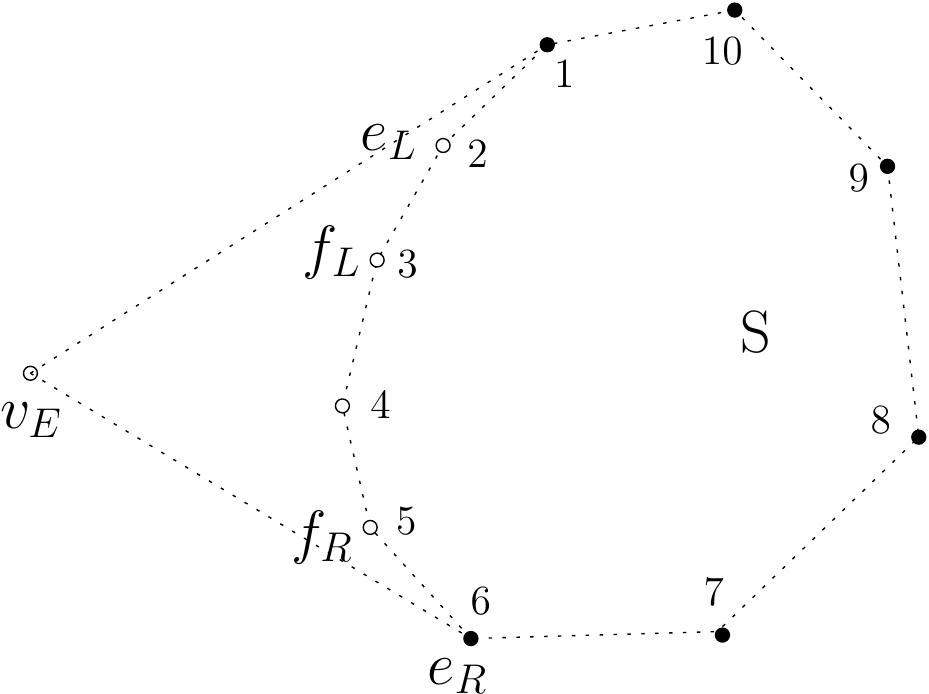}%
    \caption{Front points appear in white, back points in black. View point $v_E$
    sees exactly the points of even index.}%
    \label{front-fig}
  \end{center}%
\end{figure}

We are going to discuss the case depicted in Figure~\ref{front-fig} first, namely:

\noindent{\bf Case~1: There exists an odd front point.}

It follows from the definition of front points that in this case $v_E$ lies outside the convex hull of $S$.
Let $f_L$ and $f_R$ be the outermost left and right front points with odd index, as seen from $v_E$;
and let $e_L$ and $e_R$ denote their outer neighbors, as shown in Figure~\ref{front-fig}.
While $f_L=f_R$ is possible, we always have $e_L \not= e_R$. Observe that $e_L$ and
$e_R$ may be front or back points; this will require some case analysis later on.

{\bf Notation.}
For two points $a,b$, let $H^+(a,b)$  denote the open half-plane to the left of the
ray $L(a,b)$ from $a$ through $b$, and $H^-(a,b)$ the open half-plane to its right.
\begin{figure*}[hbtp]%
  \begin{center}%
    \includegraphics[width=\textwidth,keepaspectratio]{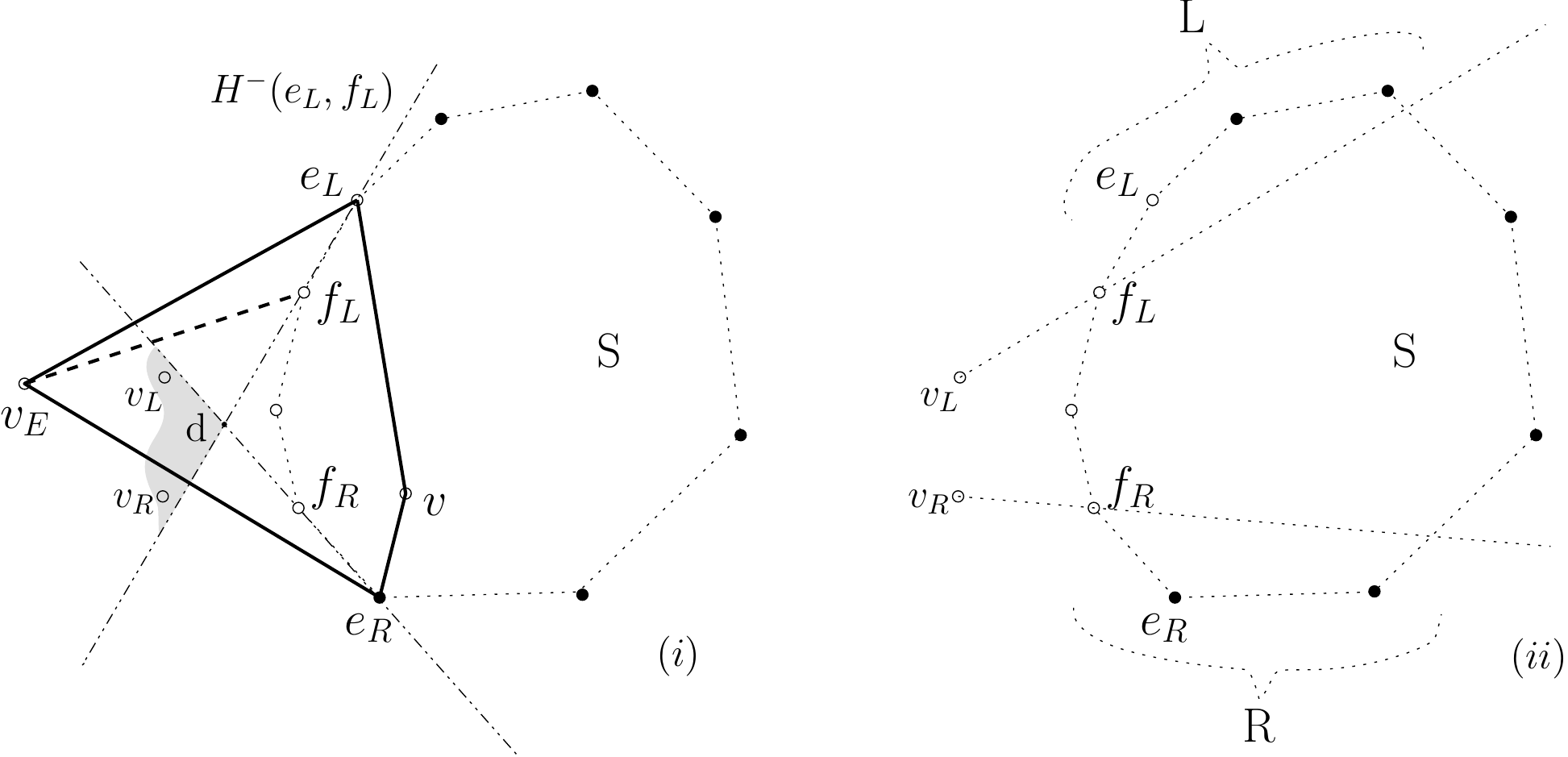}%
    \caption{(i) As segment $\overline{v_Ef_L}$ must be intersected by the boundary of $P$, it cannot
    be encircled by visibility segments. (ii) Defining subsets $L$ and $R$ of $S$.}%
    \label{front2-fig}
  \end{center}%
\end{figure*}

\begin{claim}         \label{newfront-claim}
Each view point $v$ that sees $e_L$ and $e_R$ lies in $H^-(e_L,f_L) \cap H^-(f_R,e_R)$.
\end{claim}
\begin{proof}
If $v$ were contained in $H^+(e_L,f_L)$ then the chain of visibility segments 
$e_L-v-e_R-v_E-e_L$ would encircle the segment $\overline{v_Ef_L}$---a contradiction,
because $v_E$ does not see the odd indexed point $f_L$; see  Figure~\ref{front2-fig}~(i).
\end{proof}

We now define two subsets $L$ and $R$ of $S$ that will be crucial in our proof.
\begin{definition}    \label{lr-defi}
(i) Let $v_L:=v_{S\setminus \{f_L\}}$ and $v_R:=v_{S\setminus \{f_R\}}$
denote the view points that see all of $S$ except $f_L$ or $f_R$, respectively.\\
(ii) Let $L:= S \cap H^+(v_L, f_L)$ and $R:= S \cap H^-(v_R, f_R)$.
\end{definition}
By Claim~\ref{newfront-claim}, the points of $S$ contained in the triangle $(e_R, e_L, v_E)$ are front points with respect
to $v_R, v_L$, too; see Figure~\ref{front2-fig}.
\begin{claim}         \label{lr-claim}
None of the sets $L, R, S\setminus (L \cup R)$ are empty. The sets $L$ and $ R$ are disjoint. 
\end{claim}
\begin{proof}
By construction, we have $e_L \in L$, $e_R \in R$, and $f_L, f_R \not\in L \cup R$.
If $v_L=v_R$ then $L \cap R = \emptyset$, obviously. Otherwise, there is at least one
even indexed point, $e$, between $f_L$ and $f_R$ on $\mbox{ch}(S)$. Assume 
that there exists a point $q$ of $S$ in the intersection of $L$ and $R$. Then segment
$\overline{v_Rf_R}$ would be encircled by the visibility chain $q-v_R-e-v_L-q$, 
contradicting the fact that $v_R$ sees every point {\em but} $f_R$; see Figure~\ref{lrdis-fig}. 
\end{proof}
\begin{figure}[hbtp]%
  \begin{center}%
    \includegraphics[scale=0.6,keepaspectratio]{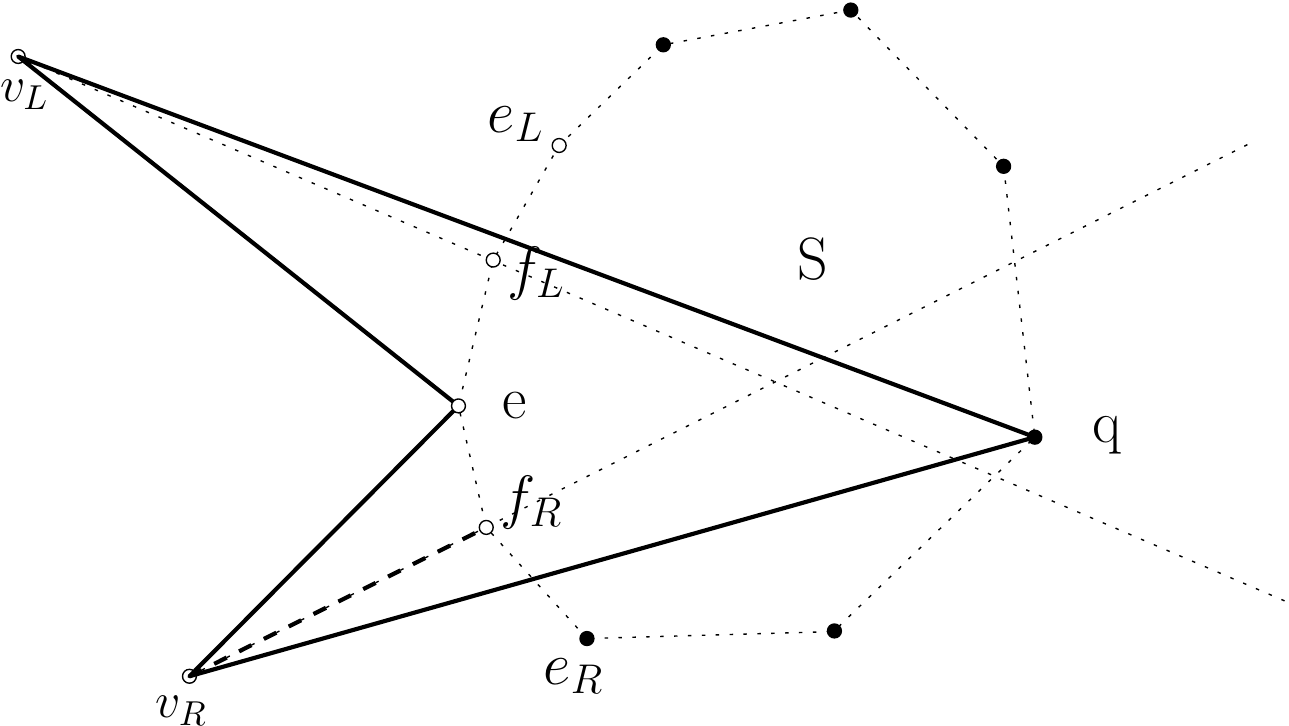}%
    \caption{$L$ and $R$ are disjoint.}%
    \label{lrdis-fig}
  \end{center}%
\end{figure}

The purpose of the sets $L$ and $R$ will now become clear: They contain points 
like $b_1, b_2$ in Section~\ref{technique-sec}, that help us reduce visibility regions
to wedges. The precise property will be stated for $R$ in Lemma~\ref{case-lem};
a symmetric property holds for $L$. The proof of Lemma~\ref{case-lem} will be 
postponed. First, we shall derive a conclusion in Lemma~\ref{main-lem},
and use it in completing the proof of Lemma~\ref{outer-lem} in Case~1.

\begin{lemma}      \label{case-lem}
There exist points $r_1, r_2$ in $R$ such that the following holds either for $Q=\mbox{vis}(r_1) \cap \mbox{vis}(r_2)$ 
or for $Q= \mbox{vis}(r_1)^c \cap \mbox{vis}(r_2) $.
For each $p \in S$ different from $r_1, r_2$, 
each view point that (i) sees $p$, (ii) lies in $Q$, and (iii) sees at least one point of $L$, 
is contained in the half-plane $H^-(p,r_2)$.
\end{lemma}

Here, $D^c$ denotes the complement of a set $D$.
A symmetric lemma holds for points $l_1, l_2 \in L$, a set 
$Q' \in \{  \mbox{vis}(l_1) \cap \mbox{vis}(l_2), \  \mbox{vis}(l_1)^c \cap \mbox{vis}(l_2) \}$
and the half-plane $H^-(l_2,p)$. Adding up these facts yields the following.

\begin{lemma}     \label{main-lem}
Let $p \in S \setminus \{l_1, l_2, r_1, r_2\}$. Then each view point in $Q \cap Q'$ that 
sees $p$ lies in the wedge $U_p = H^-(p,r_2) \cap H^-(l_2,p)$.
\end{lemma}

Now we can proceed as in Section~\ref{technique-sec}; see Figure~\ref{wedge-fig}~(iii) and (iv).
Within wedge $U_p$ we find a sub-wedge $W_p$ satisfying
\begin{eqnarray}      \label{conclu}
   Q \cap Q'  \cap \mbox{vis}(p) =  Q \cap Q'  \cap  W_p,
\end{eqnarray} 
with the same arguments that led to Fact~\ref{wedge-fact}, replacing $(a, b_1, b_2)$ with
$(p, r_2, l_2)$. Since membership in $Q, Q'$ only prescribes the visibility of $\{l_1, l_2, r_1, r_2\}$,
Fact~\ref{conclu} implies the following. For each subset 
$T \subseteq S \setminus \{l_1, l_2, r_1, r_2\}$ there exists a cell in the arrangement of the remaining 
six wedges $W_p$, where $p\in S\setminus\{l_1,l_2,r_1,r_2\}$, that is contained in precisely the wedges related to $T$. 
As in Section~\ref{inner-sec}, this contradicts Theorem~\ref{isler-theo} and proves Lemma~\ref{outer-lem} in Case~1.
\end{proof}
It remains to show how to find $r_1, r_2$ and $Q$ in Lemma~\ref{case-lem}. 

\begin{proof} (of Lemma~\ref{case-lem}) Before starting a case analysis depending on properties of
$R$ and $e_R$ we list some helpful facts.

\begin{claim}         \label{prep1-claim}
If a view point $v$ sees a point $r \in R$ and a point $s \notin R \cup \{f_R\}$ then
$v \in H^-(s,r)$. A symmetric claim holds for $L$.
\end{claim}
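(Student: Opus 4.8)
The plan is to mimic the encircling argument that has been used throughout Sections~\ref{technique-sec}--\ref{outer-sec}, applied to the segment $\overline{v_R f_R}$, which by construction of $v_R=v_{S\setminus\{f_R\}}$ must be properly crossed by the boundary of $P$. Recall from Definition~\ref{lr-defi} that $R = S\cap H^-(v_R,f_R)$, so every point of $R$ lies strictly to the right of the ray $L(v_R,f_R)$, and every point of $S\setminus(R\cup\{f_R\})$ lies in the closed half-plane $H^+(v_R,f_R)$ (in general position, strictly to the left). So $r$ and $s$ lie on opposite sides of the line through $v_R$ and $f_R$. First I would record this separation, together with the fact that $v_R$ sees both $r$ and $s$ (since $r,s \in S$ and $r,s \ne f_R$), so the segments $\overline{v_R r}$ and $\overline{v_R s}$ are visibility segments contained in $P$.

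Next, suppose for contradiction that a view point $v$ seeing $r$ and $s$ lies in $H^+(s,r)$, i.e.\ to the left of the ray from $s$ through $r$ (the opposite half-plane to the one claimed). I would then exhibit the closed visibility chain
\[
v_R - r - v - s - v_R
\]
whose four segments $\overline{v_R r}$, $\overline{r v}$, $\overline{v s}$, $\overline{s v_R}$ are all contained in $P$ (the first and last because $v_R$ sees $r$ and $s$; the middle two because $v$ sees $r$ and $s$). The geometric heart of the argument is to verify that this quadrilateral chain separates $v_R$ from $f_R$ in the plane, i.e.\ encircles the segment $\overline{v_R f_R}$: intuitively, $v_R$ and $f_R$ are on opposite sides of line $L(v_R f_R)$'s ``witnesses'' $r$ and $s$, and the assumption $v\in H^+(s,r)$ places $v$ on the far side so that the polygon $v_R\, r\, v\, s$ wraps around $f_R$. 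Once the chain is shown to encircle $\overline{v_R f_R}$, its boundary cannot block the view from $v_R$ to $f_R$, so $v_R$ would see $f_R$, contradicting the defining property of $v_R$. Hence $v\in H^-(s,r)$, and the symmetric statement for $L$ follows by interchanging the roles of $L$, $v_L$, $f_L$ throughout.

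The step I expect to be the main obstacle is the purely planar claim that the chain $v_R-r-v-s-v_R$ really does encircle $\overline{v_R f_R}$ in \emph{every} admissible sub-case: $r$ ranges over $R$ and $s$ over $S\setminus(R\cup\{f_R\})$, and one must argue that no matter where these points sit (front or back, near or far from the tangent points $e_L,e_R$), the cyclic order of $v_R, r, f_R, s$ around the relevant region forces the enclosure. I would handle this by working with the line $\ell := L(v_R,f_R)$: $r$ is strictly on the $H^-$ side, $s$ strictly on the $H^+$ side, and $f_R$ lies on $\ell$ beyond $v_R$; the segment $\overline{v_R r}$ crosses into the $H^-$ half-plane and $\overline{v_R s}$ into the $H^+$ half-plane, so the chain already straddles $\ell$; the assumption $v\in H^+(s,r)$ is exactly what is needed to close the loop on the side of $\ell$ containing $f_R$ rather than leaving $f_R$ outside. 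Making this topological winding-number bookkeeping rigorous—rather than merely reading it off a picture—is the only non-routine part; everything else is the standard ``a visibility chain around a non-visibility segment is impossible'' contradiction already invoked several times above.
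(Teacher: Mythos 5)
You have the right chain ($v_R-r-v-s-v_R$, the same one the paper uses) and the right contradiction (a closed cycle of visibility segments around $\overline{v_Rf_R}$ would prevent the boundary of $P$ from blocking the view from $v_R$ to $f_R$). But the step you yourself flag as the ``main obstacle'' is exactly where the argument is incomplete, and the substitute you sketch does not close it. Knowing only that $r\in H^-(v_R,f_R)$ and $s\in H^+(v_R,f_R)$ places the ray from $v_R$ through $f_R$ inside the wedge at $v_R$ spanned by $r$ and $s$, so that ray meets the chord $\overline{rs}$ at some point $c$; it says nothing about whether $f_R$ lies between $v_R$ and $c$ or beyond $c$. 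In the latter case $f_R$ lies outside the triangle $(v_R,r,s)$, on the same side of $L(r,s)$ as your hypothetical $v$, and the quadrilateral $v_R\,r\,v\,s$ need not contain $f_R$ at all --- no encirclement, no contradiction. Your phrase about closing the loop ``on the side of $\ell$ containing $f_R$'' cannot be made precise, since $f_R$ lies \emph{on} $\ell=L(v_R,f_R)$.

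The missing ingredient is the single geometric fact the paper's proof rests on: $f_R$ lies in the triangle $(v_R,r,s)$. This is where the front-point machinery pays off. By Claim~\ref{newfront-claim} and the remark following Definition~\ref{lr-defi}, $f_R$ is a front point of $\mbox{ch}(S)$ with respect to $v_R$, so the ray from $v_R$ through $f_R$ meets $\mbox{ch}(S)$ first at $f_R$; since the crossing point $c\in\overline{rs}\subset\mbox{ch}(S)$ lies on that ray, it lies beyond $f_R$, and hence $f_R$ is interior to the triangle $(v_R,r,s)$. With that in hand, $v\in H^+(s,r)$ puts $v$ on the side of $L(r,s)$ opposite to $v_R$ and $f_R$, so $v_R\,r\,v\,s$ is a simple quadrilateral of visibility segments containing $f_R$, and your encircling argument goes through. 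Without it, your derivation never uses that $f_R$ is a front point rather than an arbitrary point of $S$ on the line $L(v_R,f_R)$, so it cannot be complete as stated.
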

\begin{proof}
Otherwise, $\overline{v_Rf_R}$ would be encircled by $r-v-s-v_R-r$, 
since $f_R$ lies in the triangle defined by $(v_R,r,s)$; see Figure~\ref{prep-fig}~(i).
\end{proof}
\begin{figure*}[hbtp]%
  \begin{center}%
    \includegraphics[width=\textwidth,keepaspectratio]{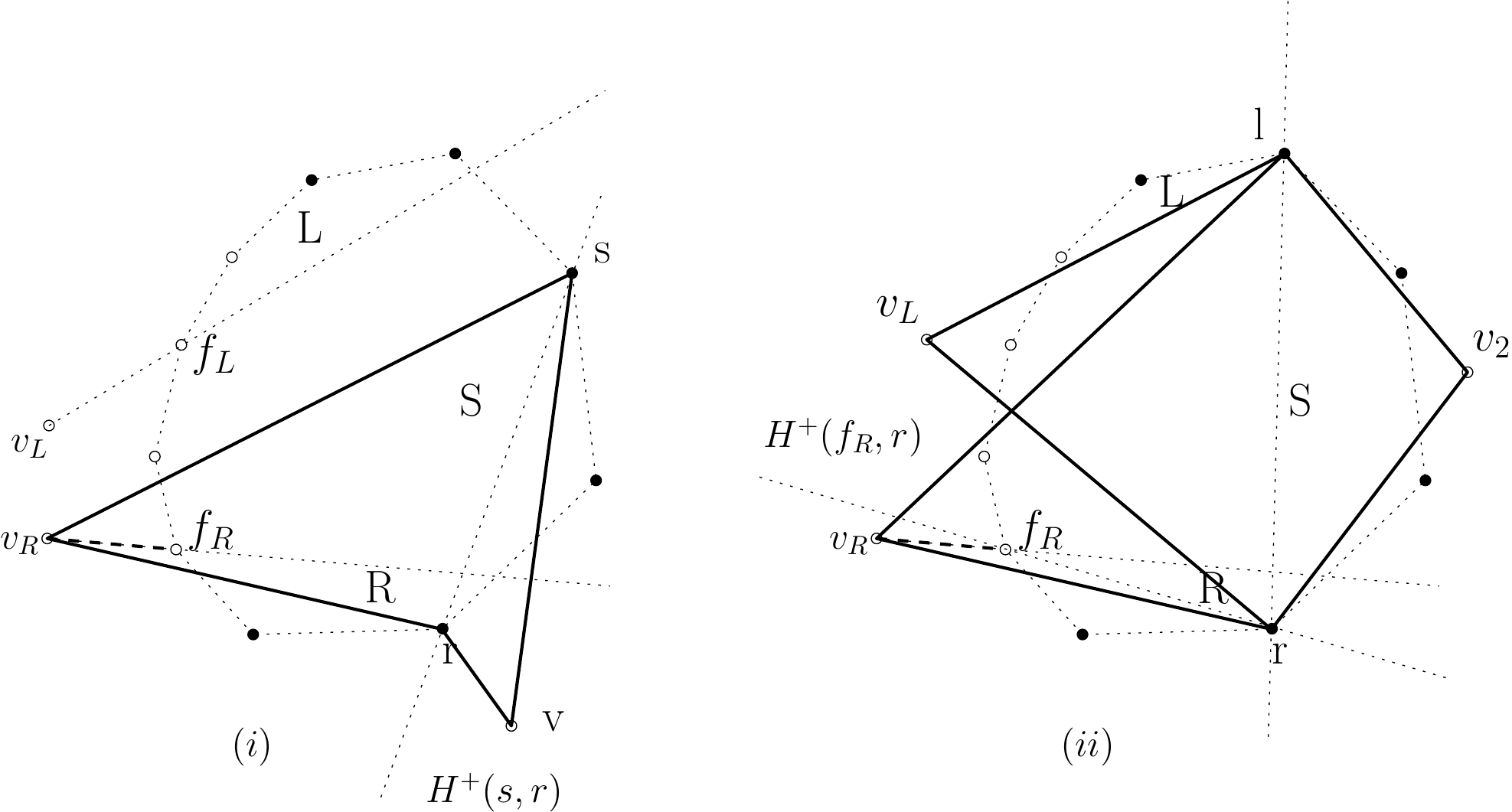}%
    \caption{Illustration to Claims~\ref{prep1-claim} and~\ref{prep2-claim}.}%
    \label{prep-fig}
  \end{center}%
\end{figure*}
The next fact narrows the locus from which two points, one from $L$ and $R$ each,
are visible. 
\begin{claim}         \label{prep2-claim}
If a view point $v$ sees points $r \in R$ and $l \in L$ then
$v$ lies in the wedge $H^-(f_R,r) \cap H^-(l,f_L)$, and on the same side of $L(r,l)$
as $v_R$ and $v_L$ do.
\end{claim}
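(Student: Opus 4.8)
The plan is to prove Claim~\ref{prep2-claim} in three steps, each resting on the ``encircling'' principle already used for Claims~\ref{newfront-claim} and~\ref{prep1-claim}: a closed chain of visibility segments cannot surround a segment $\overline{v_Tp}$ with $p\notin T$, since such a segment must be crossed by the boundary of $P$.

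\textbf{Step 1 (the ``same side'' assertion).} By Claim~\ref{lr-claim} the sets $L$ and $R$ are disjoint and $f_L,f_R\notin L\cup R$; in particular the given point $l\in L$ satisfies $l\notin R\cup\{f_R\}$, so Claim~\ref{prep1-claim} applied with $s:=l$ gives $v\in H^-(l,r)$. The same application is legitimate with $v_R$ or $v_L$ in place of $v$, since each of them sees $r$ (as $r\neq f_L,f_R$) and sees $l$ (as $l\neq f_L,f_R$); hence $v_R,v_L\in H^-(l,r)$ as well. Thus $v$, $v_R$ and $v_L$ all lie in the half-plane $H^-(l,r)$, i.e.\ on one common side of $L(r,l)$, which is the final assertion of the claim (and in passing shows $v_R$ and $v_L$ lie on the same side).

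\textbf{Step 2 (symmetry) and Step 3 (the half-plane at $f_R$).} The two remaining assertions $v\in H^-(f_R,r)$ and $v\in H^-(l,f_L)$ are images of one another under the plane reflection interchanging $L\leftrightarrow R$, $f_L\leftrightarrow f_R$, $e_L\leftrightarrow e_R$, $v_L\leftrightarrow v_R$ and $H^+\leftrightarrow H^-$; so it suffices to prove $v\in H^-(f_R,r)$. Suppose instead that $v\in H^+(f_R,r)$ (the boundary case being excluded by general position), and form the closed chain of visibility segments $v_R-r-v-l-v_R$: all four of its edges are visibility segments, as $v$ sees $r$ and $l$ by hypothesis and $v_R$ sees $r$ and $l$ since neither equals $f_R$. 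By the remark following Definition~\ref{lr-defi}, $f_R$ is a front point with respect to $v_R$ with outer neighbour $e_R$; because $S$ is in convex position, the cyclic order around $\mbox{ch}(S)$ runs through the $L$-vertices, then the front stretch $f_L,\ldots,f_R$, then the $R$-vertices, so $f_R$ lies angularly between $l\in L$ and $r\in R$ as seen from $v_R$ and, being a front point, on the same side of the chord $\overline{rl}$ as $v_R$; hence $f_R$ lies in the triangle $(v_R,r,l)$. Combining this with $v\in H^+(f_R,r)$ and with Step~1 (which puts $v$ on the same side of $L(r,l)$ as $v_R$), one checks that $f_R$ lies in the region enclosed by the chain while the segment $\overline{v_Rf_R}$, leaving the chain vertex $v_R$ towards $f_R$, stays inside that region; thus $\overline{v_Rf_R}$ is encircled by visibility segments, contradicting $v_R\notin\mbox{vis}(f_R)$. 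The reflected argument yields $v\in H^-(l,f_L)$, and the claim follows.

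\textbf{Where the difficulty lies.} Steps~1 and~2 amount to bookkeeping over Claim~\ref{prep1-claim} and left--right symmetry; the real work is the verification in Step~3 that ``$v\in H^+(f_R,r)$'' is precisely the position of $v$ forcing the chain $v_R-r-v-l-v_R$ to enclose $f_R$ rather than to wind around a region missing it. Since Step~1 places $v$ and $v_R$ on the \emph{same} side of $\overline{rl}$, this chain is never a convex quadrilateral, so (unlike in Claim~\ref{prep1-claim}) the enclosure cannot be read off from convexity; one has to treat the non-convex simple quadrilateral and the self-crossing quadrilateral separately, using in each the convex position of $S$ and the front/back structure of Case~1---notably the location of $e_R$ and $f_R$ relative to $\overline{rl}$---to locate $f_R$ correctly. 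That is the only point where genuinely new geometry, beyond the earlier claims, is required.
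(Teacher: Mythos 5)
Your proposal is correct and takes essentially the same route as the paper: the paper handles both the half-plane condition and the same-side condition with the single encircling chain $r-v-l-v_R-r$ around $\overline{v_Rf_R}$ (its points $v_1,v_2$ in Figure~\ref{prep-fig}~(ii)), which is exactly your Step~3 chain, with the left-hand statement following by symmetry. Your only deviation is Step~1, where you obtain the same-side assertion as a corollary of Claim~\ref{prep1-claim} with $s=l$ instead of rerunning the encircling argument---a valid (and slightly cleaner) shortcut---and your ``one checks'' that the chain encloses $\overline{v_Rf_R}$ sits at the same level of detail as the paper's appeal to its figure.
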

\begin{proof}
If $v \in H^+(f_R,r)$, or if $v$ were situated on the opposite side of $L(r,l)$, 
then $\overline{v_Rf_R}$ would be encircled by $r-v-l-v_R-r$; see points $v=v_1$ and $v=v_2$ 
in Figure~\ref{prep-fig}~(ii).
\end{proof}

Now we start on the case analysis. In each case, we need to define $r_1, r_2 \in R$ and 
a set $Q=\mbox{vis}(r_1) \cap \mbox{vis}(r_2)$ or $Q=\mbox{vis}(r_1)^c \cap \mbox{vis}(r_2)$.
Then we must prove that the following assertion of Lemma~\ref{case-lem} holds.

{\em {\bf Assertion}\\
 If $p \in S$ is different from $r_1, r_2$, and if $v \in Q$ is a view point that sees $p$ and
some  point $l \in L$, then $v \in H^-(p,r_2)$.}

{\bf Case 1a:} Point set $R$ contains at most two points.\\
We define $\{r_1, r_2\}:=R$ and let $Q:=  \mbox{vis}(r_1) \cap \mbox{vis}(r_2)$.\\
Let $p$ and $v$ be as in the Assertion.
If $p \not= f_R$ then Claim~\ref{prep1-claim} implies $v \in H^-(p,r_2)$. 
If $p= f_R$ we obtain $v \in H^-(p,r_2)$  by the first statement in Claim~\ref{prep2-claim}.
\begin{figure}[hbtp]%
  \begin{center}%
   \includegraphics[width=\textwidth,keepaspectratio]{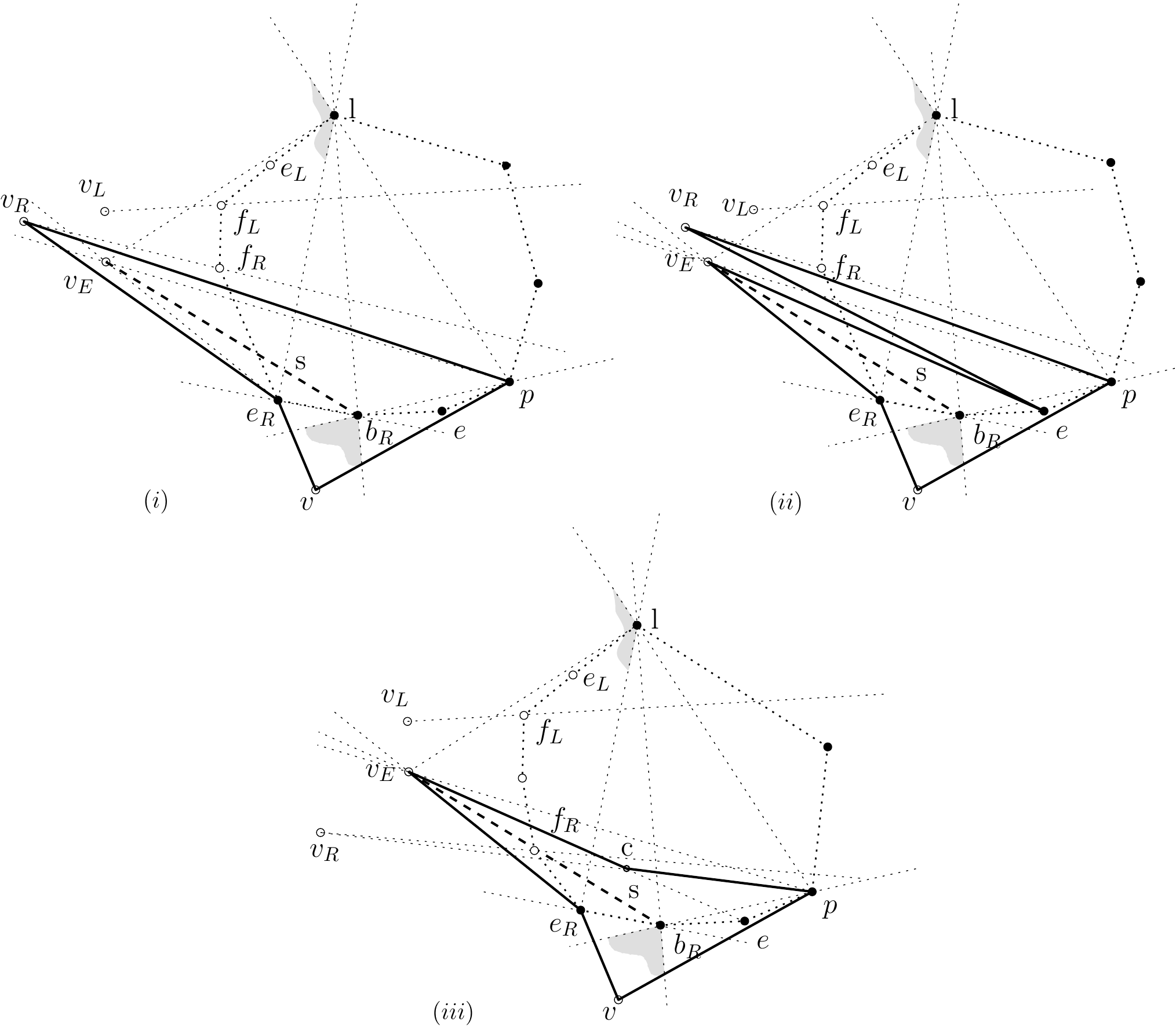}%
    \caption{Illustrations of Case 1b.}%
    \label{case1b-fig}
  \end{center}%
\end{figure}

{\bf Case 1b:} Point set $R$ contains more than two points, and $e_R$ is tangent point of $\mbox{ch}(S)$
as seen from $v_E$; compare Figure~\ref{front-fig}.\\
We set $r_1:=e_R$ and let $r_2$ be the odd indexed back point $b_R$ counterclockwise next to $e_R$.
Moreover, $Q:=  \mbox{vis}(r_1) \cap \mbox{vis}(r_2)$.\\
For each $p \notin R$ the proof of Case~1a applies. Let $p \in R$ be different from $r_1, r_2$.
Assume, by way of contradiction, that $v\in H^+(p,r_2)$ holds. Since the second statement of 
Claim~\ref{prep2-claim} implies
$v \in H^-(l,e_R) \cap H^-(l,p) \subset H^-(l,b_R)$, we obtain $v \in H^-(l,b_R) \cap H^+(p,b_R)$;
see Figure~\ref{case1b-fig}.
Now we discuss the location of view point $v_R$. If it lies in the wedge $H^+(e_R, v_E) \cap H^+(v_E,p)$ 
then segment $s:=\overline{v_Eb_R}$ is encircled by $e_R-v_R-p-v-e_R$; see Figure~\ref{case1b-fig}~(i).
If $v_R$ does not lie in this wedge, let $e$ be the counterclockwise neighbor of $b_R$ in $R$.
If $v_R$ lies on the same side of $L(e,v_E)$ as $p$, then 
$e_R-v_E-e-v_R-p-v-e_R$ protects segment $s$; see~(ii).
If it lies on the opposite side, then $\overline{v_Ee}$ intersects  $\overline{v_Rp}$
at some point $c$, and $e_R-v_E-c-p-v-e_R$ encircles segment~$s$; see (iii).
In either situation, we obtain a contradiction.

Before continuing the case analysis we prove a simple fact.
\begin{lemma}          \label{tria-lem}
Let $a,b,c$ denote the vertices of a triangle, in counterclockwise order. Suppose there exists 
a view point $w$ in $H^+(b,a) \cap H^-(c,b)$ that sees $a$ and $c$. Then,
each view point $v \in H^+(b,a)$ that sees $a$ and $c$ but not $b$ lies in $H^-(c,b)$.
\end{lemma}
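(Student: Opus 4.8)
The plan is to argue by contradiction. Assume $v$ is a view point with $v\in H^+(b,a)$ that sees $a$ and $c$ but not $b$, and suppose, against the claim, that $v\notin H^-(c,b)$; since the view points are in general position with respect to the points of $S$, $v$ does not lie on the line $L(b,c)$, so $v\in H^+(c,b)$. I will derive that $v$ in fact sees $b$, which is the contradiction.

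The geometric heart of the matter is the assertion that, under these hypotheses, $b$ lies in the open triangle $\triangle(v,a,c)$. Membership in the half‑planes $H^{\pm}(\cdot,\cdot)$ and the counterclockwise order of $a,b,c$ are invariant under orientation‑preserving affine maps, so one may normalise $a=(-1,0)$, $b=(0,0)$, $c=(1,1)$; then $H^+(b,a)=\{p_y<0\}$ and $H^+(c,b)=\{p_x>p_y\}$, hence $v$ satisfies $v_y<0$ and $v_x>v_y$, and a one‑line sign check against the three edge‑lines of $\triangle(v,a,c)$ yields $b\in\operatorname{int}\triangle(v,a,c)$. Now bring in $w$: since $v$ and $w$ each see both $a$ and $c$, the four segments of the closed chain $v\!-\!a\!-\!w\!-\!c\!-\!v$ are visibility segments and therefore lie in $P$, and the sign conditions $v,w\in H^+(b,a)$, $v\in H^+(c,b)$, $w\in H^-(c,b)$ make this chain ``encircle'' $b$ on the side of the vertex $v$. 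Concretely, $\overline{va}$ crosses $\overline{wc}$ in a point $z$ (symmetrically it may be $\overline{vc}$ that crosses $\overline{wa}$; the borderline case of a simple quadrilateral is handled by the same bookkeeping), and comparing — after the normalisation — the positions of $b$ and of $v$ relative to the line $L(w,c)=L(z,c)$ shows $b\in\operatorname{int}\triangle(v,z,c)$. Put $T:=\triangle(v,z,c)$; its three edges $\overline{vz}\subseteq\overline{va}$, $\overline{zc}\subseteq\overline{wc}$ and $\overline{cv}$ all lie in $P$.

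To finish, $\partial T$ is a simple closed curve contained in $P$, and $P$ is simply connected, so its complement is connected and unbounded; hence every point with nonzero winding number about $\partial T$ — in particular every point of $\operatorname{int}T$ — lies in $P$, and so $T\subseteq P$. As $T$ is convex and contains both the vertex $v$ and the interior point $b$, the whole segment $\overline{vb}$ lies in $T\subseteq P$, i.e. $v$ sees $b$, the contradiction we wanted. I expect the main obstacle to be the middle step: showing that the chain $v\!-\!a\!-\!w\!-\!c\!-\!v$ really closes around $b$ on the side of $v$ — equivalently, that $b$ falls into the loop of the chain that is incident to $v$ — which requires careful tracking of the four half‑plane memberships together with a short case distinction on the combinatorial type of the chain.
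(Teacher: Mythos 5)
Your proof is correct and is essentially the paper's own argument: the paper's proof is the single sentence that otherwise $\overline{vb}$ would be encircled by the visibility chain $c-v-a-w-c$, and your normalization showing $b\in\operatorname{int}\triangle(v,a,c)$ plus the winding/simple-connectivity step is just that encircling argument written out in full. The only part you leave as "bookkeeping" (which crossing of the chain occurs, and the simple-quadrilateral subcase) does go through, and the paper supplies no more detail there than you do.
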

\begin{proof}
Otherwise, segment $\overline{vb}$ would be encircled by $c-v-a-w-c$; see Figure~\ref{tria-fig}.
\end{proof}
\begin{figure}[hbtp]%
  \begin{center}%
   \includegraphics[scale=0.7,keepaspectratio]{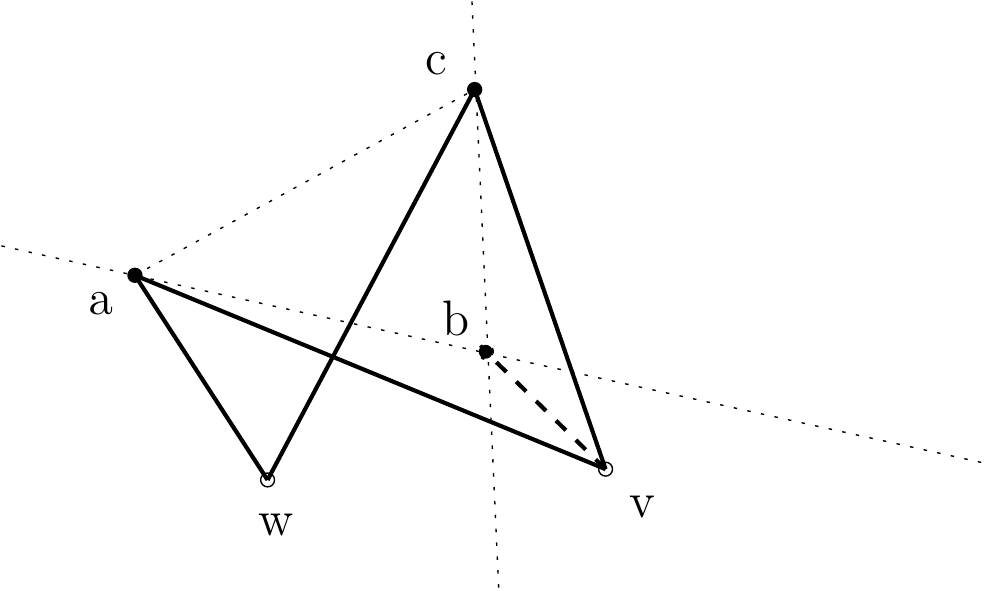}%
    \caption{Proof of Lemma~\ref{tria-lem}}%
    \label{tria-fig}
  \end{center}%
\end{figure}
{\bf Case 1c:} Point set $R$ contains more than two points, and the 
counterclockwise neighbor, $b_R$, of $e_R$, is tangent point as seen from $v_E$.
Let $e$ denote the counterclockwise neighbor of $b_R$, and let $w_R := v_{S \setminus \{e_R\}}$
denote the view point that sees all of $S$ except $e_R$.
We consider three subcases, depending on the location of $w_R$.

(1ci) If $w_R \in H^-(b_R, e_R)$, we set 
$$(r_1,r_2,Q):=(e_R,b_R,\mbox{vis}(e_R)^c \cap \mbox{vis}(b_R)).$$
To prove the Assertion, let $p \not= e_R, b_R$, and let $v$ be a view point that 
sees $p, b_R, l$ but not $e_R$, for some $l \in L$. 

As both $w_R$ and $v$ see $l \in L$ and $b_R \in R$, 
Claim~\ref{prep2-claim} implies $w_R, v \in H^-(f_R,b_R) \cap H^-(l,f_L) \subset H^+(e_R,l)$.
The latter inclusion allows us to apply Lemma~\ref{tria-lem} to $(a,b,c,w)=(l,e_R,b_R,w_R)$,
which yields $v \in H^-(b_R,e_R)$. Now $v \in H^-(p,b_R)$ follows; 
see Figure~\ref{case1c-fig}~(i).
\begin{figure*}[hbtp]%
  \begin{center}%
   \includegraphics[width=\textwidth,keepaspectratio]{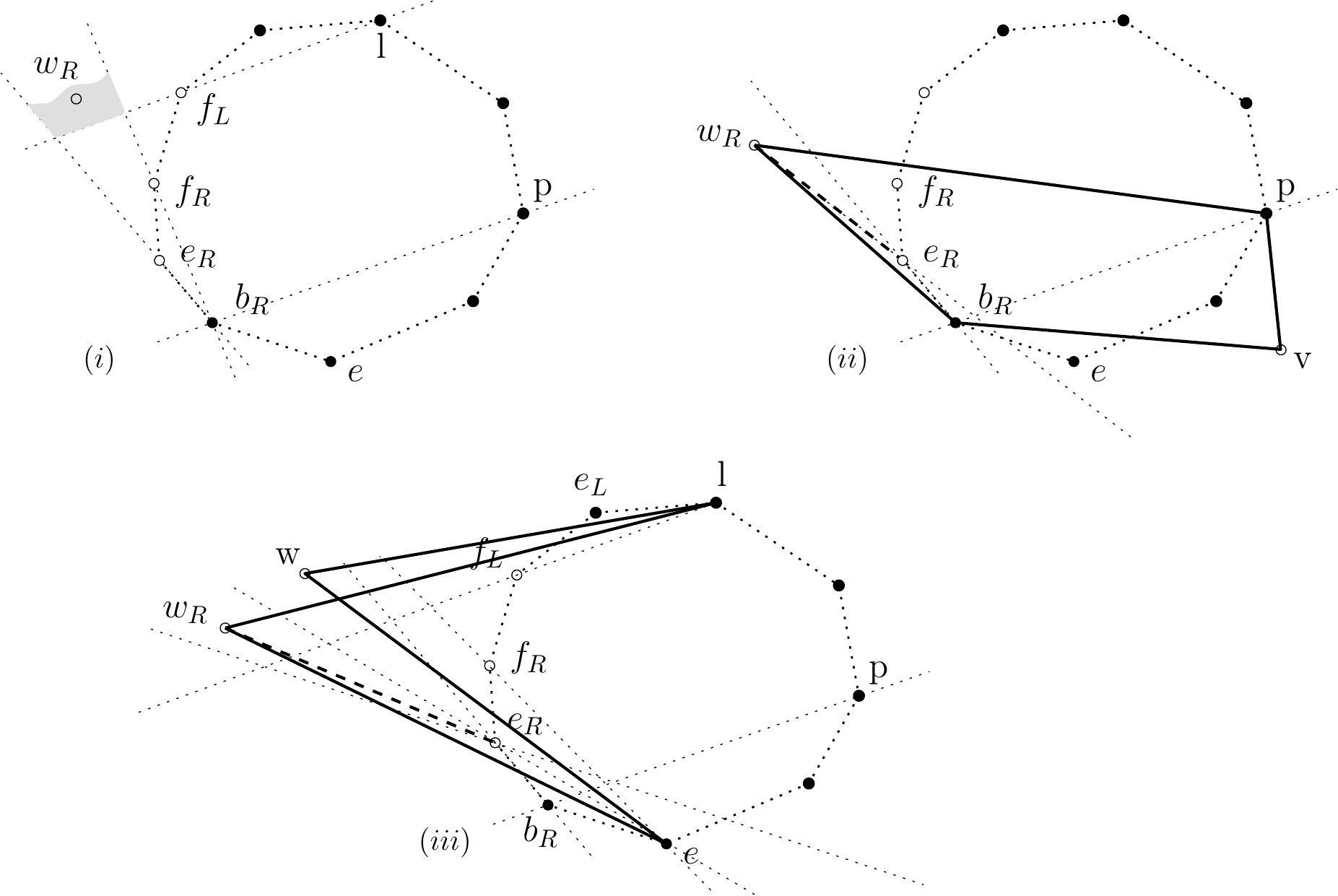}%
    \caption{Illustrations of Case 1c.}%
    \label{case1c-fig}
  \end{center}%
\end{figure*}

(1cii)  If $w_R \in H^+(b_R, e_R)$, and if $b_R$ and $e$ are situated on opposite sides
of $L(w_R,e_R)$, we set  $(r_1,r_2,Q):=(e_R,b_R,\mbox{vis}(e_R) \cap \mbox{vis}(b_R))$.\\
All points of $S':=S\setminus \{e_R,b_R\}$ lie on the same side of $L(w_R,e_R)$ as $e$.
A view point $v$ that sees some point $p \in S'$ and $b_R$ must be in $H^-(p,b_R)$.
Otherwise, $\overline{w_Re_R}$ would be encircled by $b_R-w_R-p-v-b_R$; 
see Figure~\ref{case1c-fig}~(ii).

(1ciii)  If $w_R \in H^+(b_R, e_R)$, and if $b_R$ and $e$ are situated on the same side
of $L(w_R,e_R)$, we set  $(r_1,r_2,Q):=(b_R,e,\mbox{vis}(b_R)^c \cap \mbox{vis}(e))$.\\
Clearly, $w_R \in H^+(e,e_R)$. Each view point $w$ that sees $e$ and some $l \in L$
---in particular point $v$ of the Assertion--- must 
lie in $H^-(e_R, e)$, or $\overline{w_Re_R}$ would be enclosed by $w_R-l-w-e-w_R$;
see Figure~\ref{case1c-fig}~(iii) (observe that $w$ must be contained in 
$H^-(f_R, e) \cap H^-(l,f_L) \subset H^+(e,l)$, 
by Claim~\ref{prep2-claim}, as depicted in the figure).

Let $x$ denote the view point that sees exactly $e_R, e, e_L, l$. By Claim~\ref{prep2-claim},
$x \in H^+(e_R,e_L) \cap H^+(e, e_L) \subset H^+(b_R,e_L)$.
We file for later use that $x \in H^+(b_R,l)$ holds, for the same reason.
Since $b_R$ is tangent point from $v_E$, we have $v_E \in H^-(e,b_R)$. Thus, we
can apply Lemma~\ref{tria-lem} to $(a,b,c,w)=(e_L, b_R, e, v_E)$ and obtain
$x \in H^-(e,b_R)$. 

We have just shown that $x \in H^+(b_R,l) \cap H^-(e,b_R)$ holds. Moreover,
Claim~\ref{prep2-claim} implies $v \in H^-(f_R,e) \cap H^-(l,f_L) \subset H^+(b_R,l)$
since $v$ sees $l$ and $e$. Since $v$ does not see $b_R$ 
we can apply Lemma~\ref{tria-lem} to $(a,b,c,w)=( l,b_R,e,x)$ and obtain $v \in H^-(e, b_R)$.
Together with the first finding in (1ciii), this implies
$v \in H^-(e_R, e) \cap H^-(e, b_R) \subset H^-(p,e)$ for all $p \not= b_R, e$.

This completes the proof of Lemma~\ref{outer-lem} in Case~1.

\end{proof}
Now we discuss the second case of Lemma \ref{outer-lem}, thereby completing its proof.
This also completes the proof of our main result, Theorem~\ref{statement-theo}.

\bigskip

{\bf Case~2: There is no odd front point.}

In this situation, view point $v_E$ either lies inside $ch(S)$, so that no front point exists, or $v_E$ lies outside $ch(S)$,
and at most one front point is visible from $v_E$ between the two tangent points on $ch(S)$; if so, its index is even. 

Independently of the position of $v_E$,  we introduce some notation.
Let $v_S$ denote the point that sees all points in $S$. 
The line $G$ through $v_E$ and $v_S$ divides
$S$ into two subsets, $L$ and $R$ (not to be confused with $L$ and $R$ in case 1), one of which may possibly be empty.
We cut $G$ at $v_E$, and rotate the half-line passing through $v_S$ over $L$; see
Figure~\ref{Nomenklatur3-fig}. The first and the last odd indexed points of $L$ encountered
during this rotational sweep are named $l_1$ and $l_2$, respectively. 
Similarly, $r_1$ and $r_2$ are defined in $R$.

We observe that, e.~g., $l_1$ and $l_2$ need not exist, or that $l_1=l_2$ may hold; these cases will be taken
care of in the subsequent analysis. Also, the half-line rotating about $v_E$ may cut through $S$ in its
start position, depending on the position of $v_S$. This is of no concern for our proof, which is literally the
same for either situation.
\begin{figure}[hbtp]%
\begin{minipage}[t]{0.4\textwidth}
\includegraphics[width=\textwidth,keepaspectratio]{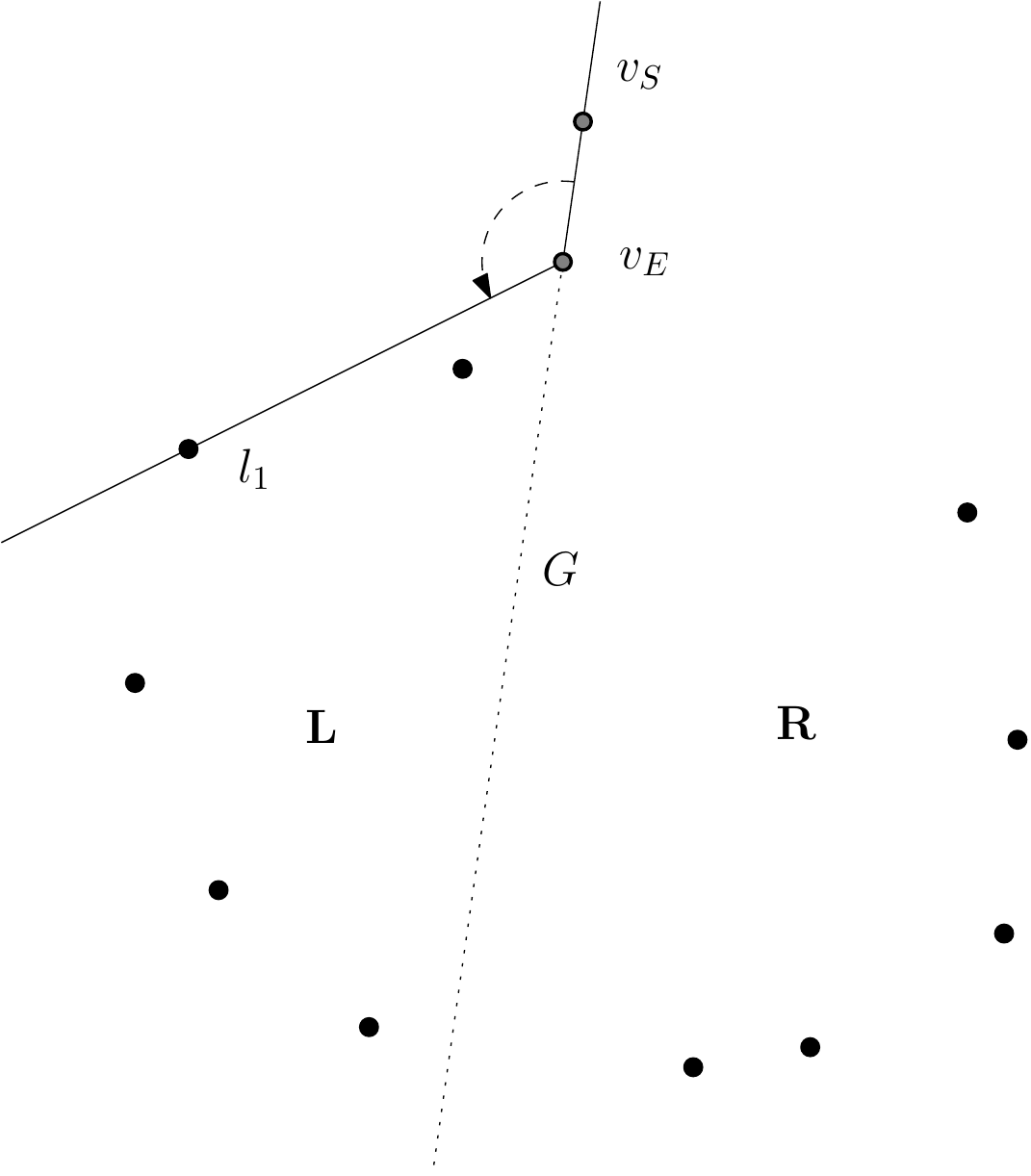}%
\end{minipage}
\hspace{2cm}
\begin{minipage}[t]{0.4\textwidth}
\includegraphics[width=\textwidth,keepaspectratio]{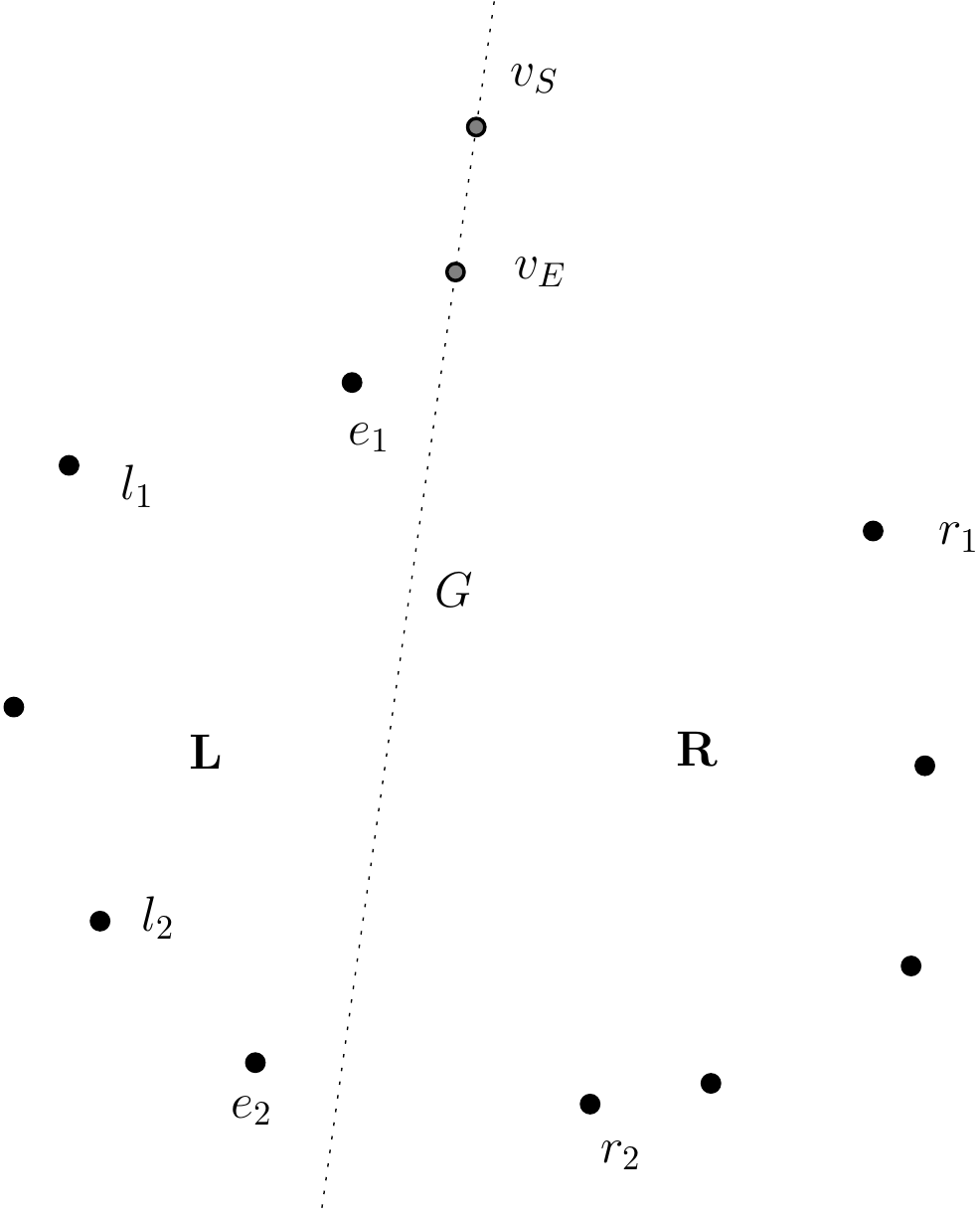}%
\end{minipage}
\caption{The half-line is rotated about $v_E$ over $L$. The first odd point encountered is named $l_1$, the last one $l_2$. }
\label{Nomenklatur3-fig}
\end{figure}
\begin{lemma} If there are odd-indexed points in both $L$ and $R$, then exactly one point lies 
between $l_1$ and $r_1$ on the boundary of the convex hull of $S$.
This point has even index and will be called $e_1$. Similarly, there is exactly one even-indexed point between $l_2$ and $r_2$, 
called $e_2$.
\end{lemma}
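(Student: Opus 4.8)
\emph{Proof proposal.}
The plan is to read $l_1,r_1$ (and then $l_2,r_2$) straight off the cyclic order of $S$ along $\partial\mbox{ch}(S)$, using that consecutive vertices of $\mbox{ch}(S)$ carry consecutive indices and hence alternate in parity. The one geometric input is a structural fact about Case~2: \emph{when the points of $S$ are listed by direction from $v_E$, the odd-indexed points occur in their cyclic order along $\partial\mbox{ch}(S)$.} If $v_E\in\mbox{ch}(S)$ this is immediate, the direction list being the cyclic order. If $v_E\notin\mbox{ch}(S)$, then seen from $v_E$ the boundary $\partial\mbox{ch}(S)$ splits into a near chain and a far chain; the far chain (whose end vertices are the two tangent points) is met by the rotating half-line in cyclic order, being a convex chain viewed from its concave side, whereas the near chain contributes, in Case~2, at most one vertex, and that vertex is even-indexed. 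Hence the only point that can lie out of cyclic order in the direction list is a single even-indexed front point; deleting it leaves every odd-indexed point of $S$ in cyclic order.

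Granting this, I would pin down $l_1$ and $r_1$. Both defining sweeps start from the half-line from $v_E$ through $v_S$, which lies along $G$, one of them turning into the open half-plane containing $L$ and the other into the one containing $R$; so $l_1$ is the odd-indexed point of $L$ whose direction from $v_E$ is angularly nearest that of $v_S$, and $r_1$ the analogous point of $R$. Since $G$ separates the directions of the $L$-points from those of the $R$-points, the ray $v_Ev_S$ lies strictly between two consecutive directions of points of $S$ (general position), hence --- applying the structural fact to the odd-indexed points --- strictly between two \emph{cyclically consecutive} odd-indexed points of $S$, one in $L$ and one in $R$; these are exactly $l_1$ and $r_1$. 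Two cyclically consecutive odd-indexed points of $S$ have precisely one vertex of $S$ between them on $\partial\mbox{ch}(S)$, and by parity alternation that vertex is even-indexed: this is $e_1$. Replaying the argument with the ray of $G$ opposite to $v_S$, which is the terminal position of each sweep, produces the even-indexed point $e_2$ strictly between $l_2$ and $r_2$. As both $L$ and $R$ are nonempty, $G$ meets $\partial\mbox{ch}(S)$, so at least one of its two rays points into $\mbox{ch}(S)$; the two opposite rays then fall between different pairs of consecutive odd-indexed points, whence $e_1\neq e_2$. The degeneracies flagged after the statement are harmless: $l_1=l_2$ occurs exactly when $L$ carries a unique odd-indexed point, and the argument uses only that $l_1$ (resp.\ $r_1$) is the \emph{first} odd-indexed point of $L$ (resp.\ $R$) met by its sweep.

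The crux is the structural fact: controlling how far the direction order around $v_E$ can drift from the cyclic order on $\partial\mbox{ch}(S)$. Here the Case~2 hypothesis that there is no \emph{odd} front point does the work --- since the front points form a contiguous sub-chain of $\partial\mbox{ch}(S)$ and would therefore alternate in parity, there can be at most one front point and it must be even-indexed. The rest is routine parity bookkeeping, but one should still check the boundary situations named in the text (the ray $v_Ev_S$ entering $\mbox{ch}(S)$ or not; $v_E$ inside versus outside $\mbox{ch}(S)$): in each of them the two odd-indexed points selected are still cyclically consecutive, so the conclusion stands.
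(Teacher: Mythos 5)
Your argument is correct and rests on exactly the same two ingredients as the paper's proof: that in Case~2 every odd-indexed point is a back point and the rotating ray meets back points in their cyclic order along $\partial\mbox{ch}(S)$, so that $l_1$ and $r_1$ (resp.\ $l_2$ and $r_2$) are cyclically consecutive odd-indexed vertices, with parity alternation forcing a single even-indexed vertex between them. The only difference is presentational — you argue directly that the flanking odd points are consecutive, while the paper derives a contradiction from a hypothetical second point between $l_1$ and $r_1$ — so this is essentially the paper's proof.
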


\begin{proof}
If there was more than one point on $\partial ch (S)$ between $l_1$ and $r_1$, 
one of them would have an odd index.
Let w.l.o.g $o$ lie on the same side of $L(v_E,v_S)$ as $l_1$. Being odd, $l_1$ and $o$ must be back points, since no
odd front points exist in Case~2.
As the order in which back points of $L$ are encountered by the rotating rays 
coincides with their order on the boundary of the convex hull,
$o$ would have to be hit by the rotating ray before $l_1$, contradicting the definition of $l_1$. 
The same argument applies to $l_2$ and $r_2$.
\end{proof}

We will deal with a somewhat special subcase first.
\vspace{0.5\baselineskip}

{\bf Case~2A:  } Both of the following properties hold.

\noindent
1. One of $L$, $R$ contains exactly one point of $S$; its index is odd.\\
2. Point $e_1$ is a front point with respect to $v_E$.

\begin{figure}[hbtp]%
\begin{minipage}[t]{0.28\textwidth}
\begin{center}%
\includegraphics[width=\textwidth,keepaspectratio]{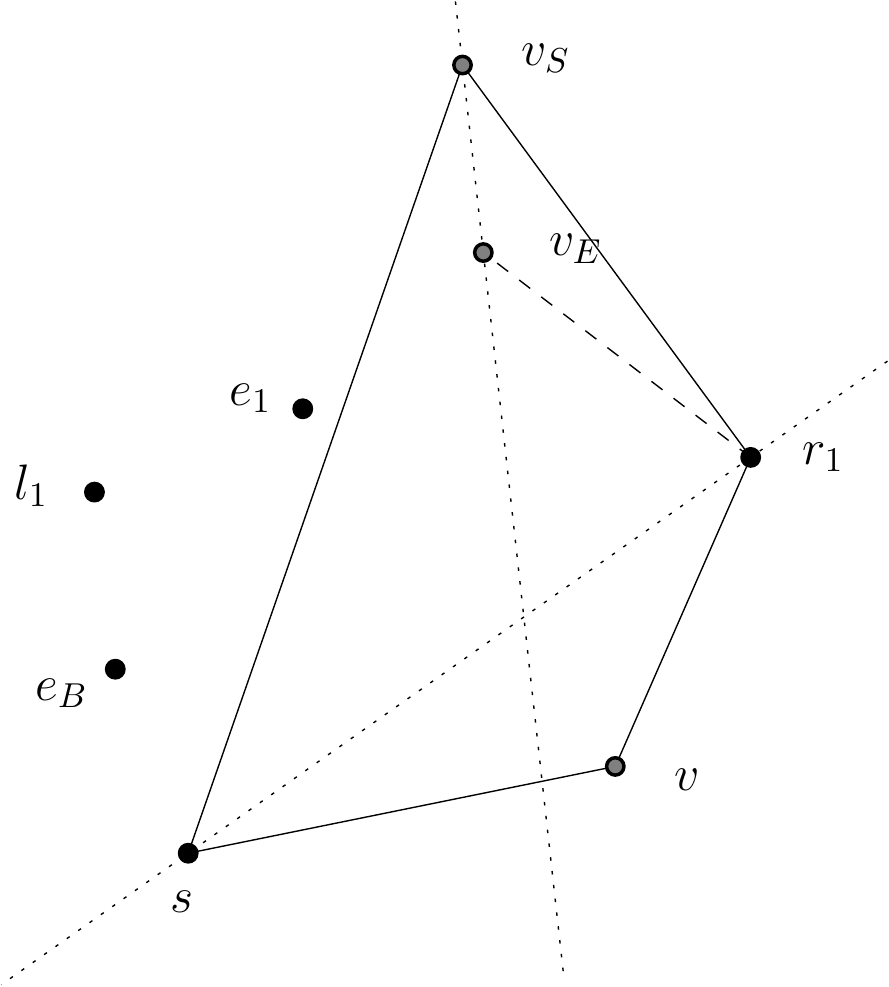}%

(i)
\end{center}%
\end{minipage}
\hfill
\begin{minipage}[t]{0.34\textwidth}
\begin{center}%
\includegraphics[width=\textwidth,keepaspectratio]{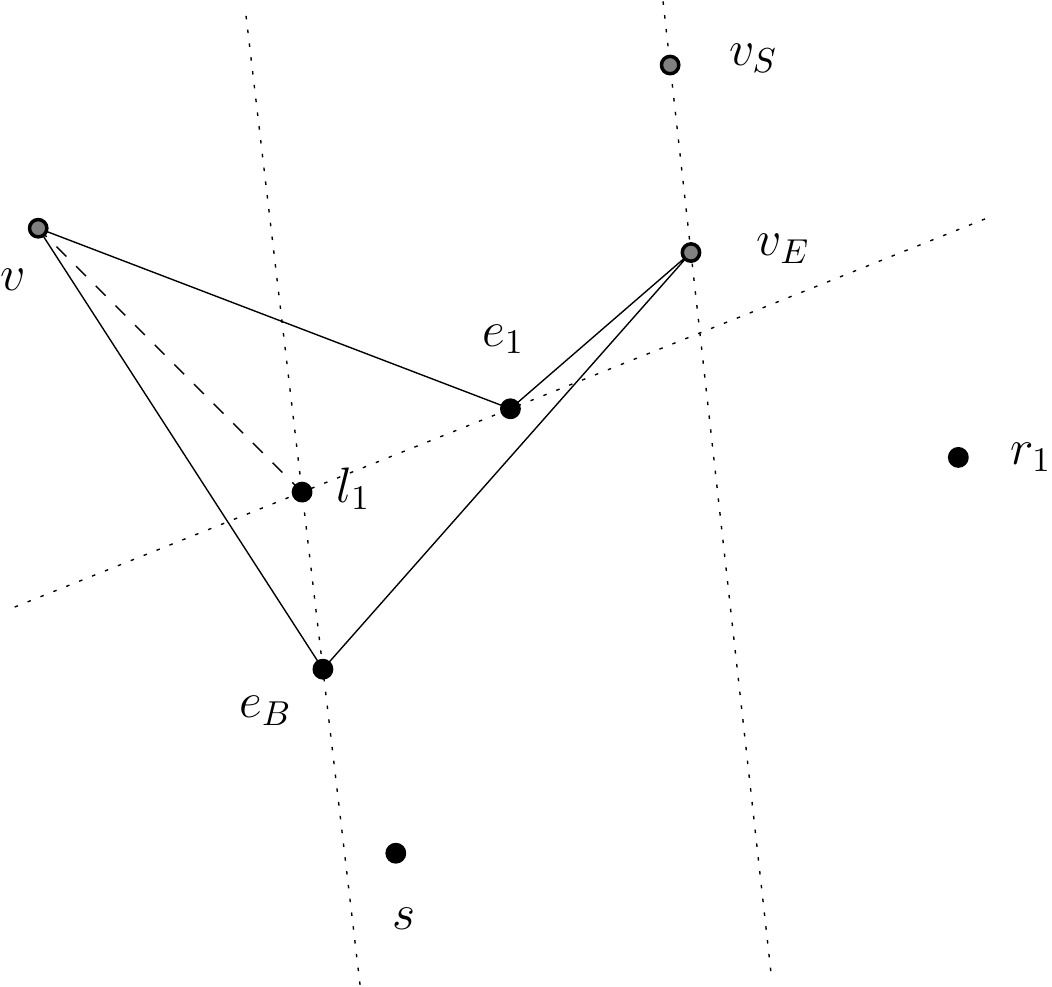}%

(ii)
\end{center}%
\end{minipage}
\hfill
\begin{minipage}[t]{0.34\textwidth}
\begin{center}%
\includegraphics[width=\textwidth,keepaspectratio]{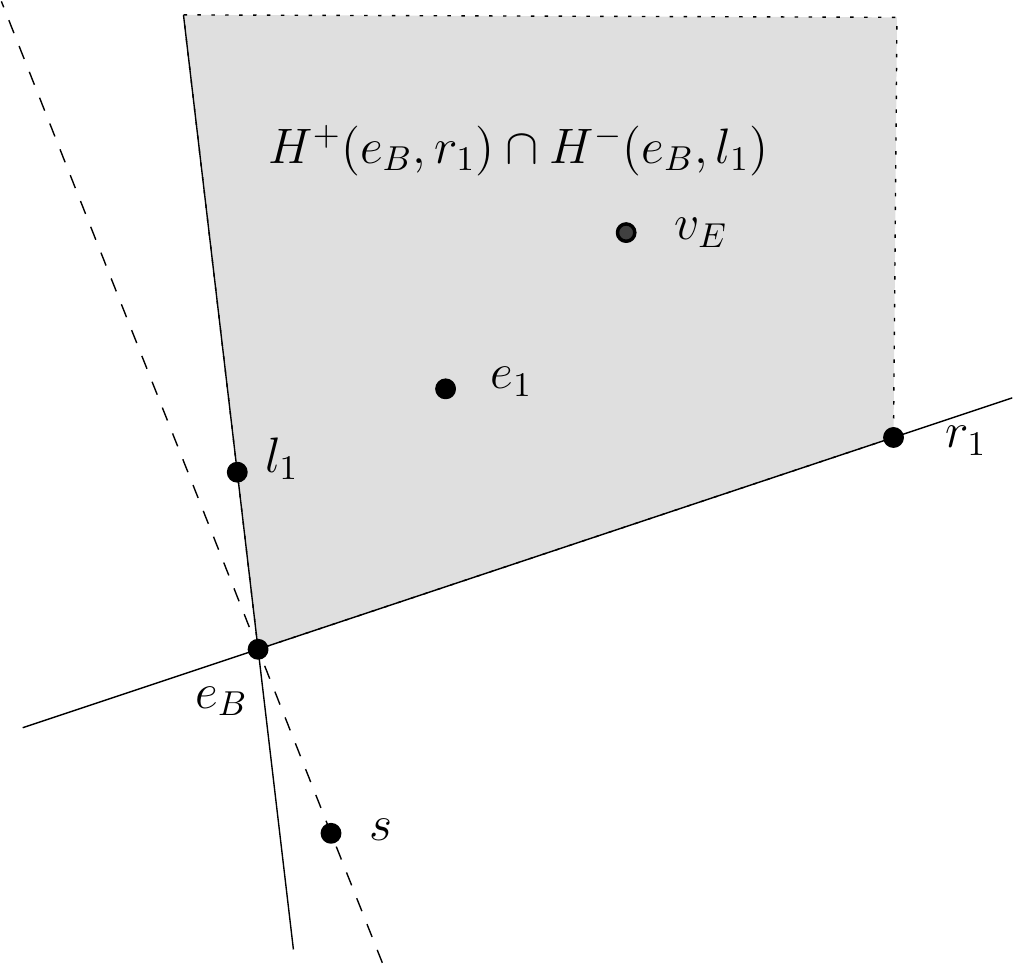}%

(iii)
\end{center}%
\end{minipage}
\caption{The proof of Case 2A.}
\label{NoOddFrontpointsEvenFrontpoint-fig}
\end{figure}

In this case, $v_E$ must lie outside $ch(S)$, and $e_1$ is the only front point with respect to $v_E$, so that $l_1$ and $r_1$ 
are tangent points, as seen from $v_E$.  Moreover, $v_E$ lies in the triangle given by $l_1,r_1$ and $v_S$ 
because otherwise $e_1$ would be a back point.

W.l.o.g., let $r_1$ be the only point in $R$, and let $L$ be situated to the right of the directed line from $v_S$ through $v_E$;
see Figure~\ref{NoOddFrontpointsEvenFrontpoint-fig}.
Then every view point $v$ that sees $r_1$ and some point $s$ of $L$ lies on the same side of $L(r_1,v_E)$ as $v_S$ does,
or $ \overline{r_1v_E}$ would be encircled by $r_1-v-s-v_S-r_1$. 
For the same reason every view point $v$ that sees $r_1$ and some point $s$ of $L$ lies on the same side of $L(s,r_1)$ as $v_S$ does,
see Figure \ref{NoOddFrontpointsEvenFrontpoint-fig} (i).

Also, $r_1$ does not see any other point $s\in S$, otherwise $\overline{r_1v_E}$ would be encircled by $r_1-s-v_S-r_1$.

Now let $e_B$ be the even-indexed neighbour of $l_1$ that is a back point, and let us set 
$Q=\mbox{vis}(e_B)\cap\mbox{vis}(e_1) \cap\mbox{vis}(r_1) \cap \mbox{vis}(l_1)^c$. 

Next, we want to show that every view point $v \in Q$ lies in $H^-(e_B,l_1)$;
see Figure~\ref{NoOddFrontpointsEvenFrontpoint-fig}~(ii). If this were wrong,
$v  \in H^-(r_1,e_B)\cap H^-(r_1,e_1)\subset H^-(r_1,l_1)$ would imply
$v \in H^+(l_1,e_1)$. Since $v_E$ obviously lies in $H^+(l_1,e_1)\cap H^-(e_B,l_1)$,
we could apply Lemma~\ref{tria-lem} to $(a,b,c,w)=(e_1,l_1,e_B,v_E)$, and obtain
$v \in H^-(e_B,l_1)$---a contradiction.

Now let us assume that, in addition to being in $Q$, view point $v$ sees a 
point $s \in S \setminus \{l_1, r_1, e_1, e_B\}$.
As $v$ lies in $H^-(r_1,e_B)$, it follows that $v\in H^-(s,e_B)\supset H^-(r_1,e_B)\cap H^-(e_B,l_1)$, see~(iii). 

On the other hand, $v$ also lies in $H^-(r_1,s)$ as already shown.

Summarizing, we have obtained a result analogous to Lemma~\ref{main-lem}. 

\begin{lemma} \label{case2A-lem}
Let $s \in S \setminus \{l_1, r_1, e_1, e_B\}$. Then each view point in $Q$ that 
sees $s$ lies in the wedge $U_s = H^-(s,e_B) \cap H^-(r_1,s)$.
\end{lemma}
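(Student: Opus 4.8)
The plan is to assemble the lemma from the half-plane memberships already collected for Case~2A together with a single application of Lemma~\ref{tria-lem}. Throughout, recall that $R=\{r_1\}$, so every point of $S$ other than $r_1$ --- in particular $s$, $e_1$, $e_B$, and $l_1$ --- lies in $L$. Fix a view point $v\in Q$ that also sees $s$; thus $v$ sees $e_B$, $e_1$, $r_1$, and $s$, but not $l_1$.

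The first and only substantial step is to show that \emph{every} view point $v\in Q$ already lies in $H^-(e_B,l_1)$. I would argue by contradiction: assume $v\in H^+(e_B,l_1)$. Since $v$ sees $r_1\in R$ together with $e_B,e_1\in L$, the encircling arguments proved just before the lemma (each obstructing $\overline{r_1v_E}$ by a chain routed through $v_S$) give $v\in H^-(r_1,e_B)\cap H^-(r_1,e_1)$. Because $S$ is in convex position and $l_1$ lies on the hull between $e_B$ and $e_1$ as seen from $r_1$, this intersection is contained in $H^-(r_1,l_1)$; combined with the assumption $v\in H^+(e_B,l_1)$, convex position then forces $v\in H^+(l_1,e_1)$. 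Now $v_E$ sees the front point $e_1$ and the even-indexed point $e_B$, and --- using the Case~2A hypotheses that $e_1$ is a front point and that $v_E$ lies in the triangle $l_1r_1v_S$ --- one checks $v_E\in H^+(l_1,e_1)\cap H^-(e_B,l_1)$. Hence Lemma~\ref{tria-lem} applies with $(a,b,c,w)=(e_1,l_1,e_B,v_E)$: the point $v$ sees $a=e_1$ and $c=e_B$, misses $b=l_1$, and lies in $H^+(b,a)$, so $v\in H^-(c,b)=H^-(e_B,l_1)$ --- contradicting the assumption. Therefore $v\in H^-(e_B,l_1)$.

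With this in hand, both inclusions of the lemma drop out. First, $v$ sees $r_1$ and $e_B$, so the same encircling fact yields $v\in H^-(r_1,e_B)$; together with $v\in H^-(e_B,l_1)$ and the convex-position inclusion $H^-(r_1,e_B)\cap H^-(e_B,l_1)\subseteq H^-(s,e_B)$ (the ray from the hull vertex $e_B$ through $s$ stays in the cone at $e_B$ spanned by the directions to $r_1$ and $l_1$), we get $v\in H^-(s,e_B)$. Second, $v$ sees $r_1$ and $s\in L$, so by the earlier fact that such view points lie on the $v_S$-side of the line through $r_1$ and $s$ --- that is, in $H^-(r_1,s)$ --- we obtain $v\in H^-(r_1,s)$. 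Combining, $v\in H^-(s,e_B)\cap H^-(r_1,s)=U_s$, as claimed.

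I expect the only genuinely non-routine part to be the intermediate claim $v\in H^-(e_B,l_1)$; the rest is bookkeeping with previously derived memberships and the cyclic order of $S$. Inside that step the delicate point is verifying that $v_E$ is an admissible witness for Lemma~\ref{tria-lem} --- that it sees both $e_1$ and $e_B$ and lies in $H^+(l_1,e_1)\cap H^-(e_B,l_1)$ --- which is precisely where the defining features of Case~2A (the front-point status of $e_1$ and the position of $v_E$ relative to $l_1$, $r_1$, $v_S$) enter the argument.
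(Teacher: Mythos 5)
Your proof is correct and follows essentially the same route as the paper: the decisive step is the identical application of Lemma~\ref{tria-lem} with $(a,b,c,w)=(e_1,l_1,e_B,v_E)$ to show that every $v\in Q$ lies in $H^-(e_B,l_1)$, after which $v\in H^-(s,e_B)\cap H^-(r_1,s)$ follows from the same encircling facts and convex-position inclusions. The only difference is expository: you spell out why $v_E\in H^+(l_1,e_1)\cap H^-(e_B,l_1)$ and why $H^-(r_1,e_B)\cap H^-(e_B,l_1)\subseteq H^-(s,e_B)$, which the paper treats as obvious.
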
%

Now the proof of Case~2A is completed by exactly the same arguments used
subsequently to Lemma~\ref{main-lem} in Section~\ref{outer-sec}.  

\bigskip

If one of the properties of Case~2A is violated, we obtain
the following, by logical negation.
\vspace{0.5\baselineskip}

{\bf Case~2B:  } At least one of the following properties holds.

\noindent
1. None of $L$, $R$ is a singleton set containing an odd indexed point.

\noindent 2. Point $e_1$ is a back point, as seen from $v_E$.

\medskip
Other than in the previous cases, we will now reduce visibility regions to {\em half-planes}, rather than to wedges.
We will show the existence of three points, $p_1, p_2, p_3$ in $S$, and of a halfplane $H_{i}$ 
for each, such that the following holds. 
Let $Q$ denote the set of view points that see at least $S \setminus \{p_1,p_2,p_3 \}$.
Then,
\begin{eqnarray}     \label{3halfplanes}
       \mbox{for each }  v \in Q:  \mbox{ for each } i = 1,2,3: \ \       v  \mbox{ sees } p_i  \Longleftrightarrow     v\in H_{i}.
\end{eqnarray}
Property~\ref{3halfplanes} leads to a contradiction, due to the following analogon of Theorem~\ref{isler-theo}.

\begin{lemma}     \label{isler-lem}
For any arrangement of three (or more) half-planes, there is a subset $T$ of half-planes for which no cell
is contained in exactly the half-planes of $T$.
\end{lemma}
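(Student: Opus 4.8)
The plan is to prove this by a counting argument, much simpler than the one behind Theorem~\ref{isler-theo}. The starting observation is that an arrangement of $n$ half-planes has exactly the same cells as the arrangement of the $n$ lines bounding them: every connected component of the complement of the union of those $n$ lines lies, for each index $i$, either entirely inside or entirely outside the $i$-th half-plane, and every maximal region with this property is such a component. So it is enough to bound the number of cells in an arrangement of $n$ lines.

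First I would invoke the classical fact that $n$ lines in the plane create at most $1 + n + \binom{n}{2}$ regions: adding the lines one at a time, the $k$-th line is cut by the earlier ones into at most $k$ pieces and hence splits at most $k$ old regions, so the total is at most $1 + \sum_{k=1}^n k = 1 + n + \binom{n}{2}$ (alternatively this follows from Euler's formula, as in the proof of Theorem~\ref{isler-theo}). Parallel lines, concurrent lines, or coinciding lines only lower this count, so $1 + n + \binom{n}{2}$ is an unconditional upper bound on the number of cells of an arrangement of $n$ half-planes.

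Finally I would compare this with the number $2^n$ of subsets of the half-planes. For $n = 3$ one has $1 + 3 + \binom{3}{2} = 7 < 8 = 2^3$, and a one-line induction extends this to all $n \geq 3$: passing from $n$ to $n+1$ adds exactly $n+1$ to the left-hand side while doubling the right-hand side, and $n + 1 \leq 2^n$. Hence an arrangement of $n \geq 3$ half-planes has strictly fewer than $2^n$ cells. Mapping each cell $c$ to the set of indices $i$ for which $c$ lies inside the $i$-th half-plane yields a map from cells to subsets whose image therefore misses at least one subset $T$; no cell is contained in exactly the half-planes of $T$, which is the assertion. The only points that need a little care are the identification of the cells of the half-plane arrangement with those of the underlying line arrangement and the remark that degeneracies cannot increase the cell count; there is no genuinely hard step here.
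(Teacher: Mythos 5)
Your argument is correct and is essentially the paper's own proof, which simply notes that three half-planes yield eight subsets but at most seven cells; you have merely filled in the standard $1+n+\binom{n}{2}$ bound for line/half-plane arrangements and extended the comparison with $2^n$ to all $n\geq 3$.
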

\begin{proof}
With three half-planes, we have eight subsets, but at most seven cells.
\end{proof}
While this fact is easier to prove, and somewhat more efficient, as we need only three points to derive a contradiction, 
it is harder to find points fulfilling Property~\ref{3halfplanes}. This will be our next task.
Again, we consider points in $L$ and points in $R$ separately. Let us discuss the situation for $L$.

We start by defining two points, $l_1'$ and $e'$. 
Suppose there is a point with odd index in $R$. We set $l_1'=l_1$. 
As we are in Case~2B, point $e_1$---situated between $l_1$ and $r_1$---is a back point,
or there is some point $e$ with even index in $R$. 
In the first case we set $e':=e_1$, in the latter case we set $e':=e$; see Figure~\ref{NomenklaturCaseB-fig}~(i) and~(ii).

If there is no point with odd index in $R$ then there are five points with odd index in $L$. We then set $l_1'$ to be the second point with 
odd index that was hit during the rotation of the half-line from $v_E$ through $v_S$. 
Then $l_1$ and $l_1'$  are distinct back points with respect to $v_E$. 
Between $l_1$ and $l_1'$ on the boundary of the convex hull there lies exactly one point $e$ that has even index. We set $e'=e$.
In this case, there are three points with even index on the convex hull between $l_1'$ and $l_2$.
Notice that $e'$ is a back point with respect to $v_E$; see Figure~\ref{NomenklaturCaseB-fig}~(iii).

In either case the points $l_1'$ and $l_2$ have odd indices, %in our original enumeration 
and the point $e'$ has even index and is either 
a back point with respect to $v_E$, or it lies in $R$.

\begin{figure}[hbtp]%
\begin{minipage}[t]{0.26\textwidth}
\begin{center}%
\includegraphics[width=\textwidth,keepaspectratio]{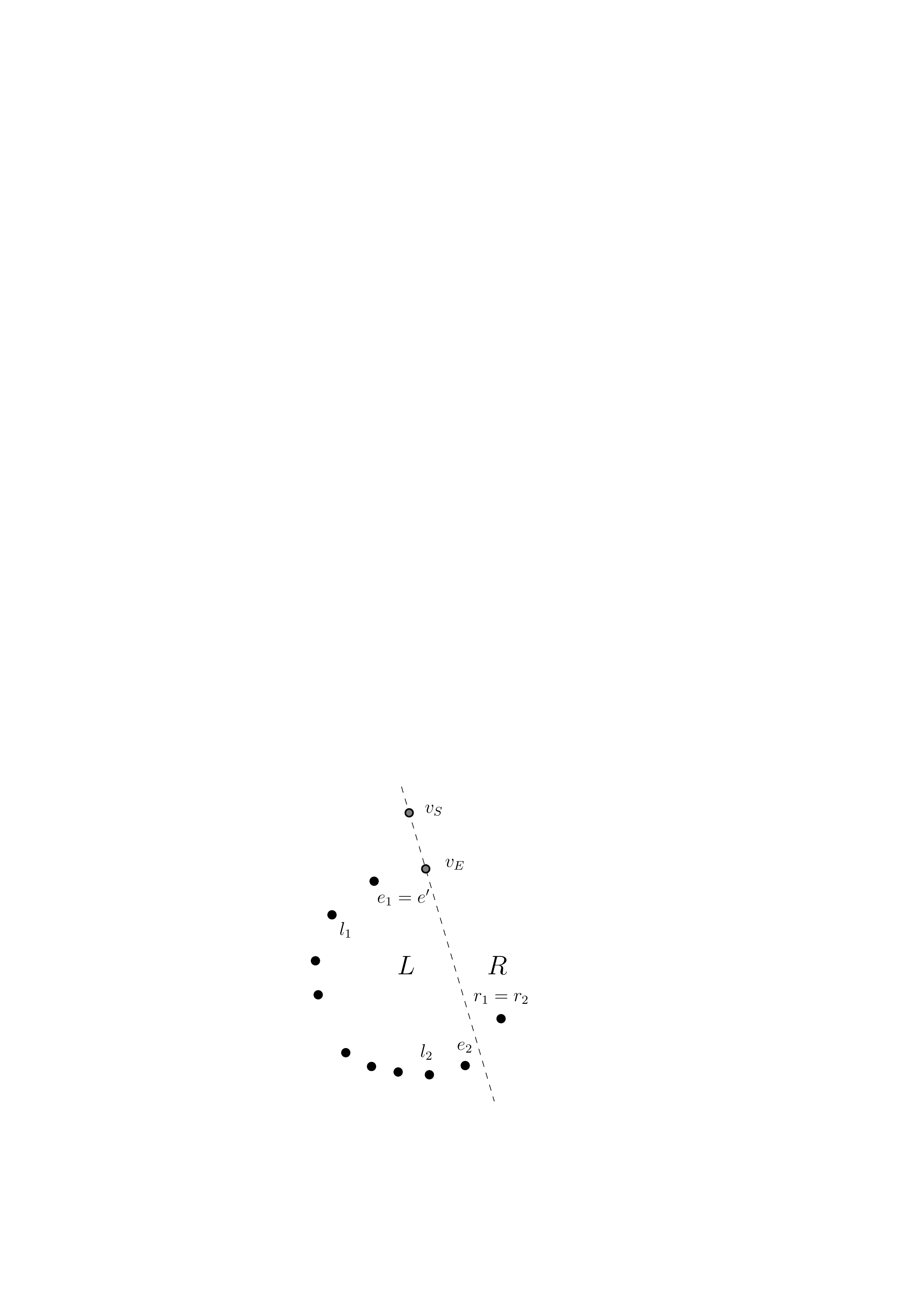}%

(i)
\end{center}%
\end{minipage}
\hfill
\begin{minipage}[t]{0.37\textwidth}
\begin{center}%
\includegraphics[width=\textwidth,keepaspectratio]{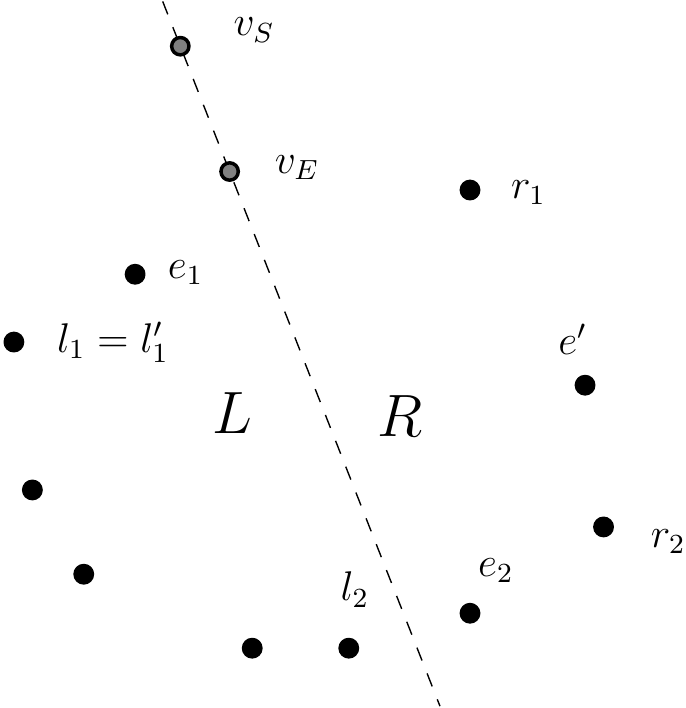}%

(ii)
\end{center}%
\end{minipage}
\hfill
\begin{minipage}[t]{0.28\textwidth}
\begin{center}%
\includegraphics[width=\textwidth,keepaspectratio]{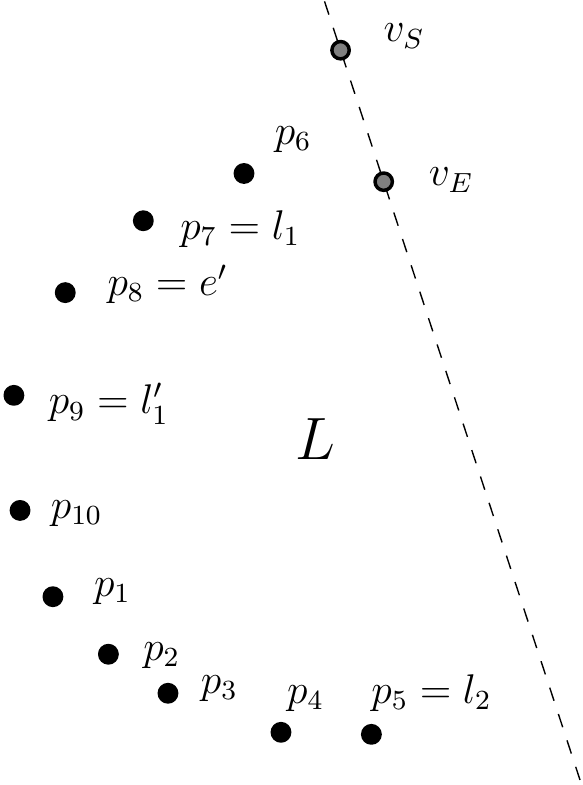}%

(iii)
\end{center}%
\end{minipage}
\caption{(i) If $e_1$ is a back point with respect to $v_E$ we set $e'=e_1$. 
(ii) Otherwise there is an even indexed point in $R$ we will
call $e'$.
(iii) If there is no (odd) point in R then $l_1'$ is the second odd indexed point and $e'$ is the (back) point between $l_1$ and $l_2$.}
\label{NomenklaturCaseB-fig}
\end{figure}

We will now prove the following.

\begin{lemma} \label{existenceHalfplanes-lem}
For all back points $p$ with even index that lie in the wedge given by the rays from $v_E$ through $l_1'$ and $l_2$ the following holds.
There is a half-plane $H_p$ such that every view point $v$ that sees $l_1'$ and $l_2$ sees $p$ if and only if $v\in H_p$.
The analogue holds if we replace $l_1'$ and $l_2$ by $r_1'$ and $r_2$.
\end{lemma}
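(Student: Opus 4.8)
The plan is to use the now-familiar ``encircling'' argument, together with Lemma~\ref{tria-lem}, to pin down the view points that see $l_1'$, $l_2$ and the even-indexed back point $p$. Fix such a $p$ lying in the wedge spanned by the rays from $v_E$ through $l_1'$ and $l_2$; by construction $p$ sits ``between'' $l_1'$ and $l_2$ in angular order about $v_E$, and all three of $l_1', l_2, p$ are back points with respect to $v_E$, so $v_E$ fails to see $p$ and in fact $p$ lies in the triangle $(v_E, l_1', l_2)$. The half-plane I will produce is $H_p := H^-(l_2', p)$ for the appropriate choice of the bounding line through $p$; concretely I expect $H_p$ to be one of the half-planes determined by the line through $p$ and an outermost view point seeing $\{l_1', l_2, p\}$, exactly as the sub-wedge $W$ was built from $w_1, w_2$ in Section~\ref{technique-sec}.

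First I would establish the ``only if'' direction in a strong form: any view point $v$ that sees $l_1'$ and $l_2$ but \emph{not} $p$ must lie outside a fixed half-plane through $p$. The argument is the standard one — if $v$ lay on the wrong side, the visibility chain $l_1' - v - l_2 - w - l_1'$ (where $w$ is a reference view point seeing all of $l_1', l_2, p$, e.g.\ the view point seeing every point of $S$, or $v_E$'s role played by a suitable witness) would encircle the segment $\overline{v p}$, contradicting that $v$ does not see $p$. Here I would invoke Claim~\ref{prep2-claim} (or the reasoning behind it) to first confine every view point seeing $l_1'$ and $l_2$ to a wedge, and then apply Lemma~\ref{tria-lem} with $(a,b,c,w) = (l_1', \cdot, l_2, \cdot)$ or with $p$ playing the role of the ``blocked'' vertex $b$, to upgrade the wedge bound to a single half-plane $H_p$. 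The ``if'' direction is symmetric and easier: if $v$ sees $l_1', l_2$ and lies in $H_p$, then the chain $v - l_1' - w_1 - p - w_2 - l_2 - v$ encircles $\overline{vp}$, forcing $v \in \mbox{vis}(p)$, where $w_1, w_2$ are the outermost witnesses; this is verbatim the derivation of Fact~\ref{wedge-fact}.

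The case distinction from the definitions of $l_1'$ and $e'$ is what makes the bookkeeping delicate, so the second step is to check that the witnesses and reference points needed above actually exist and are positioned correctly in each of the three configurations of Figure~\ref{NomenklaturCaseB-fig}. When there is an odd point in $R$ we have $l_1' = l_1$ and the tangency structure from Case~1 is available; when there is no odd point in $R$ we instead use that $l_1'$ is the second odd point swept and that $e'$ is the even back point strictly between $l_1$ and $l_1'$, so that the angular order about $v_E$ is $l_1, e', l_1', \ldots, l_2$ and every relevant $p$ still lies inside triangle $(v_E, l_1', l_2)$. In each case the key is that the order in which back points of $L$ are met by the rotating half-line coincides with their order on $\partial\mbox{ch}(S)$, which is exactly the fact used in the unnamed lemma just before Case~2A. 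The main obstacle I anticipate is precisely this last verification — confirming, uniformly across the subcases, that $p$ is ``sandwiched'' so that Lemma~\ref{tria-lem} applies with the stated orientation — rather than the encircling arguments themselves, which are routine by this point in the paper. Finally, the statement for $r_1', r_2$ is obtained by the mirror-image argument, swapping the roles of $L$ and $R$, so no separate work is needed there.
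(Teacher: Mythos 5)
There is a genuine gap here, and it sits exactly at the point your proposal treats as routine. The lemma asserts a \emph{half-plane} characterization of $\mbox{vis}(p)$ on the set of view points seeing $l_1'$ and $l_2$ --- and that is the whole reason Case~2B works with only three remaining points (three half-planes give at most seven cells, Lemma~\ref{isler-lem}, whereas three wedges could still shatter three points). Your encircling chain $l_1'-v-l_2-w-l_1'$ is the Section~\ref{technique-sec} argument, and it only ever confines view points to a \emph{wedge} with apex $p$; together with the outermost witnesses $w_1,w_2$ it reproduces Fact~\ref{conclu}, i.e.\ a sub-wedge $W_p$, not a half-plane. Lemma~\ref{tria-lem} does not ``upgrade a wedge to a half-plane'' either: it converts one half-plane hypothesis on $v$ into another half-plane conclusion, given a witness $w$ already known to sit in a specific intersection of half-planes, and you have not produced such a witness. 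The paper's proof supplies the missing half of the half-plane by two ingredients that do not appear in your proposal: first, the crossing point $c$ of $\overline{pv_E}$ and $\overline{l_2v_S}$, whose existence is itself a claim requiring the convexity of the quadrilateral $v_E,l_1',p,l_2$ and an encircling argument through the specially chosen point $e'$ (this is why $e'$ was defined at all); second, the asymmetric statement (Lemma~\ref{Case1aHalfplaneSummary}) that no view point on the far side of $L(p,v_2)$ sees $l_1'$ and $l_2$ without seeing $p$, proved by the six-link chain $p-c-l_2-v'-l_1'-v_S-p$ running through $c$. That step is what extends one boundary ray of the sub-wedge to a full line and turns $W_p$ into $H_p$; without it the lemma is simply not proved.

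Two further points. Your setup claim that $p$ lies in the triangle $(v_E,l_1',l_2)$ is false in general: $l_1',p,l_2$ are back points in convex position, so $p$ lies on the far side of the chord $\overline{l_1'l_2}$ from $v_E$, outside that triangle. More importantly, $v_E$ cannot play the role of the Section~\ref{technique-sec} witness at all, since $v_E$ \emph{sees} the even-indexed point $p$ and does \emph{not} see the odd-indexed $l_1',l_2$ --- the seen/unseen roles are inverted relative to that template. This is precisely why the paper has to route everything through $v_S$, $e'$ and the crossing point $c$ rather than through the triangle argument, and why the obstacle you flagged (the bookkeeping for $l_1'$ and $e'$ across the subcases) is not the real difficulty.
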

 
 Before we prove Lemma~\ref{existenceHalfplanes-lem}, we first use it to derive the following consequence. As explained before, it 
 provides us with a contradiction, thus proving Case~2B of Lemma \ref{outer-lem} and completing all proofs.
 
 \begin{lemma}
 There are three points $p_1,p_2,p_3\in S$ and half-planes $H_{1},H_{2},H_{3}$ that satisfy Property~\ref{3halfplanes}.
 \end{lemma}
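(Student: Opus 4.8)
The plan is to assemble the three points and half‑planes by treating the $L$‑side and the $R$‑side symmetrically, using Lemma~\ref{existenceHalfplanes-lem} twice and then combining with the same ``sacrifice'' idea that produced $Q$ in the earlier cases. First I would fix $Q$ to be the set of view points that see at least $S \setminus \{l_1', l_2, r_1', r_2\}$, \emph{unless} these four points are not distinct (recall that $R$ may contain no odd point, in which case $r_1', r_2$ are not defined and we instead sacrifice $l_1', l_2$ together with the second pair of odd points of $L$, or — in the degenerate cases where $l_1 = l_2$ or only one of $l_1', l_2$ exists — we sacrifice fewer points and have correspondingly more free points). In the generic situation four points are sacrificed, so $S \setminus \{l_1', l_2, r_1', r_2\}$ contains six points, of which we only need three.

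Next I would observe that every $v \in Q$ sees $l_1'$ and $l_2$ and also sees $r_1'$ and $r_2$ (when the latter exist), since $Q$ prescribes visibility of all of $S$ except the four sacrificed points. Hence Lemma~\ref{existenceHalfplanes-lem} applies verbatim: for every even‑indexed back point $p$ lying in the wedge spanned by the rays from $v_E$ through $l_1'$ and $l_2$ there is a half‑plane $H_p$ with $v$ sees $p \iff v \in H_p$, for all $v \in Q$; and symmetrically on the $R$‑side. The task then reduces to exhibiting three points of $S \setminus \{l_1', l_2, r_1', r_2\}$ that are \emph{each} even‑indexed back points lying in one of the two relevant wedges. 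I would count: the even‑indexed points of $S$ that are back points and not among the sacrificed points, split according to whether they fall in the $L$‑wedge or the $R$‑wedge. In Case~2B property~1 guarantees that $L$ and $R$ are ``large enough'' (not a singleton odd point), and property~2, together with the bookkeeping done when defining $l_1', e', r_1'$, ensures the wedges between $l_1'$ and $l_2$ (resp.\ $r_1'$ and $r_2$) contain enough even‑indexed back points; the point $e'$ was introduced precisely to be one such witness on the $L$‑side. A short pigeonhole over the at most ten points of $S$ — five odd, five even; odd points that are back points cannot be in the open wedges by the argument used in the lemma above it; front points are ruled out because only $e_1$ can be a front point and it is handled — then yields at least three suitable even back points across the two wedges. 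Taking $p_1, p_2, p_3$ to be any three of them and $H_i := H_{p_i}$ gives Property~\ref{3halfplanes}.

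The main obstacle I expect is the case analysis forced by the degeneracies: $l_1 = l_2$, the nonexistence of $r_1', r_2$, the situation where $v_S$'s position puts the rotating half‑line through $S$ in its start position, and the borderline between ``$e_1$ is a back point'' and ``there is an even point $e$ in $R$'' in the definition of $e'$. Each of these changes how many points are sacrificed and how the even back points are distributed between the two wedges, so I would organise the count as: let $k$ be the number of sacrificed points ($k \le 4$), note $S$ has $5$ even‑indexed points, subtract those that are front points (at most one, namely $e_1$, and only when property~2 of Case~2A fails — but we are in Case~2B) and those among the sacrificed set, and check in each branch that what remains, intersected with the union of the two wedges, has size at least $3$. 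The construction of $e'$ and the choice of $l_1'$ as the \emph{second} odd point in the all‑odd‑in‑$L$ branch were tailored to make exactly this count work, so the verification is mechanical once the branches are laid out; the only real care needed is to confirm that each retained even back point genuinely lies in one of the two wedges spanned from $v_E$, which follows from its position on $\partial \mathrm{ch}(S)$ relative to the tangent/odd points delimiting those wedges.
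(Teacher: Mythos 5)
Your overall strategy is the paper's: the intended proof is exactly the counting argument you describe---locate three even-indexed back points inside the two wedges spanned from $v_E$ by $l_1',l_2$ and by $r_1',r_2$, and let Lemma~\ref{existenceHalfplanes-lem} supply the half-planes. The count is the one you sketch: if both $L$ and $R$ contain odd-indexed points, exactly one even point ($e_1$) lies between $l_1'$ and $r_1'$ and one ($e_2$) between $l_2$ and $r_2$, so the remaining three of the five even points lie in the wedges; if all odd points lie in, say, $L$, then four even points lie between $l_1$ and $l_2$, three of them between $l_1'$ and $l_2$.

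However, your handling of $Q$ is backwards as written, and this is the one step that would actually fail. You set $Q$ to be the view points that see at least $S\setminus\{l_1',l_2,r_1',r_2\}$ and then assert that every $v\in Q$ sees $l_1'$ and $l_2$. ``Sees at least $S\setminus X$'' prescribes visibility of everything \emph{outside} $X$ and says nothing about the points of $X$ themselves, so your $Q$ contains view points that may fail to see $l_1'$ or $l_2$, and Lemma~\ref{existenceHalfplanes-lem} cannot be applied to them ``verbatim.'' Moreover, Property~\ref{3halfplanes} does not leave you free to redefine $Q$: once $p_1,p_2,p_3$ are chosen, $Q$ is the set of view points seeing at least $S\setminus\{p_1,p_2,p_3\}$. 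The correct (and immediate) observation is that, because the $p_i$ are even-indexed, every $v$ in \emph{that} $Q$ sees all odd-indexed points, in particular $l_1',l_2,r_1',r_2$, which is exactly the hypothesis of Lemma~\ref{existenceHalfplanes-lem}. Two smaller inaccuracies worth fixing: $e'$ is generally not one of the three witnesses (it lies between $l_1$ and $l_1'$, or equals $e_1$, or lies in $R$; it serves as an auxiliary point inside the encirclement arguments proving Lemma~\ref{existenceHalfplanes-lem}), and odd back points \emph{can} lie strictly inside the wedge between $l_1'$ and $l_2$ (for instance when all five odd points lie in $L$)---harmlessly so, since only even-indexed points enter the count.
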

 \begin{proof}
 If there are odd points in both $L$ and $R$, then there is exactly one even point between $l_1'$ and $r_1'$ and one even point 
 between $l_2$ and $r_2$ and all other even points lie between the rays from $v_E$ through $l_1'$ and $l_2$ and through $r_1'$ and $r_2$, respectively. By Lemma \ref{existenceHalfplanes-lem}, we get that the remaining three even-indexed points have the desired property.
 
 If there is no odd point in $R$ or in $L$, then there are four even-indexed points between $l_1$ and $l_2$ or between $r_1$ and $r_2$ and therefore there are three points with the desired property between $l_1'$ and $l_2$ or between $r_1'$ and $r_2$.
 \end{proof}

\begin{proof}
To prove Lemma \ref{existenceHalfplanes-lem} let $e\in S$ be a point with even index that lies between $l_1'$ and $l_2$. Points $e$ and $v_S$ lie on opposite sides of $L(l_1',v_E)$, by the definition of $l_1'$.

\begin{figure}[hbtp]%
\begin{minipage}[t]{0.30\textwidth}
\begin{center}%
\includegraphics[width=\textwidth,keepaspectratio]{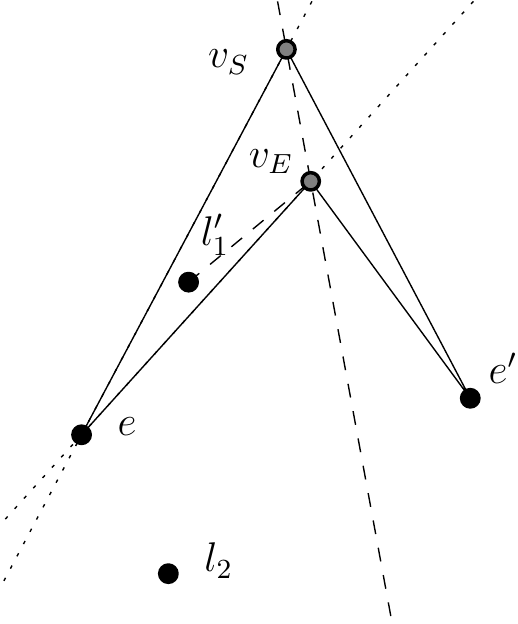}%

(i)
\end{center}%
\end{minipage}
\hfill
\begin{minipage}[t]{0.23\textwidth}
\begin{center}%
\includegraphics[width=\textwidth,keepaspectratio]{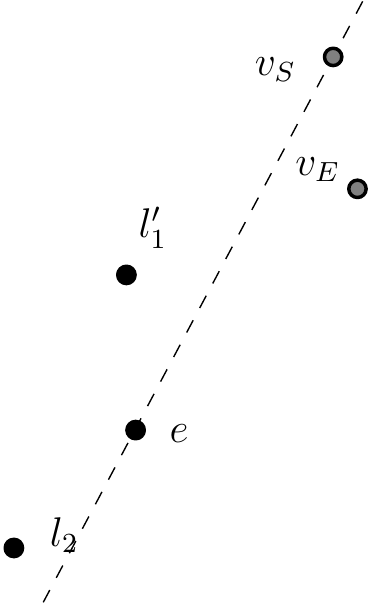}%

(ii)
\end{center}%
\end{minipage}
\hfill
\begin{minipage}[t]{0.28\textwidth}
\begin{center}%
\includegraphics[width=\textwidth,keepaspectratio]{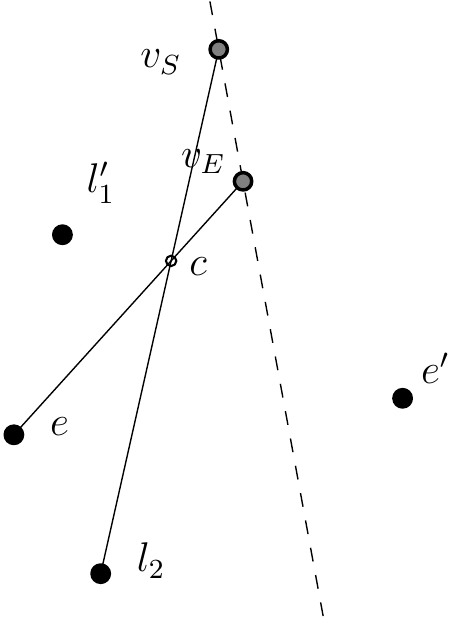}%

(iii)
\end{center}%
\end{minipage}
\caption{(i) $l_1'$ cannot lie between the rays $L(e,v_S)$ and $L(e,v_E)$. 
(ii) $l_1'$ and $l_2$ can not lie on the same side of $L(e,v_S)$.
(iii) So there must be an intersection between $\overline{ev_E}$ and $\overline{l_2v_S}$.}
\label{CaseBIntersectClaim-fig}
\end{figure}

\begin{claim}In this situation the segments $\overline{ev_E}$ and $\overline{l_2v_S}$ intersect in a point $c$. 
\end{claim}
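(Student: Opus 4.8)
The plan is to reduce the claim to the elementary incidence criterion for two segments: $\overline{ev_E}$ and $\overline{l_2v_S}$ meet if and only if (a) $v_S$ and $l_2$ lie on opposite sides of the line $L(e,v_E)$, and (b) $e$ and $v_E$ lie on opposite sides of the line $L(l_2,v_S)$ (general position rules out the degenerate cases). Condition (a) is immediate from the construction: when the half-line about $v_E$ is rotated away from $v_S$ over $L$, the back points of $L$ are met in their cyclic order on $\partial\mathrm{ch}(S)$, so the rays from $v_E$ to $v_S$, $l_1'$, $e$, $l_2$ occur in exactly this angular order, and all of them lie inside an arc of opening less than $\pi$ since $L$ is contained in one open half-plane bounded by $G=L(v_E,v_S)$. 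In particular the ray from $v_E$ through $e$ separates $v_S$ and $l_2$ and meets the segment $\overline{v_Sl_2}$ in a unique point $c$; so everything reduces to condition (b), equivalently to the statement that $c$ lies strictly between $v_E$ and $e$.

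For (b) I would first establish the auxiliary fact shown in Figure~\ref{CaseBIntersectClaim-fig}(i): the point $l_1'$ does not lie in the wedge at $e$ bounded by the rays $L(e,v_S)$ and $L(e,v_E)$ that faces the segment $\overline{v_Ev_S}$. This is the step that actually uses the definition of $l_1'$. Since $l_1'$ is one of the first two odd-indexed points met by the sweeping half-line, at most the single point $l_1$ can precede it; together with the convex position of $S$ this pins down enough of the cyclic order of $v_E,v_S,l_1',e$ and of the hull arc through them that putting $l_1'$ into that wedge would either make $e$ and $v_S$ fall on the same side of $L(l_1',v_E)$ (whereas the text has just observed they lie on opposite sides) or violate the order in which the back points of $L$ are swept. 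Because the angular order of Step~1 already forces $l_1'$ to the same side of $L(e,v_E)$ as $v_S$, avoiding that wedge is possible only if $l_1'$ lies on the side of $L(e,v_S)$ opposite to $v_E$.

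Next (Figure~\ref{CaseBIntersectClaim-fig}(ii)) I would show that $l_1'$ and $l_2$ cannot lie on the same side of $L(e,v_S)$: if they did, then since $l_1'$, $e$, $l_2$ occur in this order on the convex hull, and only a bounded, explicitly known set of points of $S$ lies between $l_1'$ and $l_2$ there (a feature of the Case~2B construction), the line $L(e,v_S)$ through the vertex $e$ would contradict the convex position of these hull vertices. Hence $l_2$ lies on the side of $L(e,v_S)$ opposite to $l_1'$, that is, on the same side as $v_E$ by the previous paragraph. Rewriting this last statement in terms of the line $L(l_2,v_S)$ is exactly condition (b); combined with (a) it gives that $\overline{ev_E}$ and $\overline{l_2v_S}$ cross, and the crossing point is the point $c$ produced in Step~1 (Figure~\ref{CaseBIntersectClaim-fig}(iii)).

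The routine parts are Step~1 and the final rewriting. The real obstacle is Figure~\ref{CaseBIntersectClaim-fig}(i) — turning ``$l_1'$ is among the first odd-indexed points swept'' together with convex position into ``$l_1'$ avoids the small wedge at $e$'' — and, to a lesser extent, the convex-position argument behind Figure~\ref{CaseBIntersectClaim-fig}(ii); both force a careful bookkeeping of the angular orders around $v_E$ and of the location of $v_S$ relative to the lines $L(l_1',v_E)$ and $L(e,v_S)$.
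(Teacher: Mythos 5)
There is a genuine gap, and it sits exactly where you located ``the real obstacle.'' Your overall skeleton matches the paper's: condition (a) comes from the sweep order defining $l_2$; the crux is the auxiliary fact of Figure~\ref{CaseBIntersectClaim-fig}~(i), that $l_1'$ does not lie on the same side of $L(e,v_S)$ as $v_E$; and step (ii) is the convex-quadrilateral/tangent-line argument on $v_E,l_1',e,l_2$. But your proposed route to the auxiliary fact --- deriving it from convex position plus the angular order of the sweep --- cannot work, because that fact is not a consequence of the point configuration alone. Concretely, take $v_E=(0,0)$, $v_S=(0,-10)$, $l_1'=(1,-5)$, $e=(3,-3.5)$, $l_2=(5,-1)$: the sweep from the $v_S$-direction meets $l_1'$, then $e$, then $l_2$ in that order; $v_E,l_1',e,l_2$ are in convex position; $e$ and $v_S$ lie on opposite sides of $L(l_1',v_E)$; and yet $l_1'$ lies inside the triangle $(v_E,e,v_S)$, hence on the same side of $L(e,v_S)$ as $v_E$. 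So neither horn of your dichotomy (``$e$ and $v_S$ on the same side of $L(l_1',v_E)$'' or ``sweep order violated'') is forced.

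What actually rules this configuration out is a visibility argument, and it is the only place where the carefully prepared point $e'$ enters: $v_E$ sees $e$ and $e'$ (both even-indexed), $v_S$ sees everything, and $e'$ lies on the opposite side of $G=L(v_E,v_S)$ from $L$, so $e-v_E-e'-v_S-e$ is a closed cycle of visibility segments whose interior contains the triangle $(v_E,e,v_S)$ on the $L$-side. If $l_1'$ lay in the offending wedge, the segment $\overline{l_1'v_E}$ would be encircled by this cycle, so the boundary of $P$ could not cross it --- contradicting that $v_E$ does not see the odd-indexed point $l_1'$. Your proposal never invokes $e'$ or any encirclement argument at this step, so as written it cannot close. (Your justification of step (ii) is also slightly off --- a line through the vertex $e$ with $l_1'$ and $l_2$ on one side does not by itself contradict convexity; it only forces $v_E$ onto that same side, which contradicts step (i) --- but that is a presentational slip rather than a gap.)
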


\begin{proof}The segment $\overline{l_2v_S}$ intersects the line $L(e,v_E)$ by definition of $l_2$. It remains to show that $l_2$ does neither lie on the side of $L(v_S,v_E)$ opposite to $e$ nor on the side of $L(v_S,e)$ opposite to $v_E$.
As $l_2$ and $e$ both belong to $L$, the first assertion follows.
For the second one, notice that $l_1'$ cannot lie on the same side of $L(e,v_S)$ as $v_E$ does because otherwise  $\overline{l_1'v_E}$ would be encircled by $e-v_E-e'-v_S$, see Figure \ref{CaseBIntersectClaim-fig} (i). But $l_1'$ and $l_2$ cannot both lie on the side of $L(e,v_S)$ opposite to $v_E$: Because $l_1',l_2,e$ are backpoints, $v_E,l_1',e$ and $l_2$ are the corners of a convex quadrilateral. If $l_1'$ and $l_2$ lie on the same side of a line through $e$, this line must be a tangent to this quadrilateral and therefore $l_1',l_2$ and $v_E$ would have to lie on the same side of this line, see (ii).
So the segment $\overline{ev_E}$ crosses $\overline{l_2v_S}$ in a point $c$.
\end{proof}

\begin{figure}[hbtp]%
\begin{minipage}[t]{0.30\textwidth}
\begin{center}%
\includegraphics[width=\textwidth,keepaspectratio]{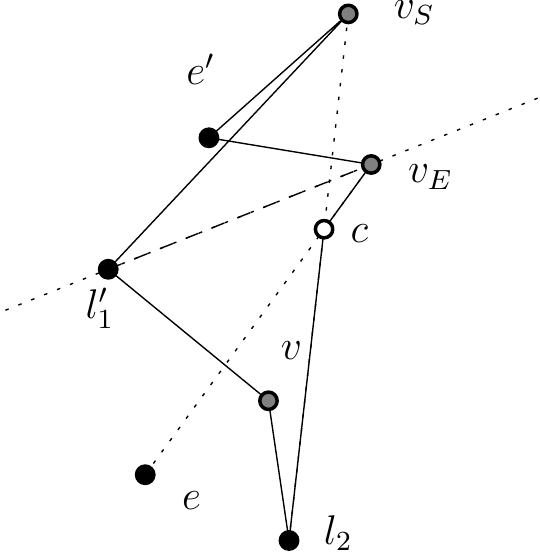}%

(i)
\end{center}%
\end{minipage}
\hfill
\begin{minipage}[t]{0.20\textwidth}
\begin{center}%
\includegraphics[width=\textwidth,keepaspectratio]{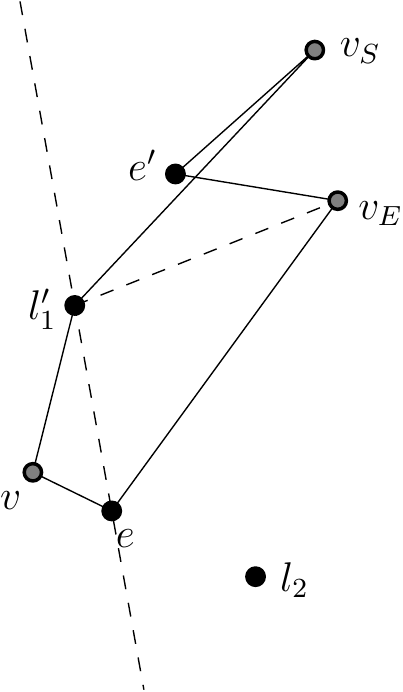}%

(ii)
\end{center}%
\end{minipage}
\hfill
\begin{minipage}[t]{0.28\textwidth}
\begin{center}%
\includegraphics[width=\textwidth,keepaspectratio]{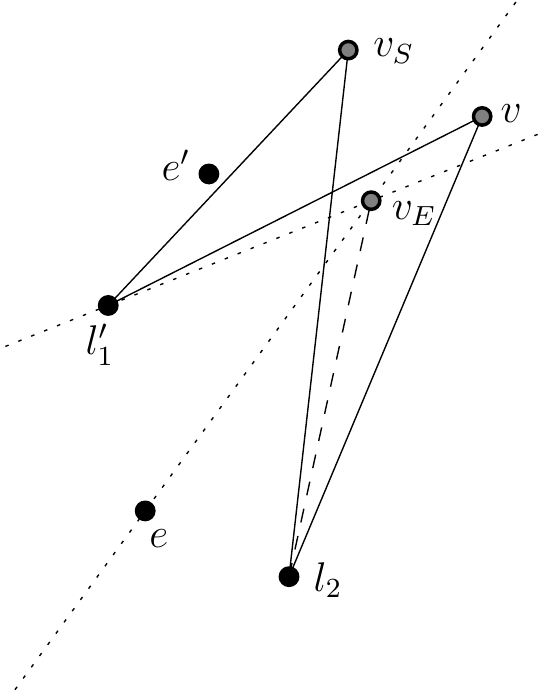}%

(iii)
\end{center}%
\end{minipage}
\caption{ (i)$v$ and $v_S$ must lie on the same side of $L(l_1',v_E)$.  
(ii) $v$ and $v_S$ must lie on the same side of $L(e,l_1')$.
(iii) $v$ and $v_S$ must lie on the same side of $L(e,v_E)$.}
\label{CaseBWedgeLemma-fig}
\end{figure}

Now it follows that every view point $v$ that sees $l_1'$ and $l_2$ lies on the same side of $L(l_1',v_E)$ as $v_S$ does, because otherwise the segment $\overline{l_1'v_E}$ would be encircled by $v_E-c-l_2-v-l_1'-v_S-e'-v_E$, see Figure \ref{CaseBWedgeLemma-fig} (i).

It also follows that every view point that sees $l_1',l_2$ and $e$ has to lie on the same side of $L(e,l_1')$ as $v_S$ does, because otherwise $\overline{l_1'v_E}$ would be encircled by $v_E-e'-v_S-l_1'-v-e-v_E$, see Figure \ref{CaseBWedgeLemma-fig} (ii).

\begin{claim} \label{l_1l_2-claim} Every view point $v$ that sees $l_1'$ and $l_2$ lies on the same side of $L(e,v_E)$ as $v_S$ does. 
\end{claim}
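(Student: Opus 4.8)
I would prove Claim~\ref{l_1l_2-claim} by contradiction, assuming that $v$ and $v_S$ lie on opposite sides of $L(e,v_E)$, and then exhibiting a closed chain of visibility segments that surrounds the segment $\overline{l_1'v_E}$. Since $v_E$ does not see the odd-indexed point $l_1'$, the segment $\overline{l_1'v_E}$ is crossed by $\partial P$; on the other hand, a closed chain of visibility segments bounds a region lying in the simply connected polygon $P$, so any segment it surrounds is contained in $P$. This is the contradiction, obtained by the same mechanism as the two preceding facts.

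First I would record what the assumption buys us. The point $c$ lies on $\overline{ev_E}\subset L(e,v_E)$ and, in general position, strictly between $v_S$ and $l_2$ on $\overline{l_2v_S}$; hence $v_S$ and $l_2$ lie on opposite sides of $L(e,v_E)$, so by assumption $v$ lies on the same side of $L(e,v_E)$ as $l_2$. By the statement already proven, $v$ lies on the same side of $L(l_1',v_E)$ as $v_S$. Finally, the rays from $v_E$ through $l_1'$, $e$, $l_2$ occur in this angular order: this is immediate from the definition of $l_1',l_2$ as the first and last odd-indexed points of $L$ met by the rotating half-line, together with the fact (already used in Lemma~\ref{existenceHalfplanes-lem}) that $e$ lies in the wedge at $v_E$ spanned by $l_1'$ and $l_2$. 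A short angle count from these three facts shows that $v_E$ lies in the interior of the triangle $v\,l_1'\,l_2$.

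Now the enclosing chain: the segments $\overline{vl_1'}$, $\overline{l_1'v_S}$, $\overline{v_Sl_2}$, $\overline{l_2v}$ are all visibility segments ($v$ sees $l_1',l_2$; $v_S$ sees all of $S$), so they form a closed chain $v-l_1'-v_S-l_2-v$. When $v_S$ lies on the side of $L(l_1',l_2)$ opposite to $v$, this chain is a simple quadrilateral containing the triangle $v\,l_1'\,l_2$ --- hence $v_E$ --- in its interior, and $\overline{l_1'v_E}$ does not meet its boundary (it lies on the $v$-side of $L(l_1',l_2)$ and touches that line only at $l_1'$, while $\overline{l_1'v_S},\overline{v_Sl_2}$ lie on the far side and $\overline{vl_1'},\overline{l_2v}$ have both $v_E$ and $l_1'$ strictly on their interior side). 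So $\overline{l_1'v_E}$ is surrounded, and the contradiction follows. When $v_S$ is placed on the $v$-side of $L(l_1',l_2)$, I would instead replace the sub-path $l_1'-v_S-l_2$ by a detour through the even-indexed point $e'$ (seen by $v_E$) and the point $c\in\overline{ev_E}$, obtaining a polygon in the spirit of the chains $v_E-c-l_2-v-l_1'-v_S-e'-v_E$ and $v_E-e'-v_S-l_1'-v-e-v_E$ used for the previous two facts, which passes around $v_E$ with $\overline{l_1'v_E}$ as an interior diagonal; here the hypothesis ``$v$ on the $l_2$-side of $L(e,v_E)$'' is exactly what forces the polygon to be simple.

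The main obstacle I anticipate is the routine-but-delicate verification, especially in the second case, that the enclosing polygon is simple and that $\overline{l_1'v_E}$ truly lies inside it for every admissible position of $v_S$ (and of $e'$). This is sidedness bookkeeping of the same kind as in Claim~\ref{prep2-claim} and Lemma~\ref{tria-lem}, and --- as for the two preceding facts --- it is cleanest to settle with a figure rather than a verbal enumeration of subcases.
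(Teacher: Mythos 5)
Your proof is correct and is essentially the paper's: the same four\-/cycle of visibility segments $l_2-v-l_1'-v_S-l_2$, justified by the same two sidedness facts (the standing assumption puts $v$ on the $l_2$\-/side of $L(e,v_E)$, and the previously proven statement puts it on the $v_S$\-/side of $L(l_1',v_E)$), yields the contradiction. The only difference is that the paper encircles $\overline{l_2v_E}$ rather than $\overline{l_1'v_E}$, which makes the second half of your argument unnecessary: the open segment from $v_E$ to $l_2$ is disjoint from all four edges of the cycle no matter where $v_S$ lies relative to $L(l_1',l_2)$ (whereas the segment to $l_1'$ can be cut by the edge $\overline{v_Sl_2}$), so the winding\-/number argument goes through without any case distinction on the position of $v_S$ and without the detour chain through $e'$ and $c$.
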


\begin{proof}
Assume $v$ and $v_S$ lay on opposite sides of  $L(e,v_E)$. We already showed that $v$ must lie on the same side of $L(l_1',v_E)$ as $v_S$ does. So $ \overline{l_2v_E}$ would be encircled by $l_2-v-l_1'-v_S-l_2$, see Figure \ref{CaseBWedgeLemma-fig} (iii).
\end{proof}

\begin{lemma}\label{Case1aWedgeSummary} All view points $v$ that see $\{l_1',l_2,e\}$
lie in the wedge $W$ given by the two rays originating in $e$ and going through $l_1'$ and $l_2$, respectively
\end{lemma}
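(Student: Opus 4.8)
The plan is to assemble Lemma~\ref{Case1aWedgeSummary} from the three side-conditions already established in the lead-up, and then to turn the resulting wedge into a single half-plane as required by Lemma~\ref{existenceHalfplanes-lem}. Concretely, let $v$ be any view point that sees $l_1'$, $l_2$ and $e$. From the paragraph preceding Claim~\ref{l_1l_2-claim} we know $v$ lies on the same side of $L(l_1',v_E)$ as $v_S$, and on the same side of $L(e,l_1')$ as $v_S$; from Claim~\ref{l_1l_2-claim} we know $v$ lies on the same side of $L(e,v_E)$ as $v_S$. The first ingredient I would supply is the observation that, since $v_E$, $l_1'$, $l_2$, $e$ are the corners of a convex quadrilateral (all three of $l_1',l_2,e$ being back points, as shown in the proof of the intersection claim), the rays from $e$ through $l_1'$ and from $e$ through $l_2$ bound a wedge $W$ at apex $e$, and the side of $L(e,v_E)$ containing $v_S$ together with the side of $L(e,l_1')$ containing $v_S$ pin down exactly this wedge $W$ — more precisely, $W = H_{v_S}(e,l_1') \cap H_{v_S}(e, l_2')$-type intersection, where I would verify that $L(e,v_S)$ lies between $L(e,l_1')$ and $L(e,l_2)$ (this is the content of the convex-quadrilateral remark, since $v_S$ lies beyond $v_E$ along a direction interior to the quadrilateral). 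Hence the two half-plane conditions through $e$ already force $v \in W$, and the third condition (through $L(l_1',v_E)$) is then redundant for this conclusion but was needed to derive Claim~\ref{l_1l_2-claim} in the first place.

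The one subtlety I would be careful about is that the two rays defining $W$ emanate from $e$, whereas the side-conditions are phrased relative to lines $L(e,l_1')$ and $L(e,v_E)$; I must check that $L(e,v_E)$ and $L(e,l_2)$ coincide as the relevant bounding ray, or that the ray $L(e,v_E)$ from $e$ points into the same halfplane as the ray through $l_2$. This follows because $c = \overline{ev_E} \cap \overline{l_2 v_S}$ exists (the Claim just proved), so $v_E$, $c$, and hence the direction from $e$ toward $v_E$, lie on the segment toward the region delimited by $l_2$; thus "same side of $L(e,v_E)$ as $v_S$" is the same constraint as "same side of $L(e,l_2)$ as $v_S$" for points sufficiently far in the relevant cone, and in particular for all view points seeing $l_2$ (which, by Claim~\ref{prep2-claim}-style encirclement arguments, lie in the appropriate cone). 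I would spell this out as a one-line remark rather than a separate claim.

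Putting this together, the proof of Lemma~\ref{Case1aWedgeSummary} reads: a view point $v$ seeing $l_1',l_2,e$ lies (i) on the $v_S$-side of $L(e,l_1')$ and (ii) on the $v_S$-side of $L(e,v_E) = L(e,l_2)$; these two half-planes intersect in precisely the wedge $W$ with apex $e$ spanned by the rays $\overline{el_1'}$ and $\overline{el_2}$, because $L(e,v_S)$ runs strictly between those two rays (convex-quadrilateral position of $v_E,l_1',l_2,e$). Hence $v \in W$, as claimed.

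The main obstacle, as I see it, is not any single inequality but bookkeeping: making sure the "$v_S$-side" of each of the two lines through $e$ is unambiguous (i.e. that neither $l_1'$ nor $v_E$ lies on $L(e,v_S)$, which is exactly the non-degeneracy guaranteed by general position and by the convex-quadrilateral remark) and that the ray $\overline{ev_E}$, not merely the line $L(e,v_E)$, is the correct boundary of $W$ on that side — a subtlety that the existence of the crossing point $c$ resolves. Everything else is a direct intersection of half-planes already in hand, so I expect the written proof to be three or four sentences long.
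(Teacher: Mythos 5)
Your argument is essentially the paper's own proof: the paper likewise combines the two side-conditions (same side of $L(e,l_1')$ and of $L(e,v_E)$ as $v_S$, the latter being Claim~\ref{l_1l_2-claim}) to place $v$ in the wedge $W_e$ at apex $e$ bounded by the rays through $l_1'$ and $v_E$, and then observes $W_e\subseteq W$. One small correction: the intersection of those two half-planes is $W_e$, not ``precisely the wedge $W$'' (that would need $v_E$, $e$, $l_2$ collinear); only the containment $W_e\subseteq W$ is true and needed, and your crossing-point-$c$ argument is exactly what justifies it.
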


\begin{proof}
We just showed that all such view points $v$ lie in the wedge $W_e$ 
given by the two rays originating in $e$ and going through $v_E$ and $l_1'$, respectively. 
As $W_e$ is a subset of $W$, the lemma follows.
\end{proof}

\begin{figure}[hbtp]%
\begin{minipage}[t]{0.30\textwidth}
\begin{center}%
\includegraphics[width=\textwidth,keepaspectratio]{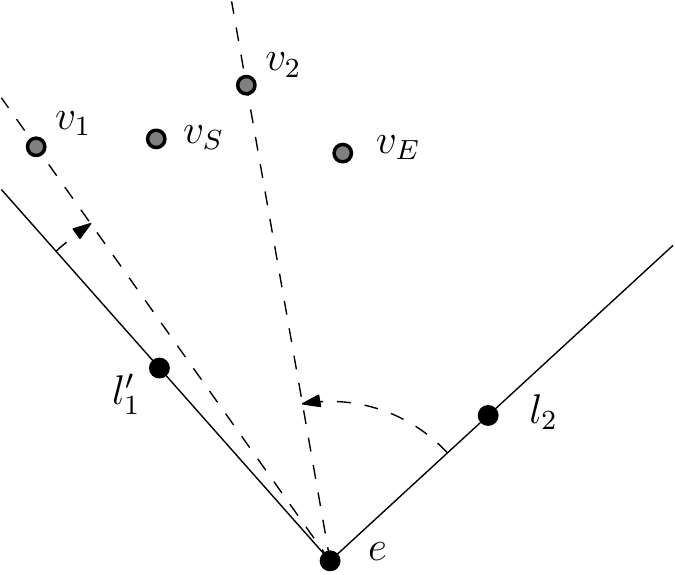}%

(i)
\end{center}%
\end{minipage}
\hfill
\begin{minipage}[t]{0.20\textwidth}
\begin{center}%
\includegraphics[width=\textwidth,keepaspectratio]{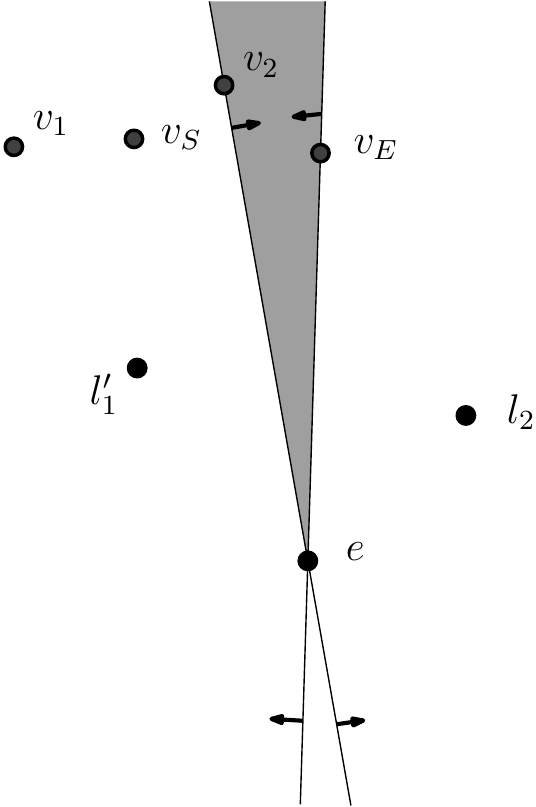}%

(ii)
\end{center}%
\end{minipage}
\hfill
\begin{minipage}[t]{0.28\textwidth}
\begin{center}%
\includegraphics[width=\textwidth,keepaspectratio]{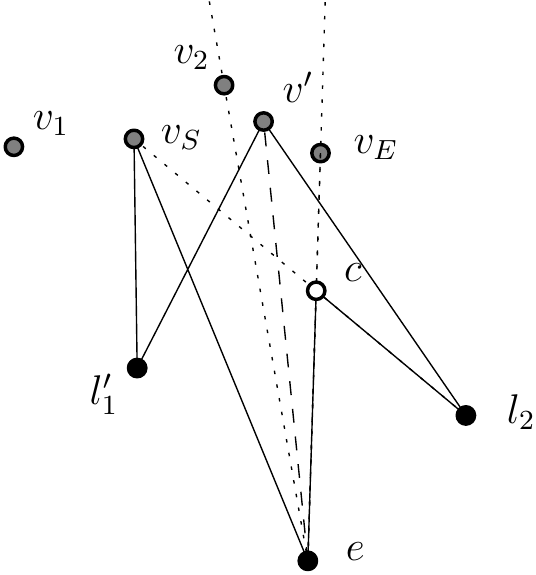}%

(iii)
\end{center}%
\end{minipage}
\caption{(i) We rotate the rays through $l_1'$ and $l_2$ until they encounter $v_1$ and $v_2$.  
(ii) The area between $L(e,v_E)$ and $L(e,v_2)$.
(iii) No point $v'$ that lies in this area sees $l_1'$ and $l_2$ but not $e$.}
\label{CaseBHalfplaneLemma-fig}
\end{figure}

Let us now rotate the ray with origin $e$ through $l_2$  over the wedge $W$, towards $l_1'$. Let us denote the first view point we encounter that sees $l_1', l_2$ and $e$  by $v_2$. Let us then rotate the ray with origin $e$ through $l_1'$  over the wedge $W$, towards $l_2$. Let us denote the first view point we encounter this time that sees $l_1', l_2$ and $e$  by $v_1$, 
see Figure \ref{CaseBHalfplaneLemma-fig} (i).
 
We now obtain the two following facts.

\begin{claim} \label{v_1v_2-claim}
All view points that see $l_1', l_2$ and $e$ lie in the wedge originating in $e$ and going through $v_1$ and $v_2$.
\end{claim}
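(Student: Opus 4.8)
The plan is to establish Claim~\ref{v_1v_2-claim} as the wedge-confinement counterpart of Lemma~\ref{Case1aWedgeSummary}: where the latter uses the \emph{static} witness $v_E$ (together with $v_S$, $e'$) to confine the view points of $\{l_1',l_2,e\}$ to the wedge $W$, we now sharpen the confinement to the \emph{smaller} wedge at $e$ spanned by the extremal witnesses $v_1$ and $v_2$. The key observation is that $v_1$ and $v_2$ are, by construction, the first view points of $\{l_1',l_2,e\}$ encountered when the bounding rays of $W$ are rotated inward; so the wedge they span is exactly the angular hull at $e$ of all such view points, provided we can show no view point of $\{l_1',l_2,e\}$ lies angularly \emph{beyond} $v_1$ (on the $l_1'$ side) or beyond $v_2$ (on the $l_2$ side). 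Since $v_1,v_2 \in W$ already (by Lemma~\ref{Case1aWedgeSummary}), this is a statement about the two thin sub-wedges of $W$ adjacent to the rays $L(e,l_1')$ and $L(e,l_2)$.

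First I would argue the $l_1'$ side (the $l_2$ side is symmetric). Suppose, for contradiction, that some view point $v'$ sees $l_1', l_2, e$ and lies in the sub-wedge bounded by the ray $e \to l_1'$ and the ray $e \to v_1$, strictly on the $l_1'$ side of $v_1$. I would then exhibit a visibility chain encircling one of the ``forbidden'' segments. The natural candidate is $\overline{l_1'v_E}$ or $\overline{l_2 v_E}$ — the segments that are \emph{not} polygon chords because $v_E$ misses the odd points $l_1'$ and $l_2$. Concretely: $v_1$ sees $e$ and $l_2$; $v'$ sees $e$ and $l_2$; both $v_1$ and $v'$ lie in $W$ hence (by Claim~\ref{l_1l_2-claim} and the facts assembled in Figure~\ref{CaseBWedgeLemma-fig}) on the $v_S$-side of $L(e,v_E)$ and of $L(l_1',v_E)$. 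Because $v'$ is angularly past $v_1$ toward $l_1'$, the four points $e, v', v_1$, together with $l_2$ (which $e$, $v'$, $v_1$ all see) and with $v_S$ and $v_E$, can be threaded into a closed chain of visibility segments that traps $\overline{l_2 v_E}$ (or, after swapping the roles via $e'$, traps $\overline{l_1' v_E}$), contradicting the fact that $v_E$ does not see $l_2$ (resp.\ $l_1'$). The precise chain is the one suggested by Figure~\ref{CaseBHalfplaneLemma-fig}~(i): something of the shape $v_E - e - v' - l_2 - v_S - v_E$ or a five-link variant using $e'$, depending on whether $e' = e_1$ or $e' \in R$.

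The key steps in order are therefore: (1) recall from Lemma~\ref{Case1aWedgeSummary} that every view point of $\{l_1',l_2,e\}$ lies in $W$; (2) recall the three side-conditions from Claim~\ref{l_1l_2-claim} and the two preceding ``same side'' facts (same side of $L(l_1',v_E)$, of $L(e,l_1')$, of $L(e,v_E)$ as $v_S$); (3) observe that $v_1$ and $v_2$ exist, being the first hits of the inward-rotating rays, and lie in $W$; (4) for a hypothetical $v'$ strictly beyond $v_1$ on the $l_1'$ side, build the encircling chain around $\overline{l_2 v_E}$ (using $v_S$, and $e'$ if an intermediate vertex is needed to route around $ch(S)$) and derive the contradiction; (5) do the symmetric argument beyond $v_2$ on the $l_2$ side; (6) conclude that the wedge at $e$ through $v_1$ and $v_2$ contains \emph{all} view points of $\{l_1',l_2,e\}$.

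The main obstacle I anticipate is step (4): correctly choosing \emph{which} forbidden segment to encircle and \emph{which} auxiliary vertices ($v_S$ alone, or $v_S$ together with $e'$, and possibly the crossing point $c$ of $\overline{e v_E}$ with $\overline{l_2 v_S}$) to insert so that the chain is genuinely a Jordan curve of polygon chords separating the endpoints of that segment. This is delicate because $e'$ may either coincide with $e_1$ (a back point between $l_1$ and $r_1$) or be a point of $R$, and the geometry of the chain differs slightly in the two cases; one must check that in both cases all links of the proposed chain are actual visibility segments (i.e.\ that the relevant pairs are mutually visible — which holds because $v'$ is assumed to see $l_1',l_2,e$, and $v_S$ sees everything, and $v_E$ sees all even-indexed points including $e$ and $e'$). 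Once the chain is correctly assembled, the contradiction with ``$v_E$ does not see $l_2$'' (or ``$v_E$ does not see $l_1'$'') is immediate by the encircling principle used throughout the paper.
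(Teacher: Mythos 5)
You have made this claim much harder than it is, and the step you yourself flag as the main obstacle --- step (4), the construction of an encircling chain around $\overline{l_2 v_E}$ or $\overline{l_1' v_E}$ --- is both unnecessary and, as written, a genuine gap: you never commit to a specific chain, and you concede that you do not know which auxiliary vertices ($v_S$, $e'$, $c$) to thread into it. No such argument is needed. By Lemma~\ref{Case1aWedgeSummary}, every view point that sees $l_1'$, $l_2$ and $e$ lies in the wedge $W$ at $e$ bounded by the rays through $l_1'$ and $l_2$. The points $v_1$ and $v_2$ are \emph{defined} as the first such view points encountered when the two bounding rays of $W$ are rotated inward; since there are only finitely many view points (one per subset of $S$), these are well defined, and by the very meaning of ``first encountered'' the two marginal sub-wedges of $W$ --- between the ray from $e$ through $l_1'$ and the ray through $v_1$, and between the ray through $l_2$ and the ray through $v_2$ --- contain no view point seeing all three of $l_1'$, $l_2$, $e$. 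Hence every such view point lies in the wedge at $e$ spanned by $v_1$ and $v_2$. This is exactly the paper's two-sentence proof.

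Your steps (1), (3) and (6) already contain this argument in full; the proviso you attach in step (3), ``provided we can show no view point lies angularly beyond $v_1$ or beyond $v_2$,'' is satisfied automatically by the definition of $v_1$ and $v_2$, so steps (2), (4) and (5) should simply be deleted. The encircling machinery you are reaching for does belong to this part of Case~2B, but elsewhere: it is what establishes Lemma~\ref{Case1aWedgeSummary} itself, and it reappears in Lemma~\ref{Case1aHalfplaneSummary} and in the final step of the proof of Lemma~\ref{existenceHalfplanes-lem}, where one must control view points that see $l_1'$ and $l_2$ but \emph{not} $e$ --- those are precisely the points that the rotation defining $v_1$ and $v_2$ says nothing about, and for them a chain such as $e-v_1-l_1'-v-l_2-v_2-e$ is genuinely required.
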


\begin{proof}
By Lemma \ref{Case1aWedgeSummary} we know that all such points lie between $l_1'$ and $l_2$. By construction of $v_1$ and $v_2$ there is no such point between $l_1$ and $v_1$ or between $l_2$ and $v_2$.
\end{proof}

\begin{lemma}\label{Case1aHalfplaneSummary}There is no view point on the side of $L(e,v_2)$ opposite to $v_S$ that sees $l_1'$ and $l_2$ but not $e$.
\end{lemma}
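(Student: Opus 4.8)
The plan is to argue by contradiction. Suppose there is a view point $v'$ that lies on the side of $L(e,v_2)$ opposite to $v_S$, sees both $l_1'$ and $l_2$, but does not see $e$.

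First I would pin down where $v'$ can be. By Claim~\ref{l_1l_2-claim}, $v'$ lies on the same side of $L(e,v_E)$ as $v_S$; combined with the hypothesis, this confines $v'$ to the angular sector at $e$ strictly between the rays $L(e,v_2)$ and $L(e,v_E)$. Since $v_1,v_2$ are, by construction, the extreme view points (in the rotational sense) seeing all of $\{l_1',l_2,e\}$, Lemma~\ref{Case1aWedgeSummary} and Claim~\ref{v_1v_2-claim} confine $v_S$ --- and every view point that sees all three points --- to the sub-wedge bounded by $L(e,v_1)$ and $L(e,v_2)$. Thus $v'$ sits in a region from which $l_1'$ and $l_2$ are still visible but which is disjoint from the locus of view points that can additionally see $e$; in particular, by the general-position convention of Section~\ref{technique-sec}, the segment $\overline{ev'}$ is properly crossed by the boundary of $P$.

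The contradiction will come from the encircling technique used throughout the paper: exhibit a closed chain of visibility segments, drawn from the mutually visible pairs among $e,v',l_1',l_2,v_1,v_2,v_S$ (and, if needed, the waypoint $c$ on $\overline{ev_E}\cap\overline{l_2v_S}$), that traps $\overline{ev'}$ in its bounded interior. Because that interior lies inside the simply connected polygon $P$, this forces $\overline{ev'}\subset P$, i.e. $v'$ sees $e$, contradicting the choice of $v'$. The natural candidate is the hexagonal chain $e-v_1-l_1'-v'-l_2-v_2-e$: all six segments are visibility segments (using that $v_1,v_2$ see each of $e,l_1',l_2$ and that $v'$ sees $l_1',l_2$), and the claim to verify is that the diagonal $\overline{ev'}$ of this hexagon lies in its interior.

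The main obstacle is precisely that verification: showing the chain is simple and that $\overline{ev'}$ falls inside it, which reduces to a short list of ``which point lies on which side of which line'' statements. These depend on auxiliary information about where $v_S$ sits relative to $\mathrm{ch}(S)$ and relative to the lines $L(e,l_1')$, $L(e,l_2)$, $L(l_1',l_2)$, so I expect a small case distinction here --- most of which I would hope to absorb by reusing Lemma~\ref{tria-lem} with $b=e$ and $\{a,c\}=\{l_1',l_2\}$, with $v_2$ (resp.\ $v_1$) serving as the witness $w$ of its hypothesis. Should the hexagon fail to be simple in some sub-case, the fallback is to replace the offending edge $\overline{l_1'v'}$ by the detour $\overline{l_1'v_S}\cup\overline{v_Sl_2}\cup\overline{l_2v'}$ (or to route through $c$), which only enlarges the enclosed region and still traps $\overline{ev'}$.
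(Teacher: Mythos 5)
There is a genuine gap, and it sits exactly at the step you flag as ``the main obstacle.'' Your candidate chain $e-v_1-l_1'-v'-l_2-v_2-e$ cannot encircle $\overline{ev'}$. The two edges of that chain incident to $e$ point toward $v_1$ and $v_2$, so near $e$ the chain only blocks directions inside the cone at $e$ spanned by the rays to $v_1$ and $v_2$. But the whole point of the lemma is that $v'$ lies on the side of $L(e,v_2)$ opposite to $v_S$, i.e.\ (after combining with Claim~\ref{l_1l_2-claim}) in the wedge at $e$ between the rays through $v_2$ and $v_E$ --- strictly \emph{outside} the cone $[v_1,v_2]$. Hence $\overline{ev'}$ leaves $e$ in a direction not covered by the chain and escapes; in concrete coordinates one checks that the winding number of your hexagon around points of $\overline{ev'}$ is $0$. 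This is consistent with Section~\ref{technique-sec}: the configuration of Figure~\ref{wedge-fig}~(iv) certifies visibility only for view points \emph{inside} the sub-wedge between the extreme witnesses, and here $v'$ is by hypothesis outside it. Your fallbacks do not repair this: detouring $\overline{l_1'v'}$ through $v_S$ (or $c$) modifies the chain far from $e$ while the leak is at $e$, and Lemma~\ref{tria-lem} with $b=e$ can only place $v$ relative to a line through two points of $S$ such as $L(l_2,e)$, never relative to $L(e,v_2)$.

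The missing ingredient is the visibility segment $\overline{ev_E}$ (recall $v_E$ sees the even-indexed point $e$). The paper first shows, using the extremality of $v_2$, that $v'$ lies in the wedge at $e$ between the rays through $v_E$ and $v_2$, which is contained in the wedge between the rays through $v_E$ and $v_S$. It then closes the chain $e-c-l_2-v'-l_1'-v_S-e$, where $c=\overline{ev_E}\cap\overline{l_2v_S}$; every edge is (a sub-segment of) a visibility segment, and now the two edges at $e$ point toward $v_E$ and $v_S$, so their cone \emph{does} contain the direction of $v'$ and the chain traps $\overline{ev'}$. Your correct localization of $v'$ is the right first half of the argument, but without $\overline{ev_E}$ (via $c$) and $v_S$ anchoring the chain at $e$, no encircling built only from $v_1,v_2,l_1',l_2$ can work.
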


\begin{proof}
By Claim \ref{l_1l_2-claim}, all view points that see $l_1'$ and $l_2$ lie on the same side of $L(e,v_E)$ as $v_S$ does.  
As $v_2$ also lies on this side of the line and moreover inside the wedge between the rays from $e$ through $v_E$ and $v_S$, respectively, it follows, that a point that sees $l_1'$ and $l_2$ and that lies on the side of $L(e,v_2)$ opposite to $v_S$ must lie in the wedge given by the rays from $e$ through $v_E$ and $v_2$, which in turn is contained in the wedge between the rays from $e$ through $v_E$ and $v_S$, see Figure \ref{CaseBHalfplaneLemma-fig} (ii). 

Now assume there was a view point $v'$ in this wedge, that saw $l_1'$ and $l_2$ but not $e$. If we take $c$ to be the intersection of $\overline{ev_E}$ and $\overline{l_2v_S}$, then the segment $\overline{ev'}$ would be encircled by $e-c-l_2-v'-l_1'-v_S-e$, 
see Figure \ref{CaseBHalfplaneLemma-fig} (iii).

\end{proof}

Now we are able to complete the proof of Lemma \ref{existenceHalfplanes-lem}.

We define $H_e$ to be the closed halfplane to the side of the line through $e$ and $v_1$ in which $v_2$ lies. By Claim \ref{v_1v_2-claim} all view points that see $l_1',l_2$ and $e$ lie in $H_e$. Assume now there was a view point $v$ in $H_e$ 
that sees $l_1'$ and $l_2$ but not $e$. 
By Lemma \ref{Case1aHalfplaneSummary} and the assumption that $v$ lies in $H_e$, it follows that then $v$ must lie in the wedge with origin $e$ and rays through $v_1$ and $v_2$. 

This again leads to a contradiction because the segement $\overline{ev}$ then would be encircled by $e-v_1-l_1'-v-l_2-v_2-e$.
So a view point $v$ that sees $l_1'$ and $l_2$ sees $e$ if and only if $v\in H_e$.
\end{proof}
Now all proofs are complete.

%%%%%%%%%%%%%%%%%%%%%%%%%%%%%%%%%%%
\section{Conclusions}      \label{conclu-sec}                                           %%
%%%%%%%%%%%%%%%%%%%%%%%%%%%%%%%%%%%

In his classical proof in~\cite{m-ldg-02}, Matou\v sek used
a particular type of enclosing cycle of length~4 to show that the VC-dimension of visibility
regions in simple polygons is finite (obtaining a bound in the thousands). Valtr~\cite{v-ggwps-98}
was able to prove an upper bound of~23 by combining enclosing chain arguments with a
cell decomposition technique. Our proof yields an upper bound of~14, using enclosing
cycles of length~6. The natural question is if better bounds can be obtained by considering
even more complex enclosing configurations, and if there is a systematic way to approach
this problem. One would expect that the true value of the VC-dimension is
closer to 6 than to 14.

\section{Acknowledgement}
The second author would like to thank Boris Aronov and David Kirkpatrick for interesting discussions. We would also like to thank the anonymous referees who carefully read the SoCG '11 version of this paper.

\bigskip

\bibliography{VcProc}{}

\begin{thebibliography}{10}

\bibitem{bes-ssena-10}
Boris Aronov, Esther Ezra, and Micha Sharir.
\newblock Small-size $\varepsilon$-nets for axis-parallel rectangles and boxes.
\newblock {\em SIAM J. Comput.}, 39(7):3248--3282, 2010.

\bibitem{ags-vip-00}
Tetsuo Asano, Subis~K. Ghosh, and Thomas~C. Shermer.
\newblock Visibility in the plane.
\newblock In J{\"o}rg-R{\"u}diger Sack and Jorge Urrutia, editors, {\em
  Handbook of Computational Geometry}, pages 829--876. Elsevier Science
  Publishers B.V. North-Holland, Amsterdam, 2000.

\bibitem{g-vap-07}
Subir Ghosh.
\newblock {\em Visibility Algorithms in the Plane}.
\newblock Cambridge University Press, New York, NY, USA, 2007.

\bibitem{gk-nrvsp-09}
Alexander Gilbers and Rolf Klein.
\newblock New results on visibility in simple polygons.
\newblock In {\em Proceedings of the 11th International Symposium on Algorithms
  and Data Structures}, WADS '09, pages 327--338, 2009.

\bibitem{gk-nubvcvr-11}
Alexander Gilbers and Rolf Klein.
\newblock A new upper bound for the vc-dimension of visibility regions.
\newblock In {\em Proceedings of the 27th Symposium on Computational Geometry},
  SoCG '11, pages 380--386, 2011.

\bibitem{ikdv-vcdev-04}
Volkan Isler, Sampath Kannan, Kostas Daniilidis, and Pavel Valtr.
\newblock Vc-dimension of exterior visibility.
\newblock {\em IEEE Trans. Pattern Anal. Mach. Intell.}, 26(5):667--671, 2004.

\bibitem{km-ggwep-97}
Gil Kalai and Jiri Matousek.
\newblock Guarding galleries where every point sees a large area.
\newblock {\em Israel Journal of Mathematics}, 101:125--140, 1997.

\bibitem{k-vcdvt-08}
James King.
\newblock Vc-dimension of visibility on terrains.
\newblock In {\em Proceedings of the 20th Canadian Conference on Computational
  Geometry}, CCCG'08, pages 27--30, 2008.

\bibitem{kk-iagsgp-10}
James King and David~G. Kirkpatrick.
\newblock Improved approximation for guarding simple galleries from the
  perimeter.
\newblock {\em Discrete {\&} Computational Geometry}, 46(2):252--269, 2011.

\bibitem{k-ggnn-00}
David Kirkpatrick.
\newblock Guarding galleries with no nooks.
\newblock In {\em Proceedings of the 12th Canadian Conference on Computational
  Geometry}, CCCG'00, pages 43--46, 2000.

\bibitem{kpw-atben-92}
J{\'a}nos Koml{\'o}s, J{\'a}nos Pach, and Gerhard~J. Woeginger.
\newblock Almost tight bounds for epsilon-nets.
\newblock {\em Discrete {\&} Computational Geometry}, 7:163--173, 1992.

\bibitem{m-ldg-02}
Jiri Matousek.
\newblock {\em Lectures on Discrete Geometry}.
\newblock Springer-Verlag New York, Inc., Secaucus, NJ, USA, 2002.

\bibitem{r-agta-87}
Joseph O'Rourke.
\newblock {\em Art gallery theorems and algorithms}.
\newblock Oxford University Press, Inc., New York, NY, USA, 1987.

\bibitem{u-agip-00}
Jorge Urrutia.
\newblock Art gallery and illumination problems.
\newblock In {\em Handbook of Computational Geometry}, pages 973--1027.
  North-Holland, 2000.

\bibitem{v-ggwps-98}
Pavel Valtr.
\newblock Guarding galleries where no point sees a small area.
\newblock {\em Israel Journal of Mathematics}, 104:1--16, 1998.

\end{thebibliography}
\bibliographystyle{plain}

\end{document}